\date{Department of Computer Science\\ Purdue University} 					
\DeclarePairedDelimiter\set{\{}{\}}
\theoremstyle{plain}
\newtheorem{theorem}{Theorem}[section]
\newtheorem{proposition}[theorem]{\bf Proposition}
\theoremstyle{definition}
\newtheorem{definition}{Definition}[section]
\newtheorem{fact}{Fact}
\newtheorem{lemma}[theorem]{Lemma}
\newtheorem{remark}{Remark}[section]
\newtheorem*{theorem*}{Theorem}
\newtheorem*{proposition*}{Proposition}
\newtheorem*{lemma*}{Lemma}
\newtheorem*{note*}{Note}
\newtheorem*{remark*}{Remark}
\newtheorem*{notation*}{Notation}
 \newcommand{\propositionofref}{}
\newenvironment{propositionof}[1]
 {\renewcommand{\propositionofref}{#1}\zpropositionof}
 {\zpropositionof}
 \newcommand{\theoremofref}{}
\newenvironment{theoremof}[1]
 {\renewcommand{\theoremofref}{#1}\ztheoremof}
 {\ztheoremof}
\newtheorem{corollary}[theorem]{Corollary}
\newtheorem{claim}{Claim}[theorem]
\newcommand{\push}{\mathrm{push}}
\newcommand{\Real}{\mathbb{R}}
\newcommand{\Rn}{\Real^n}
\newcommand{\VecSp}{\mathbf{Vec}}
\newcommand{\vecsp}{\mathbf{vec}}
\newcommand*{\rom}[1]{\expandafter\@slowromancap\romannumeral #1@}
\newcommand{\Img}{\mathrm{im\,}}
\newcommand{\coim}{\mathrm{coim\,}}
\newcommand{\dl}{\mathrm{dl}}
\newcommand{\di}{d_\mathrm{{I}}}
\newcommand{\tdi}{\tilde{d}_{\mathrm{I}}}
\newcommand{\dH}{d_\mathrm{{H}}}
\newcommand{\db}{d_\mathrm{{B}}}
\newcommand{\tdb}{\tilde{d}_{\mathrm{B}}}
\newcommand{\dmatch}{d_\mathrm{{match}}}
\newcommand{\dbmm}{\tilde{d}_{\mathrm{B}}}
\newcommand{\Poset}{\mathbb{P}}
\newcommand{\nmod}{\Rn\mbox{-}\mathbf{mod}}
\newcommand{\pindexmod}{\mathbb{P}\mbox{-}\mathbf{mod}}
\newcommand{\indexmod}{\mbox{-}\mathbf{mod}}
\newcommand{\RB}{\bar{\mathbb{R}}}
\newcommand{\field}[1] {{\mathbb{#1}}}
\newcommand{\row}{{\mathrm{row}}}
\newcommand{\col}{{\mathrm{col}}}
\newcommand{\M}{M}
\newcommand{\N}{N}
\newcommand{\LL}{\mathbf{L}}
\newcommand{\p}{\mathbf{p}}
\newcommand{\q}{\mathbf{q}}
\newcommand{\rr}{\mathbf{r}}
\newcommand{\s}{\mathbf{s}}
\newcommand{\SSet}{\mathbf{S}}
\newcommand{\x}{\mathbf{x}}
\newcommand{\y}{\mathbf{y}}
\newcommand{\z}{\mathbf{z}}
\newcommand{\uu}{\mathbf{u}}
\newcommand{\vv}{\mathbf{v}}
\newcommand{\ww}{\mathbf{w}}
\newcommand{\aaa}{\mathbf{a}}
\newcommand{\CC}{\mathcal{C}}
\newcommand{\gen}{\mathrm{Gen}}
\newcommand{\rel}{\mathrm{Rel}}
\newcommand{\gr}{\mathrm{gr}}
\newcommand{\PAR}{\mathrm{Cov}}
\newcommand{\spanning}{\mathrm{Span}}
\newcommand{\diam}{\mathrm{Diam}}
\definecolor{darkred}{rgb}{1, 0.1, 0.3}
\definecolor{darkblue}{rgb}{0.1, 0.1, 1}
\DeclareMathOperator*{\argmin}{arg\,min}
\DeclareMathOperator{\sign}{sgn}
\title{Rectangular Approximation and Stability of $2$-parameter Persistence Modules}
\author{
    Tamal K. Dey \thanks{This research is supported by NSF grants CCF 1740761 and 2049010}\\
\texttt{tamaldey@purdue.edu} 
\And
Cheng Xin\\
\texttt{xinc@purdue.edu} 
}
\begin{document}
\maketitle










\begin{abstract}
    One of the main reasons for topological persistence being useful in data analysis is that it is backed up by a stability (isometry) property: persistence diagrams of $1$-parameter persistence modules are stable in the sense that the bottleneck distance between two diagrams equals the interleaving distance between their generating modules. However, in multi-parameter setting this property breaks down in general. A simple special case of persistence modules called rectangle decomposable modules is known to admit a weaker stability property.
    Using this fact, we derive a stability-like property for $2$-parameter persistence modules.
    For this, first we consider interval decomposable modules and their optimal approximations with rectangle decomposable modules with respect to the bottleneck distance. We provide a polynomial time algorithm to exactly compute this optimal approximation which, together with the polynomial-time computable bottleneck distance among interval decomposable modules, provides a lower bound on the interleaving distance.
    Next, we leverage this result to derive a polynomial-time computable distance for general multi-parameter persistence modules which enjoys similar stability-like property. This distance can be viewed as a generalization of the matching distance defined in the literature.
\end{abstract}

\keywords{Interval Decomposable, Multi-Parameter Persistence, Optimal Approximation, Bottleneck Distance, Matching Distance, Stability}





\section{Introduction} \label{sec: intro}
Stability of persistence diagrams~\cite{Chazal:2009,cohen2007stability} for $1$-parameter persistence modules provides an assurance in topological data analysis that data corrupted possibly with measurement errors and noise still can give insights into the topological features of the sampled space. This stability result hinges on a distance called the interleaving distance $\di$ between persistence modules~\cite{Chazal:2009} and another distance called the bottleneck distance $\db$ between persistence diagrams~\cite{cohen2007stability} that describe the isomorphism type of a (sufficiently tame) persistence module combinatorically. First, Chazal et al. showed that $\db\leq \di$~\cite{Chazal:2009} and later Lesnick showed that indeed $\db=\di$~\cite{lesnick2015theory}. Since $\db$ is computable by polynomial time algorithms, we can compute $\di$ in polynomial time as well. Unfortunately, $\di$ cannot be computed in polynomial time for $2$-parameter persistence modules unless P=NP as shown by Bjerkevik et al.~\cite{bjerkevik18computational_complexity,Bjerkevik2020}. Even approximation of $\di$ with a constant factor better than 3 is shown to be NP-hard. Therefore, the search is on to find out whether $\di$ can be approximated with additive factors as opposed to multiplicative ones, or whether there are interesting special cases where $\di$ can be approximated with constant multiplicative factors.

Guaranteed by Krull-Schmidt theorem~\cite{Atiyah1956}, a persistence module enjoys an essentially unique decomposition. One can define a bottleneck distance $\db$ (see Definition~\ref{def:bottleneck_distance}) based on the interleaving distance $\di$ and a partial matching between these decompositions.
It follows from the definition that $\di\leq \db$~\cite{botnan2016algebraic}. So, if one can establish that $\db\leq c\cdot \di$ for some constant $c\geq 1$ and $\db$ is polynomial time computable, we can have a useful stability property in practice. Toward this goal Bjerkevik~\cite{bjerkevik2021stability} showed that $\db\leq (2n-1)\di$ for a special class of $n$-parameter modules whose indecomposables are rectangles and hence called rectangle decomposable modules. However, no such lower bound exists for general persistence modules as has been observed in~\cite{botnan2016algebraic}. So, in presence of this negative result, one can only hope for computing bottleneck distance with some subtractive factors
which bound $\di$ from below. This is what we do in this paper.

In an earlier paper~\cite{DX18}, we have presented a polynomial time algorithm for computing the bottleneck distance for a class of modules called $2$-parameter interval decomposable persistence modules~\cite{bjerkevik2021stability}. We give a polynomial time algorithm to approximate these modules with rectangle decomposable modules optimally. This optimality is measured with respect to some distance that we define later.  Given two $2$-parameter interval decomposable modules $M$ and $N$ and their optimal approximations with rectangle decomposable modules
$\boxed{M}^*$ and $\boxed{N}^*$ respectively, we show that (Theorem~\ref{thm:db_bound1})
\begin{equation*}
    \frac{1}{3}\db(M,N)-\frac{4}{3}\left(\db(M,\boxed{M}^*)+\db(N,\boxed{N}^*)\right) \leq \di(M,N).
\end{equation*}
Since all quantities on the left can be computed in polynomial time, we can compute
a non-trivial lower bound on $\di$ in polynomial time.

 We extend the result to $2$-parameter general persistence modules though not achieving as good an approximation. For this, we partition the support of the given modules so that each module restricted to every component of the partition becomes interval decomposable. We construct a family of distances that generalize the so called matching distance. It is bounded from above by the bottleneck distance (Proposition~\ref{prop:tdb_stability}). The stability of this distance with respect to $\di$ follows from our previous result.

\section{Background}

\begin{definition}[Persistence module: categorical definition]
    Let $\Poset$ be a poset category. A $\Poset$-indexed persistence module is a functor $\M:\Poset\rightarrow \VecSp$ where $\VecSp$ is the category of vector spaces over some field $\mathbb{k}$. If $\M$ takes values in $\vecsp$, the category of finite dimensional vector spaces, we say $\M$ is pointwise finite dimensional (p.f.d).
  The $\Poset$-indexed persistence modules themselves form a category of functors, denoted as $\pindexmod$, where the natural transformations between functors constitute the morphisms. 
\end{definition}

For a subposet $\mathbb{B} \subseteq \Poset$, there is a canonical restriction functor from $\pindexmod$ to its subcategory $\mathbb{B}\indexmod$ by restriction. For $M\in\pindexmod$, denote its restriction on $\mathbb{B}$ as $M|_\mathbb{B}$.

Here we consider the poset category on $\Rn$ or some convex subset $C\subseteq\Rn$, with the standard partial order and requiring all modules to be p.f.d. We also fix $\field{k}=\mathbb{F}_2$. We call $\Rn$-indexed persistence modules as $n$-parameter persistence modules. The category of $n$-parameter modules is denoted as $\nmod$. For an $n$-parameter module $M\in \nmod $, we also use notations $M_\x:=M(\x)$ and $M({\x\rightarrow \y}):=M(\x\leq \y)$.

\begin{definition}[Shift] For any $\delta \in \mathbb{R}$, we denote  $\vec{\delta}=(\delta, \cdots, \delta)$. 
We define a shift functor $(\cdot)_{\rightarrow \delta}:\nmod \rightarrow \nmod$ where $M_{\rightarrow \delta}:=(\cdot)_{\rightarrow \delta} (M)$ is given by $M_{\rightarrow \delta}(\x)=M(\x+\vec{\delta})$ 
and $M_{\rightarrow\delta}(\x\leq \y)=M(\x+\vec{\delta} \leq \y+\vec{\delta})$. In other words, $M_{\rightarrow \delta}$ is the module $M$ shifted diagonally by $\vec\delta$.
\end{definition}

The following definition of interleaving taken from ~\cite{oudot2015persistence} adapts the original definition designed for 1-parameter modules in~\cite{chazal2016structure} to $n$-parameter modules.
\begin{definition}[Interleaving]
For two persistence modules $M$ and $N$, and $\delta \geq 0$, a \emph{$\delta$-interleaving} between $M$ and $N$ are two families of linear maps $\{\phi(\mathbf{x}): M(\mathbf{x}) \rightarrow N({\mathbf{x}+\vec{\delta}})\}_{\mathbf{x}\in \mathbb{R}^n}$ and $\{\psi(\mathbf{x}): N(\mathbf{x}) \rightarrow M({\mathbf{x}+\vec{\delta}})\}_{\mathbf{x}\in \mathbb{R}^n}$ satisfying the following two conditions:

\begin{itemize}
    \item $\forall \mathbf{x} \in \mathbb{R}^n, M({\mathbf{x} \rightarrow \mathbf{x}+2\vec{\delta}}) = \psi({\mathbf{x}+\vec\delta}) \circ \phi({\mathbf{x}})$ and
    $N({\mathbf{x} \rightarrow \mathbf{x}+2\vec{\delta}}) = \phi({\mathbf{x}+\vec\delta}) \circ \psi({\mathbf{x}}),$
    \item $\forall \mathbf{x}\leq \y \in \mathbb{R}^n, 
     \phi({\y}) \circ M(\mathbf{x} \rightarrow \y) =  N(\mathbf{x}+\vec{\delta} \rightarrow \y+\vec{\delta}) \circ \phi({\mathbf{x}}) $ and   
    $\psi({\y}) \circ N({\mathbf{x} \rightarrow \y}) =  M({\mathbf{x}+\vec{\delta} \rightarrow \y+\vec{\delta}}) \circ \psi({\mathbf{x}}).$
\end{itemize}

If such a $\delta$-interleaving exists, we say $M$ and $N$ are $\delta$-interleaved. 
\end{definition}

\begin{definition}[Interleaving distance]
    The interleaving distance between modules $M$ and $N$ is defined as  $\di(M, N)=\inf_{\delta}\{M \mbox{ and } N \mbox{ are } \delta\textit{-interleaved}\}$. 
    We say $M$ and $N$ are $\infty$-interleaved if they are not $\delta$-interleaved for any positive $\delta\in \Real$, and assign $\di(M, N)=\infty$.
\end{definition}
Following~\cite{botnan2016algebraic}, we call a module $M$ \emph{$\delta$-trivializable} if $\delta \geq \di(M,0)$.





\begin{definition}[Matching]
    A matching  $\mu: A \nrightarrow B$ between two multisets $A$ and $B$ is a partial bijection, that is, $\mu: A' \rightarrow B'$ for some $A' \subseteq A$ and $B' \subseteq B$. We say $\Img\mu = B', \coim\mu = A'$.
\end{definition}


\begin{definition}[Indecomposable]
    We say a module $M$ is \emph{indecomposable} if $M\simeq M_1\oplus M_2 \implies M_1=0$ or $M_2=0$.
\end{definition}

By the Krull-Schmidt theorem~\cite{Atiyah1956},
there exists an essentially unique (up to permutation and isomorphism) decomposition $M\simeq \bigoplus M_i$ with every $M_i$ being indecomposable. 

\begin{definition}[Bottleneck distance]\label{def:bottleneck_distance}
Let $M \cong \bigoplus_{i=1}^{m} M_{i}$ and $N\cong \bigoplus_{j=1}^{n} N_{j}$ be two persistence modules, where $M_{i}$ and  $N_{j}$ are indecomposable submodules of $M$ and $N$ respectively. Let $I=\{1,\dotsb,m\}$ and $J=\{1,\dotsb,n\}$. We say $M$ and $N$ are $\delta$-matched for $\delta \geq 0$ if there exists a matching
$\mu:I \nrightarrow J$ so that,
(i) $i\in I\setminus \coim\mu \implies M_{i} \mbox{ is } \delta$-trivializable,
(ii) $j\in   J\setminus \Img\mu \implies N_{j} \mbox{ is } \delta$-trivializable, and
(iii) $i\in            \coim\mu \implies \di( M_{i},N_{\mu(i)})\leq\delta$.

The bottleneck distance is defined as
\[
\db(M, N) = \inf\{\delta\mid M \mbox{ and } N \mbox{ are }\delta \mbox{-matched}\}.
\]
\end{definition}
Note that this definition of bottleneck distance works in general for any persistence modules as long as the decomposition and interleaving distance are well defined. Also, from the definition it is easy to observe the following fact:
\begin{fact}
$\di \leq \db$.
\end{fact}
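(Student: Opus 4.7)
The plan is to convert a bottleneck matching directly into an interleaving, handling matched and unmatched summands separately. It suffices to show that whenever $M$ and $N$ are $\delta'$-matched they are also $\delta'$-interleaved; taking the infimum over $\delta' > \db(M,N)$ then yields $\di(M,N) \leq \db(M,N)$. Fix such a $\delta'$ and a witnessing matching $\mu : I \nrightarrow J$ between the Krull--Schmidt decompositions $M \simeq \bigoplus_i M_i$ and $N \simeq \bigoplus_j N_j$.

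For each matched index $i \in \coim\mu$, condition (iii) supplies a $\delta'$-interleaving $(\phi_i, \psi_i)$ between $M_i$ and $N_{\mu(i)}$. For each unmatched $i \in I \setminus \coim\mu$, condition (i) says $M_i$ is $\delta'$-trivializable, i.e., $\delta'$-interleaved with the zero module; the interleaving identity then forces the internal structure map $M_i(\x \to \x + \vec{2\delta'})$ to vanish for every $\x$, since it factors through $0$. Set $\phi_i = 0$, with trivial target. Make the symmetric choice $\psi_j = 0$ for unmatched $j \in J \setminus \Img\mu$, noting the analogous vanishing of $N_j(\x \to \x + \vec{2\delta'})$.

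Now assemble the global morphisms $\phi : M \to N_{\to \delta'}$ and $\psi : N \to M_{\to \delta'}$ as the direct sum of the block-level $\phi_i, \psi_i$ along the matching, with zero entries off-diagonal. Both interleaving axioms split summand-wise: on matched pairs they hold by construction, and on an unmatched summand $M_i$ the composition $\psi(\x + \vec{\delta'}) \circ \phi(\x)$ vanishes, matching $M_i(\x \to \x + \vec{2\delta'}) = 0$ from trivializability; the naturality squares on unmatched blocks likewise reduce to $0 = 0$. Symmetric reasoning handles the unmatched summands of $N$.

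The only genuine subtlety is bookkeeping: since the modules are p.f.d. but may have infinitely many indecomposable summands, one must check that the block-diagonal assembly of morphisms is well defined. This is immediate because persistence morphisms are specified pointwise and each $M_{\mathbf x}$, $N_{\mathbf x}$ is finite-dimensional, so at every $\x$ only finitely many blocks contribute nonzero entries and the off-diagonal zeros introduce no convergence issue. Passing $\delta' \searrow \db(M,N)$ via the infimum in the definition of $\di$ then completes the argument.
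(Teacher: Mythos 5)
The paper does not give a proof of this fact; it simply asserts that it follows from the definition and defers to the literature. Your argument---assemble the block-diagonal interleaving maps along the matching, with zero on unmatched summands, using that trivializability forces the $2\delta'$-shift map on an unmatched block to vanish---is the standard proof, and it is correct.

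One small technical point worth tightening. By the paper's definitions, $\delta'$-trivializable means $\delta' \ge \di(M_i,0)$, and condition (iii) reads $\di(M_i,N_{\mu(i)}) \le \delta'$; since the infimum in $\di$ need not be attained (e.g.\ a closed interval module of length $2\delta'$ satisfies $\di(M_i,0)=\delta'$ but admits no $\delta'$-interleaving with $0$), a $\delta'$-matching does not literally produce a $\delta'$-interleaving. The fix is to pick any $\eta>0$, observe that each matched pair and each unmatched summand is $(\delta'+\eta)$-interleaved with its partner or with $0$ respectively, assemble a $(\delta'+\eta)$-interleaving of $M$ and $N$ exactly as you describe, and then send $\eta\searrow 0$ together with $\delta'\searrow \db(M,N)$. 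Your final sentence already anticipates this limiting step; it just needs to absorb both passages to the infimum rather than only the outer one.
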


\subsection{Interval decomposable modules}
\begin{sloppypar}
 Persistence modules whose indecomposables are interval modules (Definition \ref{interval-def}) are called {\em interval decomposable modules}, see for example \cite{botnan2016algebraic}. To account for the boundaries of free modules, we enrich the poset $\Rn$ by adding points at $\pm\infty$ and consider the poset $\bar{\Real}^n=\bar{\Real}\times\ldots\times\bar{\Real}$ where $\bar{\Real}=\Real\cup\set{\pm\infty}$ with the additional rules $a\pm\infty=\pm\infty$, $\infty-\infty=0$.
\end{sloppypar}

\begin{definition}
An interval is a subset $\emptyset \neq I \subset \bar{\Real}^n$ that satisfies the following:
\begin{enumerate}
\item If $\p,\q \in I$ and $\p \leq \rr \leq \q$, then $\rr \in I$;
\item If $\p,\q \in I$, then there exists a finite sequence ($\p_1, \p_2, ... , \p_{m}) \in I$ such that $\p\leq \p_1 \geq \p_2 \leq \p_3 \geq ... \geq \p_{m} \leq q$.
\end{enumerate}


\end{definition}

Let $\bar{I}$ denote the closure of an interval $I$ in the standard topology of $\RB^n$. 
The lower and upper boundaries of $I$ are defined as
\begin{eqnarray*}
L(I)&=&\set{\x=(x_1,\cdots, x_n)\in \bar{I}\mid \forall \y=(y_1,\cdots, y_n) \mbox{ with } y_i< x_i \; \forall i \implies \y\notin I}\\
U(I)&=&\set{\x=(x_1,\cdots, x_n)\in \bar{I}\mid \forall \y=(y_1,\cdots, y_n) \mbox{ with } y_i> x_i \; \forall i \implies \y\notin I}.
\end{eqnarray*}

Following Section 6.1 in \cite{miller2017data}, we define
the boundary $B(I)$ of $I$ as $B(I)=L(I)\cup U(I)$. 
The vertex set $V(I)$ consists of all corner points in $B(I)$.
For example, 
$\RB^n$ is an interval with boundary $B(\RB^n)$ that consists of all the points with at least one coordinate $\infty$. 
The vertex set $V(\RB^n)$ consists of $2^n$ corner points of the infinitely large 
cube $\RB^n$ with coordinates $(\pm\infty,\cdots, \pm\infty)$.

\begin{definition}[Interval module]
\label{interval-def}
An $n$-parameter \emph{ interval persistence module}, or \emph{interval module} in short, is a persistence module $M$ that satisfies the following condition: for some interval $I_M\subseteq \RB^n$, called the interval of $M$, 
\begin{equation*}
M(\x) =
\begin{cases}

\mathbb{k} & \mbox{if $\x \in I_M$}\\
0 & otherwise
\end{cases}
\qquad 
M({\x \rightarrow \y}) =
\begin{cases}
 \mathbb{1} & \mbox{if $\x,\y \in I_M $}\\
0 & otherwise.
\end{cases}
\end{equation*}

\end{definition}


It is known that an interval module is indecomposable \cite{lesnick2015theory}.
%

\begin{definition}[Interval decomposable module]
An $n$-parameter {\em interval decomposable module} is a persistence module that can be decomposed as a direct sum of interval modules.
\end{definition}

\begin{definition}[Rectangle]
For some $\uu \leq \vv\in \RB^n$, we say the set $R=\{\ww\mid\uu\leq\ww\leq\vv\}\subseteq \RB^n$ is a rectangle in $\RB^n$, denoted as $R=[\uu, \vv]$.
\end{definition}

\begin{definition}[Rectangle decomposable module]
A \emph{rectangle module} is an interval module with underlying interval being a rectangle.
An $n$-parameter {\em rectangle decomposable module} is an interval decomposable module with all indecomposable components being rectangle modules.
\end{definition}


Bjerkevik~\cite{bjerkevik2021stability} proves the following algebraic stability property.

\begin{theorem}[Bjerkevik~\cite{bjerkevik2021stability}]\label{thm:rectangle_stability}
For two p.f.d. rectangle decomposable $\RB^n$-modules $M, N$, $\db(M, N)\leq (2n-1)\di(M, N)$. 
\end{theorem}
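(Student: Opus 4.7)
The plan is to use Hall's marriage theorem to construct the desired matching from a given $\delta$-interleaving. Suppose $M\simeq \bigoplus_{i\in I} R_{P_i}$ and $N\simeq \bigoplus_{j\in J} R_{Q_j}$ where each $P_i, Q_j$ is a rectangle. Let $I_0\subseteq I$ and $J_0\subseteq J$ index those summands that are not $(2n-1)\delta$-trivializable. I would build the bipartite graph $G$ on $I_0\cup J_0$ with an edge $(i,j)$ whenever $\di(R_{P_i}, R_{Q_j})\leq (2n-1)\delta$. If $G$ admits a perfect matching on $I_0$ (and symmetrically on $J_0$), we are done: matched pairs are $(2n-1)\delta$-close by construction, and the unmatched summands are $(2n-1)\delta$-trivializable by the choice of $I_0, J_0$.

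To verify Hall's condition, fix $S\subseteq I_0$ and let $T\subseteq J$ be the set of indices $j$ such that the component of the interleaving map $\phi$ from $R_{P_i}$ to $R_{Q_j}(\cdot+\vec\delta)$ is nonzero for some $i\in S$. First I would argue $T\subseteq N_G(S)$: a nonzero morphism between two rectangle modules forces their supports to be compatible up to the shift $\vec\delta$, and a standard computation shows this implies $\di(R_{P_i}, R_{Q_j})\leq \delta\leq (2n-1)\delta$. Next, and crucially, I would show $|T|\geq |S|$ by a rank argument: inside each $P_i$ (with $i\in S$) choose a witness point $\x_i$ at which the diagonal map $R_{P_i}(\x_i\to \x_i+2\vec\delta)$ equals the identity, which exists precisely because $P_i$ is not $(2n-1)\delta$-trivializable. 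Examining the matrix of $\phi$ restricted to $\bigoplus_{i\in S} R_{P_i}(\x_i)$ and using $\psi\circ\phi = \mathrm{id}$ on these components, the rank of $\phi$ on this subspace must be at least $|S|$, while its image lies in $\bigoplus_{j\in T} R_{Q_j}$ evaluated at shifted witness points, so a careful bookkeeping of contributions yields $|T|\geq |S|$.

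The main obstacle is extracting the sharp constant $2n-1$. A naive version of the argument would produce a constant depending on the combinatorics of overlaps; one must control how many rectangles $P_i$ ($i\in S$) can simultaneously contribute non-redundantly to a single $Q_j$. The key geometric input is a counting lemma about rectangles in $\RB^n$: if several rectangles are all close to a common rectangle $Q_j$ under the $\ell_\infty$ norm on corner coordinates, then their corners lie in narrow strips around $Q_j$, and a face-by-face analysis shows that an $(2n-1)\delta$-blowup is just large enough to absorb all coordinate discrepancies while preserving a valid rectangle matching. To make this sharp I would proceed either by induction on $n$ (with the one-parameter base case giving the classical sharp bound $\db=\di$) or by a direct injectivity argument on a well-chosen linear map arising from the interleaving, reducing the problem to a combinatorial claim about the $2n$ coordinate directions. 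This counting step, rather than the Hall-theorem scaffolding, is where I expect the bulk of the technical work to lie.
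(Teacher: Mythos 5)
The paper does not prove this theorem; it is imported as a black box from Bjerkevik's paper \cite{bjerkevik2021stability}, so there is no in-paper proof to compare against. Evaluated on its own terms, your proposal has the right general silhouette (it follows the Bauer--Lesnick / Bjerkevik induced-matching strategy of extracting a partial matching from a $\delta$-interleaving via a Hall-type argument), but it contains a genuine gap at the step where you verify Hall's condition.

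The problem is the claim ``a nonzero morphism between two rectangle modules forces their supports to be compatible up to the shift $\vec\delta$, and a standard computation shows this implies $\di(R_{P_i},R_{Q_j})\leq\delta$.'' This is false. For rectangles $P=[\aaa,\mathbf{b}]$ and $Q=[\mathbf{c},\mathbf{d}]$, one has $\mathrm{Hom}(R_P,R_Q)\neq 0$ exactly when $\mathbf{c}\leq\aaa$, $\mathbf{d}\leq\mathbf{b}$, and $\aaa\leq\mathbf{d}$ (componentwise); this is a one-sided birth/death condition that places no bound whatsoever on $\|\aaa-\mathbf{c}\|_\infty$ or $\|\mathbf{b}-\mathbf{d}\|_\infty$. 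Already in one parameter, $k_{[0,10]}\to k_{[-1,1]}$ is a nonzero morphism while the modules are $\di$-far apart. So the set $T$ you build (indices $j$ with $\phi_{ij}\neq 0$ for some $i\in S$) need not be contained in $N_G(S)$, and the Hall verification collapses. Closeness only follows from a \emph{two-sided} condition: roughly, that both $\phi_{ij}$ and $\psi_{ji}$ are nonzero and their composite realizes a nontrivial piece of the identity $M(\x\rightarrow\x+2\vec\delta)$ on a witness point. Even with that fix, the rank argument as stated (``the rank of $\phi$ on $\bigoplus_{i\in S}R_{P_i}(\x_i)$ is at least $|S|$, while its image lies in $\bigoplus_{j\in T}R_{Q_j}$ at shifted points'') does not directly produce an injection $S\hookrightarrow T$, because the witness points $\x_i$ vary with $i$ and the images in distinct $R_{Q_j}$'s need not be independent without further structure. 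Bjerkevik's actual proof supplies exactly that missing structure: he defines a preorder on rectangles via a carefully chosen real-valued function $\alpha$, shows that the morphism matrices of $\phi$ and $\psi$ are triangular with respect to $\alpha$ up to a controlled slack, and from this triangularity extracts the matching; the constant $2n-1$ falls out of how the $\alpha$-slack interacts with the $\delta$-shift, not from a counting lemma about how many rectangles can crowd around a single $Q_j$. Your sketch of the $(2n-1)$ step is therefore not just incomplete but aimed at the wrong mechanism. To repair the proof, replace the edge criterion with the two-sided composite-nonvanishing condition and replace the vague ``face-by-face counting'' by a preorder/triangularity lemma in the style of Bjerkevik.
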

\begin{remark}
Combined with $\di\leq\db$ for general persistence modules, we have $\di(M, N)\leq \db(M, N)\leq (2n-1)\di(M, N)$ for rectangle decomposable modules. Specifically, 
when $n=1$, it becomes the isometry theorem, and 
when $n=2$, $\di(M, N)\leq \db(M, N)\leq 3\di(M, N)$ for 2-parameter rectangle decomposable modules. 
\end{remark}



\begin{definition} [Intersection module]
    For two interval modules $M$ and $N$ with intervals $I_M$ and $I_N$ respectively
    let $I_Q = I_M\cap I_N$, which is a disjoint union of intervals, $\coprod I_{Q_i}$.  
    The intersection module $Q$ of $M$ and $N$ is $Q=\bigoplus Q_i$, where $Q_i$ is the interval module with interval $I_{Q_i}$. That is,
    \begin{equation*}
    Q(x)= 
    \begin{cases}
   \field{k}  & \mbox{if $x\in I_M\cap I_N$}\\
    0   & otherwise
    \end{cases} \quad \mbox{and for $x\leq y,\,$ }
    {Q}({x\rightarrow y})=
    \begin{cases}
    \mathbb{1}  & \mbox{if $x,y \in I_M \cap I_N$}\\
    0   & otherwise.
    \end{cases}
    \end{equation*}
\end{definition}


\begin{definition} [Diagonal projection and distance]
  Let $I$ be an interval and $\x\in \RB^n$. Let $\Delta_\x=\{\x+ \vec{\alpha}\mid  \alpha \in \Real \}$ denote the \emph{diagonal line} passing through $\x$.
When $\Delta_\x\cap I \neq \emptyset$, we define $\pi_I(\x)\triangleq \argmin_{\y\in \Delta_\x\cap I}\set{ \|\x-\y\|_{\infty}}$, called the {\em projection point} of $\x$ on $I$. 
  We define (see Figure~\ref{fig:dl}) 
    \begin{equation*}
        \dl(\x,I)= 
        \begin{cases}
            \|\x-\pi_I(\x)\|_{\infty} \mbox{ if } \Delta_\x\cap I \neq \emptyset
          \\
          +\infty \mbox{ otherwise.}
        \end{cases}
    \end{equation*}
and its signed version
    \begin{equation*}
    \dl_{\pm}(\x,I)= 
    \begin{cases}
        \sign(\pi_I(\x)-\x)\cdot\dl(\x, I) \mbox{ if } \Delta_\x\cap I \neq \emptyset\\ 
        +\infty \mbox{ otherwise.}
    \end{cases}
    \end{equation*}
    where 
    $\sign(\x-\pi_I(\x))=\sign((\x-\pi_I(\x))\cdot(1,\dots, 1))$ 
    is used to indicate the relative positions of $\x$ and $\pi_I(\x)$. 
    Specifically, for $\mathbf{x}=(x_1, x_2)$, $\dl_\pm(\mathbf{x}, I)=a-x_1$ when $I$ is a vertical line $\{x=a\}$, and $\dl_\pm(\mathbf{x}, I)=b-x_2$ when $I$ is a horizontal line $\{y=b\}$. 
\end{definition}


\begin{note*}
Note that $\x\in I \iff \pi_I(\x)=\x \iff \dl(\x, I)=0$.
$\forall \alpha\in \Real, \quad  \pm \infty + \alpha = \pm \infty$. Therefore, for $\x\in V(\RB^n)$, the diagonal line  $ \Delta_{\x}$ collapses to a single point. In that case, $\dl(\x, I)\neq+\infty$ if and only if $\x\in I$, which means $\pi_I(\x)=\x$.
\end{note*}

\section{Rectangular approximation}\label{sec:rectangle_approx}

From now on, we focus on $2$-parameter p.f.d. persistence modules though many of our results can be generalized to multi-parameter persistence modules. For brevity, sometimes we use the notations $a\vee b\triangleq \max\{a, b\}$ and $a\wedge b\triangleq \min\{a,b\}$. Note that these two operations together are not associative.

\subsection{Computing an approximation}


For an interval module $M$ with the underlying interval being a polygon described by its set of vertices $V(I_M)$ in $\Real^2$,
we want to compute a rectangle module which is optimally close to $M$ under some distance. 
We assume all intervals include boundaries. If not, we replace $M$ with $\overline{M}$ by adding the boundary. This operation does not change the interleaving distance since $\di(M, \overline{M})=0$ as shown in~\cite{DX18}. One can also use decorated number~\cite{chazal2016structure} to avoid the technical issues related to boundaries.
We first give a construction that may not be optimal, but can be computed fast in time linear in input size $|V(I_M)|$. Next, we modify the construction further to compute an optimal approximation in polynomial time though with increased time complexity. We first define the optimality.
\begin{definition}
Given a persistence module $M$, a distance $d(\cdot)$ on persistence modules and a class of persistence modules $\mathcal{N}$, we define $d(M, \mathcal{N})\triangleq\inf\{d(M, N)\mid N\in \mathcal{N}\}$. 
We say a persistence module $N$ from a class of persistence modules $\mathcal{N}$ approximates a given persistence module $M$ optimally with respect to a distance function $d(\cdot)$ if $d(M, N)=d(M, \mathcal{N})$
\end{definition}

For an interval module $M$, we say a rectangular subset $J\subseteq I_M$ is a maximal rectangle of $I_M$ if it is not properly contained in any other rectangular subset in $I_M$. 
For each $I_M$, there is a unique maximal rectangle incident to the top-left corner of $I_M$, denoted as $I_M^\top$, and a unique maximal rectangle incident to the bottom-right corner of $I_M$, denoted as $I_M^\bot$. See Figure~\ref{fig:top_bot_max_rectangle} for an illustration.

\begin{figure}[htbp]
\centerline{\includegraphics[width=0.5\textwidth]{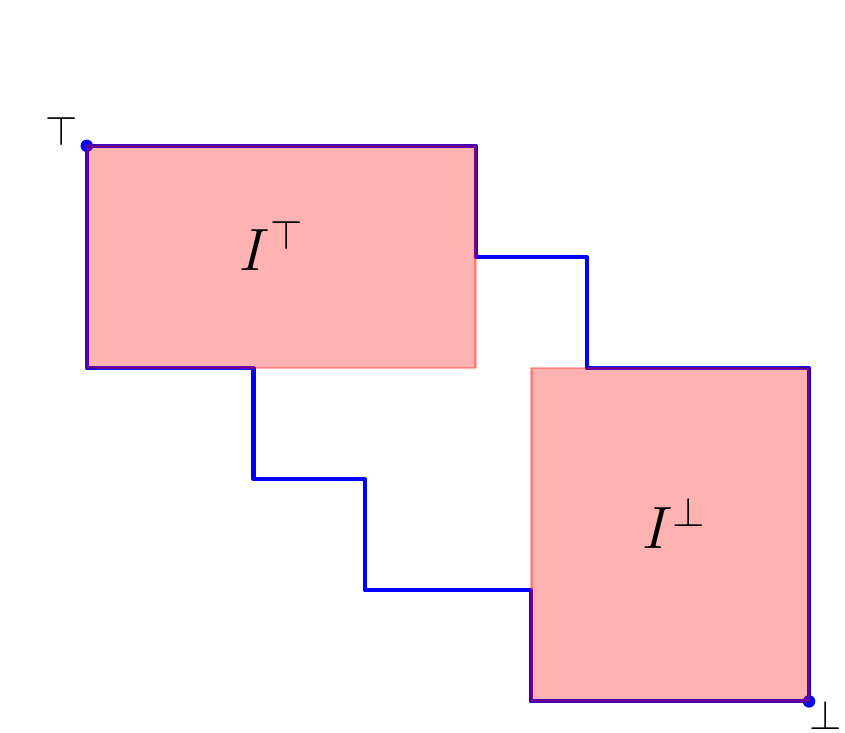}}
\caption{Red rectangles illustrate $I_M^\top$ and $I_M^\bot$ of interval $I$.}
\label{fig:top_bot_max_rectangle}
\end{figure}

\begin{definition}[Construction 1]\label{def:construction1}
For an interval $I_M$, we call the smallest rectangle containing it as the {\emph{bounding rectangle}} of $I_M$. 
Let $[\rr', \s']$ be the bounding rectangle of $I_M$. Let $\rr''=\pi_{L(I_M)}(\rr')$ and $\s''=\pi_{U(I_M)}(\s')$. Let $\mathbf{r}=\frac{1}{2} (\mathbf{r}'+\mathbf{r}'')$ and $\mathbf{s}=\frac{1}{2} (\mathbf{s}'+\s'')$. Let 
\begin{equation*}
\epsilon_M=\max\{ \|\mathbf{r}-\mathbf{r'}\|_{\infty}, \|\mathbf{s}-\mathbf{s'}\|_{\infty}\}.
\end{equation*}
Define $\boxed{M}$ to be the rectangle module on $[\mathbf{r}, \mathbf{s}]$ if $M$ is not $\epsilon_M$-trivializable. Define $\boxed{M}=0$, the trivial module, otherwise. See Figure~\ref{fig:construction1} for an illustration.
\end{definition}

\begin{figure}[ht]
\centerline{\includegraphics[width=0.5\textwidth]{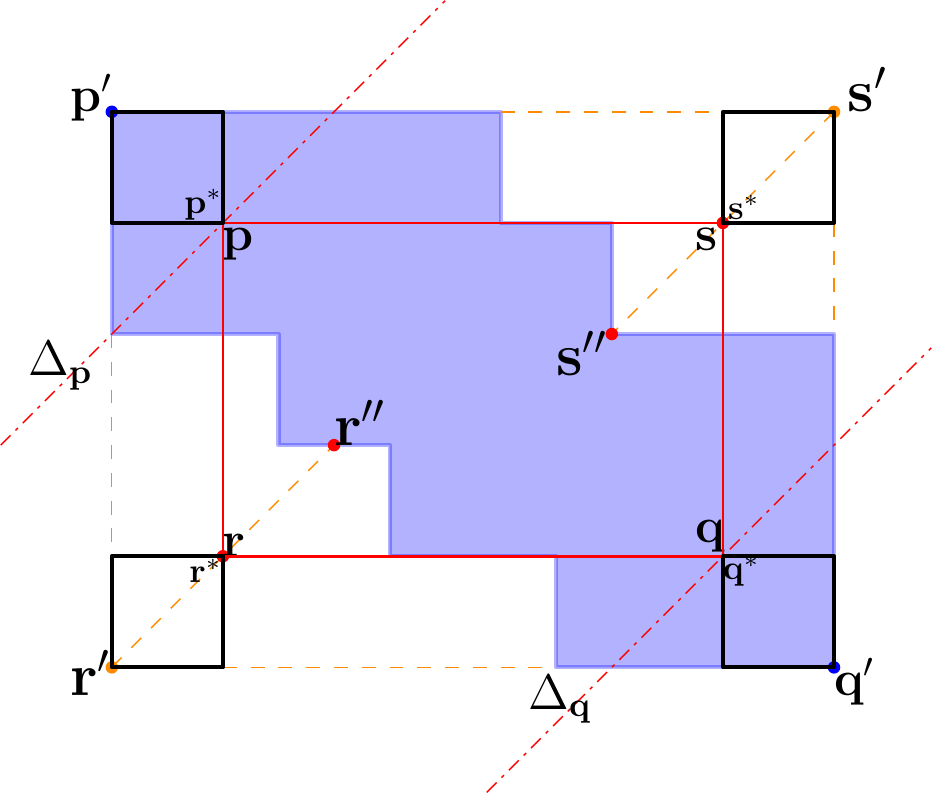}}
\caption{$M=$Blue region. $\boxed{M}=[\rr, \s]=$red rectangle. The rectangle given by $[\rr', \s']$ is the bounding rectangle of $I_M$. Under the assumption that both $M^\top$ and $M^\bot$ are not $\epsilon_M$-trivializable, the optimal rectangle $R^*$ should have four corners contained in the four black boxes corresponding to the four corners of the bounding rectangle, in order to make sure $M|_{\bar{\Delta}_{R^*}}$ is $\epsilon_M$-trivializable.}\label{fig:construction1}
\end{figure}

We have the following properties for $\boxed{M}$ (proof in \ref{sec:missing_proofs}).
\begin{theorem}\label{prop:properties_box}
The module $\boxed{M}$ obtained by {\sf Construction 1} satisfies the following properties:
\begin{enumerate}
    \item $\di(M, \boxed{M})\leq \epsilon_M$. 
    \item $\di(M, \boxed{M})=\epsilon_M$ if $\di(M, 0)\geq \epsilon_M$.
    \item $\boxed{M}$ is an optimal rectangle module approximating $M$ with respect to $\di$ if both $I_M^{\top}$ and $I_M^{\bot}$ are large enough to contain a $2\epsilon_M\times2\epsilon_M$ square. 
\end{enumerate}
\label{construct1-thm}
\end{theorem}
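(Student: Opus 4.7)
The plan is to treat the three properties separately, each built on the geometric identities $\|\rr - \rr'\|_\infty = d/2$ and $\|\s - \s'\|_\infty = d'/2$ with $\epsilon_M = \max(d,d')/2$, and on the observation that $\rr'' = \rr' + d\vec 1$ and $\s'' = \s' - d'\vec 1$ both lie in $I_M$, so the ``core'' rectangle $[\rr'', \s'']$ (when nonempty) sits inside $I_M$ by order-convexity.

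For Property 1, I would build an explicit $\epsilon_M$-interleaving when $\boxed{M}\neq 0$ (the trivializable case being immediate). Declare $\phi_\x = \mathrm{id}$ precisely when $\x \in I_M$ and $\x + \vec{\epsilon_M} \in [\rr,\s]$, with $\psi_\x$ defined symmetrically. The first interleaving identity then reduces to the implication $\x, \x + 2\vec{\epsilon_M} \in I_M \Rightarrow \x + \vec{\epsilon_M} \in [\rr,\s]$, which I would verify from $\x \geq \rr'$, giving $\x + \vec{\epsilon_M} \geq \rr' + (d/2)\vec 1 = \rr$, and from $\x + 2\vec{\epsilon_M} \leq \s'$, giving $\x + \vec{\epsilon_M} \leq \s$. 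The second identity reduces to $[\rr + \vec{\epsilon_M}, \s - \vec{\epsilon_M}] \subseteq I_M$, which I would prove via the chain $[\rr + \vec{\epsilon_M}, \s - \vec{\epsilon_M}] \subseteq [\rr'', \s''] \subseteq I_M$. Naturality is handled by the same order-convexity, since any $\x$ with $\phi_\x = \mathrm{id}$ satisfies $\x \leq \s - \vec{\epsilon_M} \leq \s''$.

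For Property 2 the upper bound is Property 1, so the substance is the lower bound $\di(M, \boxed{M}) \geq \epsilon_M$, which I would prove by contradiction. If $\boxed{M} = 0$ the bound is the hypothesis. Otherwise, assuming WLOG $\epsilon_M = d'/2$ and a $\delta$-interleaving with $\delta < \epsilon_M$, the hypothesis $\di(M, 0) \geq \epsilon_M$ furnishes a diagonal chord of length $2\epsilon_M$ inside $I_M$, from which order-convexity lets me extract a witness $\x^\ast \in I_M$ with $\x^\ast + 2\vec\delta \in I_M$ and $\rr - \vec\delta \leq \x^\ast \leq \s - \vec\delta$. The first interleaving identity at $\x^\ast$ forces $\phi_{\x^\ast} = \mathrm{id}$, and naturality with respect to any $\y \geq \x^\ast$ satisfying $\y + \vec\delta \in [\rr,\s]$ then forces $[\x^\ast, \s - \vec\delta] \subseteq I_M$; in particular $\s - \vec\delta$ must lie in $I_M$. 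However, $\s - \vec\delta = \s' - (d'/2 + \delta)\vec 1$ sits on $\Delta_{\s'}$ at distance strictly less than $d'$ from $\s'$, while $I_M$ is first entered from $\s'$ only at $\s''$, yielding the contradiction. The hard part will be extracting the witness $\x^\ast$ satisfying both translation bounds simultaneously; this is precisely where the non-trivializability hypothesis is essential.

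For Property 3, I would prove optimality by showing that any rectangle module $R^\ast = [\uu, \vv]$ with $\di(M, R^\ast) < \epsilon_M$ leads to a contradiction under the hypothesis that $I_M^\top$ and $I_M^\bot$ each contain a $2\epsilon_M \times 2\epsilon_M$ square. Each such square is non-$\epsilon_M$-trivializable, so any interleaving of $M$ with $R^\ast$ strictly below $\epsilon_M$ forces $R^\ast$ to ``cover'' these squares, pinning the top-left corner of $R^\ast$ to an $\epsilon_M$-window around $\rr'$ and the bottom-right corner to one around $\s'$ (the black boxes in Figure~\ref{fig:construction1}). Minimizing the worst-case $\ell_\infty$ corner deviation over the two admissible windows yields precisely the midpoints defining $\rr$ and $\s$, so $\boxed{M}$ achieves the optimum while any other $R^\ast$ strictly worsens one of the two corner deviations. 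The main obstacle here will be formalizing the windowing argument and converting the corner deviations into true interleaving bounds, rather than merely geometric distances between corners.
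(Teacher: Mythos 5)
Your overall plan is viable and Properties~1 and~2 can be made to work along the lines you sketch, but you take a genuinely different route from the paper there, and Property~3 as stated has a gap. For Property~1, the paper does not build an interleaving by hand: it computes $\dH(I_M, I_{\boxed{M}}) = \epsilon_M$ directly (each of the four terms $\|\rr-\rr'\|_\infty, \|\s-\s'\|_\infty, \|\rr-\rr''\|_\infty, \|\s-\s''\|_\infty$ is $\leq\epsilon_M$ with equality attained) and then invokes the lemma $\di\leq\dH$ (Proposition~\ref{prop:di_leq_dh}), which buys a much shorter proof at the cost of relying on that appendix lemma. Your explicit interleaving is more elementary; it does close, including the degenerate case where $[\rr+\vec{\epsilon_M},\s-\vec{\epsilon_M}]$ is empty (the second interleaving identity is then vacuous), and the naturality check goes through via $\y\leq\s-\vec{\epsilon_M}\leq\s''$ and order-convexity, as you indicate. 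For Property~2, the paper again does not run a witness/contradiction argument: it reads off the lower bound from the exact formula of Proposition~\ref{thm:form_di} combined with the computation $\dH(M,\boxed{M})=\epsilon_M$. Your approach is more self-contained but considerably more delicate, and you correctly flag the nontrivial step.

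For Property~3 there is a genuine gap. First, your argument implicitly assumes the competing rectangle $R^\ast$ can be taken inside the bounding rectangle of $I_M$; this requires an argument (the paper supplies Proposition~\ref{prop:opt_rectangle_in_bounding_rectangle} precisely for this). Second, and more importantly, ``minimizing the worst-case $\ell_\infty$ corner deviation'' is not the same quantity as $\di(M,R^\ast)$: you would be bounding a geometric proxy, not the interleaving distance, and the conclusion that the midpoint choice is optimal does not follow from that proxy alone. The paper's actual step is to deduce from the corner-pinning that $\dl(\rr^\ast, L(I_M))\geq\|\rr-\rr''\|_\infty$ and $\dl(\s^\ast, U(I_M))\geq\|\s-\s''\|_\infty$, hence $\dH(M|_{\Delta_{R^\ast}}, R^\ast|_{\Delta_{R^\ast}})\geq\epsilon_M$, and then feed this into the exact formula of Proposition~\ref{thm:form_di} (after also verifying $\di(M|_{\Delta_{R^\ast}},0)\geq\epsilon_M$, so the $\wedge$ does not cut the bound) to conclude $\di(M,R^\ast)\geq\epsilon_M$. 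That conversion from corner geometry to an interleaving lower bound is exactly the missing piece you flag as ``the main obstacle,'' and it is carried by Propositions~\ref{thm:form_di} and~\ref{prop:comp_dH}, not by an optimization over corner windows. A small side note: you say the top-left corner of $R^\ast$ is pinned near $\rr'$, but $\rr'$ is the bottom-left corner of the bounding rectangle; the paper pins $\p^\ast$ (top-left of $R^\ast$) near $\p'$ (top-left of $I_M$) and $\q^\ast$ near $\q'$, from which $\rr^\ast\in\square_{\epsilon_M}(\rr')$ and $\s^\ast\in\square_{\epsilon_M}(\s')$ are then deduced.
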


We extend our rectangular approximation to interval decomposable modules. 
For an interval decomposable module $M=\oplus M_i$, we extend the definition of $\boxed{M}$ to be $\boxed{M}\triangleq \bigoplus \boxed{M_i}$, and let $\epsilon_M\triangleq \max\{\epsilon_{M_i}\}$. We have the following property, the first one of which follows directly from the definitions.

\begin{proposition}\label{prop:db_bound}
For interval decomposable module $M$, $\di(M, \boxed{M})\leq \db(M, \boxed{M})\leq  \epsilon_M.$
\end{proposition}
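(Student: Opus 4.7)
The proposition chains two inequalities, $\di(M,\boxed{M})\leq \db(M,\boxed{M})\leq \epsilon_M$. As the paper already notes, the first inequality is the general fact $\di\leq \db$ recorded immediately after Definition 2.6, so the only real work is the second one. My plan is to exhibit an explicit partial matching of indecomposables that certifies $M$ and $\boxed{M}$ are $\epsilon_M$-matched.

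The plan is as follows. Write $M\simeq \bigoplus_i M_i$ as a direct sum of interval modules, so by construction $\boxed{M}\simeq \bigoplus_i \boxed{M_i}$, where each $\boxed{M_i}$ is either a rectangle module (if $M_i$ is not $\epsilon_{M_i}$-trivializable) or the zero module (if $M_i$ is $\epsilon_{M_i}$-trivializable). Up to discarding zero summands, the nonzero indecomposables of $\boxed{M}$ are naturally indexed by $S=\{i : \boxed{M_i}\neq 0\}$. Define the partial bijection $\mu$ by pairing $M_i$ with $\boxed{M_i}$ for every $i\in S$.

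I would then verify the three conditions of Definition 2.7 (bottleneck matching) at level $\delta=\epsilon_M$. First, every $i\notin S$ has $M_i$ $\epsilon_{M_i}$-trivializable by Definition 3.1, hence $\epsilon_M$-trivializable because $\epsilon_M=\max_j \epsilon_{M_j}\geq \epsilon_{M_i}$. Second, every nonzero summand of $\boxed{M}$ lies in $\Img\mu$ by construction, so the second condition is vacuous. Third, for each matched pair, Theorem 3.1(1) gives $\di(M_i,\boxed{M_i})\leq \epsilon_{M_i}\leq \epsilon_M$. These three facts together say $M$ and $\boxed{M}$ are $\epsilon_M$-matched, which by Definition 2.7 yields $\db(M,\boxed{M})\leq \epsilon_M$.

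There is no substantive obstacle here; the argument is a straightforward componentwise bookkeeping that bundles the per-summand estimate of Theorem 3.1(1) with the trivializability fallback built into Definition 3.1. The only minor point worth being careful about is that the matching is genuinely a partial bijection even when several $M_i$ happen to be isomorphic (so that the Krull--Schmidt indexing is only up to permutation); this is handled by defining $\mu$ on the chosen index set $\{i\}$ rather than on isomorphism classes, which is exactly what Definition 2.7 allows.
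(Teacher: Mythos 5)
Your proof is correct and is exactly the componentwise matching argument that the paper compresses into the remark ``the first one of which follows directly from the definitions'': you decompose $M=\bigoplus_i M_i$, match each $M_i$ with $\boxed{M_i}$ when the latter is nonzero, invoke Theorem~\ref{prop:properties_box}(1) to get $\di(M_i,\boxed{M_i})\leq\epsilon_{M_i}\leq\epsilon_M$ for matched pairs, and use the fact that Construction~1 returns $0$ precisely when $M_i$ is $\epsilon_{M_i}$-trivializable to handle the unmatched ones. Since the paper gives no separate proof, your write-up supplies the intended details with no gap and no deviation from the paper's route.
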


\begin{proposition}\label{prop:db_bound0}
For two interval decomposable modules $M$ and $N$, 
\begin{equation*}
\di(M,N)\leq \db(M,N) \leq \db(\boxed{M}, \boxed{N}) + \epsilon_M + \epsilon_N \leq 3\di(M, N) + 4(\epsilon_M + \epsilon_N).
\end{equation*}
\end{proposition}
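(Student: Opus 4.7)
The statement is a chain of three inequalities, and my plan is to establish each by a short triangle-inequality argument, calling on Proposition~\ref{prop:db_bound} and Theorem~\ref{thm:rectangle_stability} only when needed. The leftmost inequality $\di(M,N)\leq \db(M,N)$ is exactly the stated Fact $\di\leq \db$, so there is nothing to do there.

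For the middle inequality, I would apply the triangle inequality for $\db$ along the path $M\rightsquigarrow \boxed{M}\rightsquigarrow \boxed{N}\rightsquigarrow N$, obtaining
$$\db(M,N)\leq \db(M,\boxed{M})+\db(\boxed{M},\boxed{N})+\db(\boxed{N},N),$$
and then bound the first and third summands by $\epsilon_M$ and $\epsilon_N$ respectively via Proposition~\ref{prop:db_bound}. This yields $\db(M,N)\leq \db(\boxed{M},\boxed{N})+\epsilon_M+\epsilon_N$ directly.

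For the rightmost inequality, the goal reduces to showing $\db(\boxed{M},\boxed{N})\leq 3\di(M,N)+3(\epsilon_M+\epsilon_N)$, after which adding $\epsilon_M+\epsilon_N$ to both sides gives the stated bound. Since $\boxed{M}$ and $\boxed{N}$ are rectangle decomposable $2$-parameter modules, Bjerkevik's theorem (Theorem~\ref{thm:rectangle_stability} at $n=2$) gives $\db(\boxed{M},\boxed{N})\leq 3\,\di(\boxed{M},\boxed{N})$. I then use the triangle inequality for $\di$, together with $\di\leq \db$ and Proposition~\ref{prop:db_bound}, to estimate
$$\di(\boxed{M},\boxed{N})\leq \di(\boxed{M},M)+\di(M,N)+\di(N,\boxed{N})\leq \epsilon_M+\di(M,N)+\epsilon_N,$$
and multiplying through by $3$ yields the required inequality.

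The only subtlety I foresee is confirming that the bottleneck distance in Definition~\ref{def:bottleneck_distance} satisfies the triangle inequality in this multi-parameter setting; this is immediate from the definition because partial matchings can be composed and the interleaving distances on matched summands add. Beyond this bookkeeping, the argument is a straight assembly of previously stated ingredients, so I do not expect any genuine technical obstacle.
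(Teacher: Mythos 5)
Your proof is correct and follows essentially the same route as the paper's: the middle inequality via the triangle inequality for $\db$ together with Proposition~\ref{prop:db_bound}, and the rightmost via Bjerkevik's theorem for rectangle decomposable modules followed by the triangle inequality for $\di$ (the paper writes the same estimate $\di(\boxed{M},\boxed{N})\leq \di(M,N)+\epsilon_M+\epsilon_N$). The remark about composing partial matchings to verify the triangle inequality for $\db$ is a reasonable bit of bookkeeping that the paper leaves implicit.
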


\begin{proof}
The first inequality is true for any persistence modules. The second inequality comes from triangle inequality. By the stability property of rectangle decomposable modules in Theorem~\ref{thm:rectangle_stability}, 
$\di(\boxed{M}, \boxed{N})\leq \db(\boxed{M}, \boxed{N})\leq 3\di(\boxed{M}, \boxed{N})$. Using the triangle inequality and Proposition~\ref{prop:db_bound}, we get $\di(\boxed{M}, \boxed{N})\leq \di(M,N) +\epsilon_M + \epsilon_N$.  The last inequality immediately follows.
\end{proof}

By Proposition~\ref{prop:db_bound}, $\db(M, \boxed{M})$ and $\db(N, \boxed{N})$ are bounded from above by $\epsilon_M$ and $\epsilon_N$ respectively. Then the question is, what is a closest rectangle decomposable module to $M$ with respect to the bottleneck distance. 
In line with our previous definition of optimality, 
we denote $\boxed{M_i}^*$ to be an optimal rectangle module of the interval module $M_i$ under interleaving distance, and set $\boxed{M}^*\triangleq \oplus \boxed{M_i}^*$ to be a rectangle decomposable module approximating $M\simeq \oplus M_i$. 
First we have the following observation which follows from the definifition of
the bottleneck distance:
\begin{proposition}
Given an interval decomposable module $M\simeq \oplus M_i$, a rectangle decomposable module $\boxed{M}^*\triangleq \oplus \boxed{M_i}^*$ approximates $M$ optimally with respect to the bottleneck distance. 
\end{proposition}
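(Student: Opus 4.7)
The plan is to show that for any rectangle decomposable module $R$, we have $\db(M,\bigoplus \boxed{M_i}^*) \leq \db(M,R)$, which gives that $\bigoplus \boxed{M_i}^*$ realizes the infimum defining $\db(M,\mathcal{N})$ where $\mathcal{N}$ is the class of rectangle decomposable modules.

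Fix any rectangle decomposable $R \simeq \bigoplus_{j\in J} R_j$ and pick any $\delta > \db(M,R)$. By definition, there exists a $\delta$-matching $\mu: I \nrightarrow J$ between the indecomposables of $M \simeq \bigoplus_{i\in I} M_i$ and those of $R$. For each $i \in I$, I will produce a rectangle module $R_i'$ with $\di(M_i, R_i') \leq \delta$: if $i \in \coim\mu$, set $R_i' = R_{\mu(i)}$, which satisfies $\di(M_i, R_{\mu(i)}) \leq \delta$ by clause (iii) of Definition~\ref{def:bottleneck_distance}; if $i \notin \coim\mu$, take $R_i' = 0$, which is the trivial rectangle module and satisfies $\di(M_i, 0) \leq \delta$ by clause (i). Since $\boxed{M_i}^*$ is, by definition, the optimal rectangle module approximating $M_i$ under $\di$, we have $\di(M_i, \boxed{M_i}^*) \leq \di(M_i, R_i') \leq \delta$ in both cases.

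Now build a matching $\mu': I \nrightarrow I'$ between the indecomposables of $M$ and those of $\bigoplus \boxed{M_i}^*$, where $I' = \{i \in I : \boxed{M_i}^* \neq 0\}$: send each $i \in I'$ to itself, pairing $M_i$ with $\boxed{M_i}^*$. All nonzero components of $\bigoplus \boxed{M_i}^*$ are matched, so (ii) is vacuous. For $i \in I'$, clause (iii) holds because $\di(M_i, \boxed{M_i}^*) \leq \delta$. For $i \in I \setminus I'$, we have $\boxed{M_i}^* = 0$, and from the previous paragraph $\di(M_i, 0) = \di(M_i, \boxed{M_i}^*) \leq \delta$, so $M_i$ is $\delta$-trivializable, satisfying (i). Thus $\mu'$ is a $\delta$-matching and $\db(M, \bigoplus \boxed{M_i}^*) \leq \delta$. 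Letting $\delta \to \db(M,R)$ and then taking the infimum over all rectangle decomposable $R$ yields $\db(M, \bigoplus \boxed{M_i}^*) = \db(M, \mathcal{N})$.

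There is no serious obstacle; the argument is essentially a transport of the optimality of each $\boxed{M_i}^*$ (at the level of single indecomposables under $\di$) through the component-wise structure of the bottleneck distance. The only point requiring slight care is uniformly handling the two cases (matched vs.\ unmatched by $\mu$) via the observation that the trivial module qualifies as a rectangle module, so that the optimality of $\boxed{M_i}^*$ can be invoked against $R_i' = 0$ in the unmatched case.
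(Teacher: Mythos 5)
Your argument is correct and is exactly the elaboration the paper leaves implicit (the paper states this proposition without proof, saying it ``follows from the definition of the bottleneck distance''): transport a $\delta$-matching with an arbitrary rectangle decomposable $R$ into the diagonal matching $M_i \leftrightarrow \boxed{M_i}^*$ using the per-component optimality of each $\boxed{M_i}^*$. The one point worth flagging is that your step $\di(M_i,\boxed{M_i}^*)\le\di(M_i,R_i')$ when $R_i'=0$ relies on treating the zero module as an admissible competitor in the class over which $\boxed{M_i}^*$ is optimal; this is consistent with Construction~1 and with the algorithm (which initializes $\boxed{M}^*\leftarrow 0$), but is not literally covered by the paper's definition of a rectangle module as a nonempty-interval module, so it deserves the explicit mention you give it.
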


Therefore, to find an optimal rectangle decomposable module for $M\simeq \oplus M_i$ under the bottleneck distance, the question becomes how to find an optimal rectangle approximation for each interval module $M_i$. 

\subsection{An optimal approximation}

In {\sf Construction 1} (Definition~\ref{def:construction1}), we described an approximating rectangle module $\boxed{M}$ for an interval module $M$. By Theorem~\ref{prop:properties_box}, we 
can observe from Property 3 that $\boxed{M}$ fails to be $\boxed{M}^*$ in general case when the top or bottom maximal rectangles of $M$, that is $M^{\top}$ or $M^{\bot}$, is $\epsilon$-trivializable with $\epsilon$ being quite small. In that case, it is more reasonable to ignore these $\epsilon$-trivializable parts to construct a rectangular approximation on the remaining part possibly improving the approximation. Therefore, to compute $\boxed{M}^*$, we need to determine what are the top and bottom parts we want to ignore. 
In this section, we propose an algorithm to compute $\boxed{M}^*$ in general case.


First we introduce the Hausdorff distance in infinity norm which is useful in our context. 
For any set $S\subseteq\Real^2$ and $\delta \geq 0$, we denote 
\begin{equation*}
S^{(+\delta)}=\bigcup_{\x\in S}\{\x'\in\Real^2: \|\x-\x'\|_\infty\leq \delta\}.
\end{equation*}
The Hausdorff distance with the infinity norm between two nonempty sets $S,T\subseteq \Real^n$ is: 
\begin{equation*}
\dH(S,T)=\inf_{\delta\geq 0}\{S\subseteq T^{(+\delta)} \And T\subseteq S^{(+\delta)}  \}.  
\end{equation*}

For two nontrivial interval modules $\M, \N$, we denote $\dH(\M,\N)\triangleq \dH(I_\M, I_\N)$. 
If $\N=0$, we set $\dH(\M, \N)\triangleq \di(\M, 0)$. Given any interval module $\M$ and a rectangle module $R$ with $I_R$ contained in the bounding rectangle of $I_\M$, let $\p, \q$ be the top-left and bottom-right corner of $I_R$ respectively, and $\rr,\mathbf{s}$ be the bottom-left and top-right corners respectively. 
Denote the band between two lines $\Delta_\p$ and $\Delta_\q$ passing through $\p$ and $\q$ respectively, as $\Delta_{R}$, and the complement of $\Delta_{R}$ is denoted as $\bar{\Delta}_R$. 
Observe that $d_I(M, R)$ can be calculated as $\di(M, R)=\max\{\di({M|_{\bar{\Delta}_R}}, 0), \di(M|_{\Delta_R}, R|_{\Delta_R})\}$.
Notice that we are using $R|_{\Delta R}$ in the above expression even though 
$R|_{\Delta R}$ has the same underlying interval as $R$, i.e.
$I_{R|_{\Delta R}}=I_R$ since $I_R\subset \Delta_R$.
We use $R|_{\Delta R}$ because $\di$ is only well defined for modules defined on the same poset.
 We claim that
(proof in ~\ref{sec:missing_proofs}):
\begin{proposition}\label{thm:form_di}
    For $I_R$ contained in the bounding rectangle of $I_M$,
    \begin{equation}\label{eq:eq_di}
        \di(M, R)=\max\{\di({M|_{\bar{\Delta}_R}}, 0),  (\di(M|_{\Delta_R}, 0)\vee \di(R|_{\Delta_R}, 0)) \wedge \dH(M|_{\Delta_R}, R|_{\Delta_R})\}.
    \end{equation}
\end{proposition}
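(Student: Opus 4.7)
The plan is to derive the identity in two stages. First, I would split $\di(M, R)$ along the diagonal band $\Delta_R$ and its complement $\bar{\Delta}_R$; second, I would analyze the interleaving distance between the two interval modules obtained on $\Delta_R$ in terms of Hausdorff distance and triviality.

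For the first stage, I plan to establish
\[
\di(M, R) \;=\; \max\{\di(M|_{\bar{\Delta}_R}, 0),\; \di(M|_{\Delta_R}, R|_{\Delta_R})\}.
\]
The key observation is that both $\Delta_R$ and $\bar{\Delta}_R$ are invariant under every diagonal shift $\x \mapsto \x + \vec{\delta}$, so any $\delta$-interleaving of $M$ and $R$ restricts independently to each region. Outside $\Delta_R$, the module $R$ vanishes (the diagonal line through any $\x \in \bar{\Delta}_R$ misses $I_R$), so the restricted interleaving witnesses $\delta$-trivializability of $M|_{\bar{\Delta}_R}$; on $\Delta_R$ it is a bona fide $\delta$-interleaving between $M|_{\Delta_R}$ and $R|_{\Delta_R}$. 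Conversely, two such separate interleavings can be glued by placing zero maps on the complementary side, since the ambient posets split without forcing any cross-region components.

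For the second stage, I need to prove
\[
\di(M|_{\Delta_R}, R|_{\Delta_R}) \;=\; \bigl(\di(M|_{\Delta_R}, 0) \vee \di(R|_{\Delta_R}, 0)\bigr) \wedge \dH(M|_{\Delta_R}, R|_{\Delta_R}).
\]
The $\le$ direction has two independent witnesses: if both modules are $\delta$-trivializable, the pair of zero maps gives a $\delta$-interleaving; if $\dH \le \delta$, one builds the standard interleaving using the inclusion of each shifted interval into the other. The main obstacle is the $\ge$ direction, which I would argue contrapositively: assuming a $\delta$-interleaving exists and at least one of the two modules is not $\delta$-trivializable, I will deduce $\dH \le \delta$. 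Starting from a non-zero structure map between two comparable points in $I_M \cap \Delta_R$ at diagonal distance $2\delta$, the interleaving equations force $R|_{\Delta_R}$ to be non-zero at the intermediate shifted point, and analogously in the reverse direction. Applied at the extremal corners of $I_M \cap \Delta_R$ and of $I_R$, and symmetrically with the roles of $M$ and $R$ exchanged, this localizes every boundary point of one interval within diagonal distance $\delta$ of a boundary point of the other, which is exactly $\dH \le \delta$ on these diagonally-closed subsets of the band.

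Combining the two displays yields the proposition. The most delicate step is the corner-tracking in the lower bound for interval modules on the band, which is essentially the one-parameter isometry argument lifted to the two diagonal lines bounding $\Delta_R$; the fact that $I_R \subseteq \Delta_R$ is a rectangle and $I_M \cap \Delta_R$ is cut out by the same two diagonals should keep the case analysis manageable.
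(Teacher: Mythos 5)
Your two-stage outline tracks the paper's own proof exactly: first split $\di(M,R) = \max\{\di(M|_{\Delta_R},R|_{\Delta_R}),\,\di(M|_{\bar\Delta_R},0)\}$ (this is the paper's Equation~(\ref{eq:di_1})), then reduce the band term to the min/max formula via a trivializability-versus-Hausdorff dichotomy, which is Lemma~\ref{lemma:dH_iff}. So the decomposition, the shift-invariance of the bands, and the key dichotomy are all the same as the paper's.

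The genuine gap is in the $\le$ direction of your second stage. You assert that $\dH(M|_{\Delta_R},R|_{\Delta_R})\le\delta$ yields a $\delta$-interleaving by ``the inclusion of each shifted interval into the other.'' But $\dH\le\delta$ gives $I_M\subseteq I_R^{(+\delta)}$ and $I_R\subseteq I_M^{(+\delta)}$, where $(+\delta)$ is the $\ell_\infty$-thickening, \emph{not} a diagonal-shift containment such as $I_M\subseteq I_R+\vec\delta$ that would furnish an inclusion morphism $M|_{\Delta_R}\to (R|_{\Delta_R})_{\to\delta}$. For many $\x\in I_M$ one has $\x+\vec\delta\notin I_R$ even when $\dH\le\delta$ (the nearest point of $I_R$ to $\x$ is off-diagonal), so the identity-on-$\mathbb{k}$ assignment is not a natural transformation and the naive inclusion interleaving does not exist. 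The inequality $\di\le\dH$ for interval modules is true, but it is a real theorem — Proposition~\ref{prop:di_leq_dh} in the paper — proved through Theorem~\ref{di_criteria} and a preorder argument on lower/upper boundaries ($\preceq$, $\succeq$) showing that every intersection component of $M$ with $R_{\to\delta}$ is valid whenever $\dH\le\delta$. You must import or reproduce that argument; it does not come for free. A smaller caution for your $\ge$ direction: since only one of $M|_{\Delta_R}$, $R|_{\Delta_R}$ is assumed non-$\delta$-trivializable, there is no independent ``reverse'' forcing from the $R$ side. What actually closes the argument is that $\psi\circ\phi=M(\cdot\to\cdot+2\vec\delta)\neq 0$ forces \emph{both} $\phi$ and $\psi$ to be nonzero interval-module morphisms, hence both intersection components (of $M$ with $R_{\to\delta}$ and of $R$ with $M_{\to\delta}$) are nonempty and valid; the fact that both $I_M\cap\Delta_R$ and $I_R$ touch both boundary lines of the band then upgrades those validity preorders to the two Hausdorff containments.
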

The last term can be written explicitly as follows:
\begin{proposition}\label{prop:comp_dH}
    \begin{align*} 
        \dH(M|_{\Delta_R}, R|_{\Delta_R})=\max\{&\dl_\pm(\mathbf{p}, U(I_M)),\dl_\pm(\mathbf{q}, U(I_M)), \dl_\pm(\mathbf{r}, L(I_M)) ,\\ 
        -&\dl_\pm(\mathbf{p}, L(I_M)), -\dl_\pm(\mathbf{q}, L(I_M)), -\dl_\pm(\mathbf{s}, U(I_M))\}.
    \end{align*}
\end{proposition}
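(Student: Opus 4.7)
The plan is to reduce the two-dimensional Hausdorff distance $\dH(I_M \cap \Delta_R, I_R)$ to a supremum of one-dimensional Hausdorff distances along the diagonals foliating $\Delta_R$, to exploit the piecewise-linear monotonic structure of the endpoint functions in order to localize that supremum at the corners of $I_R$, and finally to rewrite the six resulting quantities in terms of the signed diagonal distances in the statement.

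As a first step I would establish a diagonal-projection lemma: for any box-convex interval $I \subseteq \RB^2$ and any $\mathbf{x}$ with $\Delta_\mathbf{x} \cap I \neq \emptyset$, one has $d_\infty(\mathbf{x}, I) = d_\infty(\mathbf{x}, \Delta_\mathbf{x} \cap I)$. Intuitively this holds because $\ell^\infty$-balls are axis-aligned squares, and box-convexity lets one slide the $\ell^\infty$-nearest point of $I$ along an antidiagonal back onto $\Delta_\mathbf{x}$ without leaving $I$. Applying this lemma both to $I_R$ (at points of $I_M \cap \Delta_R$) and to $I_M$ (at points of $I_R \subseteq \Delta_R$), the Hausdorff distance decouples across diagonals: setting $c_{\mathbf{v}} := v_2 - v_1$, for each $c \in [c_{\mathbf{q}}, c_{\mathbf{p}}]$ write $I_R \cap \Delta_c = [A_R(c), B_R(c)]$ and $I_M \cap \Delta_c = [A_M(c), B_M(c)]$ in diagonal coordinates. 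The standard formula for the Hausdorff distance between two compact intervals in $\mathbb{R}$ then gives
\[
\dH(I_M \cap \Delta_R, I_R) = \sup_{c \in [c_{\mathbf{q}}, c_{\mathbf{p}}]} \max\bigl\{\lvert A_R(c) - A_M(c)\rvert,\; \lvert B_R(c) - B_M(c)\rvert\bigr\}.
\]

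Next I would analyze these four endpoint functions: each is piecewise linear and non-increasing in $c$ with slopes in $\{-1, 0\}$. The function $A_R$ has a unique breakpoint at $c_{\mathbf{r}}$ (switching from the bottom edge to the left edge of $I_R$), $B_R$ has a unique breakpoint at $c_{\mathbf{s}}$ (right edge to top edge), while $A_M, B_M$ trace $L(I_M), U(I_M)$ with breakpoints only at their vertices. Splitting $[c_{\mathbf{q}}, c_{\mathbf{p}}]$ at $c_{\mathbf{r}}$, the slopes of $A_R - A_M$ are of a single sign on each sub-interval (non-positive on $[c_{\mathbf{q}}, c_{\mathbf{r}}]$, non-negative on $[c_{\mathbf{r}}, c_{\mathbf{p}}]$), so $A_R - A_M$ is unimodal with minimum at $c_{\mathbf{r}}$; hence $\sup_c \lvert A_R - A_M \rvert$ is attained at one of $\{c_{\mathbf{q}}, c_{\mathbf{r}}, c_{\mathbf{p}}\}$. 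Symmetrically, $B_R - B_M$ is unimodal with maximum at $c_{\mathbf{s}}$, so $\sup_c \lvert B_R - B_M \rvert$ is attained at one of $\{c_{\mathbf{q}}, c_{\mathbf{s}}, c_{\mathbf{p}}\}$.

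Finally I would evaluate these six critical values. Since $I_R \cap \Delta_{\mathbf{p}} = \{\mathbf{p}\}$ and $I_R \cap \Delta_{\mathbf{q}} = \{\mathbf{q}\}$, the four contributions at $c_{\mathbf{p}}, c_{\mathbf{q}}$ are exactly $\lvert \dl_\pm(\mathbf{p}, U(I_M))\rvert$, $\lvert \dl_\pm(\mathbf{q}, U(I_M))\rvert$, $\lvert \dl_\pm(\mathbf{p}, L(I_M))\rvert$, $\lvert \dl_\pm(\mathbf{q}, L(I_M))\rvert$; at $c_{\mathbf{r}}$ the corner $\mathbf{r}$ is the lower endpoint of $I_R \cap \Delta_{\mathbf{r}}$, so $\lvert A_R(c_{\mathbf{r}}) - A_M(c_{\mathbf{r}})\rvert = \lvert \dl_\pm(\mathbf{r}, L(I_M))\rvert$; and symmetrically $\lvert B_R(c_{\mathbf{s}}) - B_M(c_{\mathbf{s}})\rvert = \lvert \dl_\pm(\mathbf{s}, U(I_M))\rvert$. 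To pass from the maximum of absolute values to the asymmetric signed maximum in the statement, I would verify by a six-case check that whenever one of the signed terms is negative, another signed term in the list is positive and at least as large in absolute value: for instance, $\dl_\pm(\mathbf{p}, U(I_M)) < 0$ means $\mathbf{p}$ lies above $U(I_M)$ on $\Delta_{\mathbf{p}}$, which forces $A_M(c_{\mathbf{p}}) \leq B_M(c_{\mathbf{p}}) < r_1$, whence $-\dl_\pm(\mathbf{p}, L(I_M)) = r_1 - A_M(c_{\mathbf{p}}) \geq r_1 - B_M(c_{\mathbf{p}}) = \lvert \dl_\pm(\mathbf{p}, U(I_M))\rvert$. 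The main obstacle is precisely this last matching step: each negative signed term must be paired with a dominating positive one, using the monotonicity of $A_M, B_M$ together with the corner ordering $c_{\mathbf{q}} \leq c_{\mathbf{r}}, c_{\mathbf{s}} \leq c_{\mathbf{p}}$.
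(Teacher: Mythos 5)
The paper states Proposition~\ref{prop:comp_dH} without a proof, so there is no paper argument to compare against; I will assess your proof on its own merits. Your strategy is sound: the diagonal-projection lemma (valid because intervals are order-convex and $\ell^\infty$-balls are axis-aligned squares), the resulting fiberwise decoupling $\dH(I_M \cap \Delta_R, I_R) = \sup_c \dH(I_M\cap\Delta_c, I_R\cap\Delta_c)$, the one-dimensional Hausdorff formula for intervals, and the piecewise-linear unimodality analysis localizing the supremum at $\{c_{\mathbf{q}}, c_{\mathbf{r}}, c_{\mathbf{s}}, c_{\mathbf{p}}\}$ are all correct and give a legitimate derivation. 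One small point worth making explicit in the decoupling step: for $\mathbf{b}\in I_R$ you need $d(\mathbf{b}, I_M\cap\Delta_R)$, not $d(\mathbf{b}, I_M)$; these coincide because $I_M\cap\Delta_R\subseteq I_M$ gives one inequality, and $(I_M\cap\Delta_R)\cap\Delta_{\mathbf{b}} = I_M\cap\Delta_{\mathbf{b}}$ (since $\Delta_{\mathbf{b}}\subseteq\Delta_R$) together with the lemma applied to $I_M$ gives the other.

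The one place you overcomplicate is the final step. Your "six-case check" to pass from absolute values to the asymmetric signed maximum is unnecessary: the unimodality you already established yields the signed formula directly. For $f = A_R - A_M$ on $[c_{\mathbf{q}}, c_{\mathbf{p}}]$, which decreases and then increases with its minimum at $c_{\mathbf{r}}$, one has $\sup_c |f(c)| = \max\{f(c_{\mathbf{q}}),\, f(c_{\mathbf{p}}),\, -f(c_{\mathbf{r}})\}$ — no absolute values needed, since the maximum of $f$ sits at an endpoint and the minimum at $c_{\mathbf{r}}$. Translating into $t$-coordinates, $f(c_{\mathbf{q}}) = q_1 - A_M(c_{\mathbf{q}}) = -\dl_\pm(\mathbf{q}, L(I_M))$, $f(c_{\mathbf{p}}) = -\dl_\pm(\mathbf{p}, L(I_M))$, and $-f(c_{\mathbf{r}}) = \dl_\pm(\mathbf{r}, L(I_M))$. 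Symmetrically, $g = B_R - B_M$ increases then decreases with maximum at $c_{\mathbf{s}}$, so $\sup_c|g(c)| = \max\{-g(c_{\mathbf{q}}),\, -g(c_{\mathbf{p}}),\, g(c_{\mathbf{s}})\} = \max\{\dl_\pm(\mathbf{q}, U(I_M)),\, \dl_\pm(\mathbf{p}, U(I_M)),\, -\dl_\pm(\mathbf{s}, U(I_M))\}$. Taking the max of these two expressions gives exactly the stated identity, and the case analysis you anticipated (matching each negative signed term against a dominating positive one) never arises.
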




\begin{remark*}
    For a rectangle module $R$ which does not satisfy the condition that $I_R$ is contained in the bounding rectangle of $I_M$ as required by Proposition~\ref{thm:form_di}, we show that there exists a rectangle module $R'$ with $I_R'$ contained in the bounding rectangle of $I_M$ such that $d_I(M, R')\leq \di(M, R)$. This claim is stated and proved as Proposition~\ref{prop:opt_rectangle_in_bounding_rectangle} in Appendix. 
\end{remark*}
    
    Based on above observations,
now we give the algorithm to compute $\boxed{M}^*$. Our algorithm searches a finite partition of $\mathbb{R}^2$ to place the four corners of the optimal rectangle while computing the distance between the rectangle and the module in question exactly. 
Denote $I_R=[\rr, \s]$ with $\mathbf{p}, \mathbf{q}$ being its top-left and bottom-right corners respectively. Denote the bounding rectangle of $I_M$ as $B$.
Consider all the lines $\Delta_{\vv}, \vv\in V(I_M)$, which partition $B$ into a collection of bands within $B$, denoted as $\mathcal{D}=\{B_1, B_2,\cdots, B_\ell\}$. 
The steps of our algorithm are as follows:
\begin{enumerate}
    \item Compute a finite partition of $B$ with bands in $\mathcal{D}$ as described above.
    \item Initialize $\boxed{M}^*\leftarrow 0$ to be the zero module.
    \item For each quadruple $i,j,k,l$, under constraints $\mathbf{p}\in {B_i}, \mathbf{q} \in B_{j}, \mathbf{r}\in {B_k}, \mathbf{s}\in {B_l}$, find the optimal rectangle approximation $R^*=\argmin_{I_R=[\mathbf{r}, \mathbf{s}]}{\di(M, R)}$
     according to the Equation~\ref{eq:eq_di} in Proposition~\ref{thm:form_di}. 
    If $\di(M, R^*)\leq \di(M, \boxed{M}^*)$, update $\boxed{M}^*\leftarrow R^*$.
    \item Return $\boxed{M}^*.$
\end{enumerate}

\begin{figure}[ht]
\centerline{\includegraphics[width=0.5\textwidth]{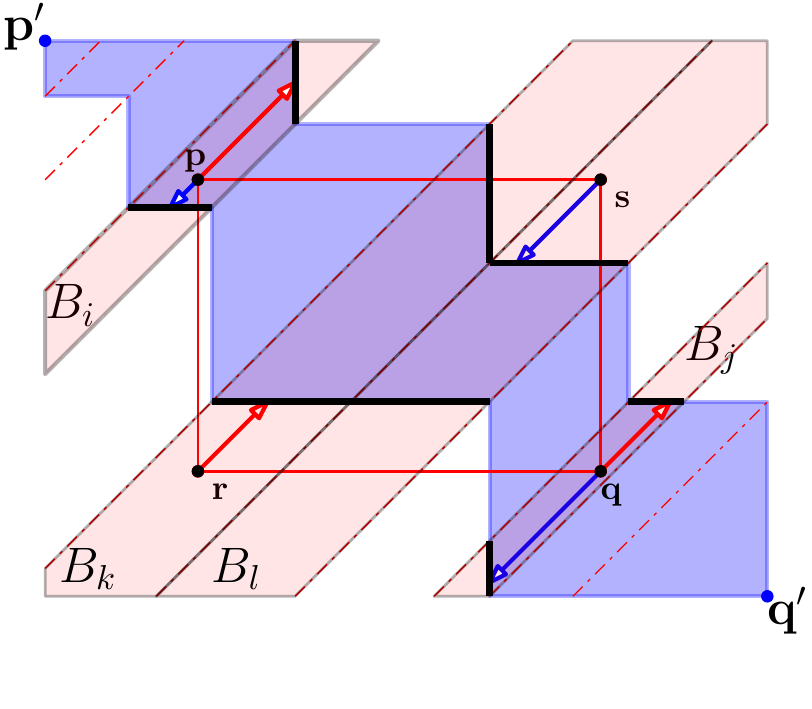}}
\caption{$I_M=$Blue region. Red rectangle, $I_R=[\rr,\s]$, is a candidate rectangle with constraints $\mathbf{p}\in B_i, \mathbf{q} \in B_{j}, \mathbf{r}\in B_k, \mathbf{s}\in B_l$. Six arrowed line segments represent the six terms for computing $\dH(M|_{\Delta_R}, R|_{\Delta_R})$.
}
\label{fig:construction2}
\end{figure}


In each iteration of step 3, the optimization problem can be solved by linear programming since 
all constraints and the terms related to $\di(M, R)$ in Equation~\ref{eq:eq_di} can be expressed as linear expressions on the coordinates of 
$\mathbf{p}=(p_1, p_2)$ and $\mathbf{q}=(q_1, q_2)$  
along with $\max$ and $\min$ operators.
For example, $\di(M|_{\bar{\Delta}_R}, 0)=\max\{\|\Delta_\vv\cap I_M\|_\infty : \vv\in V(I_M\cap \bar{\Delta}_R)\}$ where $\|\Delta_\vv\cap I_M\|_\infty$ is the length of the line segment under infinity norm. Essentially these are the longest line segments coming from the intersections between $I_M\cap \bar{\Delta}_R$ and diagonal lines passing through all vertices of $I_M\cap \bar{\Delta}_R$. 
Note that almost all these lengths $\{\|\Delta_\vv\cap I_M\|_\infty : \vv\in V(I_M\cap \bar{\Delta}_R) \}$ are constant numbers determined by a vertex in $V(I_M)$ except the ones on the boundary of $\bar{\Delta}_R$ which are determined by $\Delta_\p$ and $\Delta_\q$ that may not pass through any vertex in $V(I_M)$. These two line segments on the boundary of $\Delta_R$ can be represented by linear equations in terms of coordinates of $\p=(p_1, p_2)$ and $\q=(q_1, q_2)$. For example, $\|\Delta_\p\cap I_M\|_\infty$ can be represented by $\dl_\pm(\mathbf{p}, U(I_M))-\dl_\pm(\mathbf{p}, L(I_M))$. Since $\p$ is within a band $B_i$ which is determined by two consecutive vertices of $V(I_M)$, the intersections $B_i\cap U(I_M), B_i\cap L(I_M)$ are either horizontal or vertical line segments. If both of them are horizontal or vertical, then $\|\Delta_\p\cap I_M\|_\infty=c$ is a constant number equal to the difference between the coordinates of these two lines. If they are in different directions, say $B_i\cap U(I_M)$ is on the horizontal line $\{y=b\}$ and $B_i\cap L(I_M)$ is on the vertical line $\{x=a\}$, then $\dl_\pm(\mathbf{p}, U(I_M))=b-p_2$ and $\dl_\pm(\mathbf{p}, L(I_M))=a-p_1$. Therefore, $\|\Delta_\p\cap I_M\|_\infty=b-a-(p_2-p_1)$.

To compute $\di({M|_{\bar{\Delta}_R}}, 0)$ and $\di({M|_{\Delta_R}}, 0)$ efficiently, one can pre-compute $\|\Delta_\vv\cap I_M\|_\infty$ for each $\vv\in V(I_M)$ 
and order them according to $v_2-v_1$ where $\vv=(v_1,v_2)$. Then for each pair of $\vv_i, \vv_j\in V(I_M)$, compute and store the values
\begin{eqnarray*}
\diam(i,j)\triangleq\max{\{\|\Delta_{\vv_k}\cap I_M\|_\infty: i\leq k\leq j\}}\mbox{ and}\\ \overline{\diam}(i,j)\triangleq\max{\{\|\Delta_{\vv_\ell}\cap I_M\|_\infty:  \ell<i \mbox{ or } \ell>j\}}. 
\end{eqnarray*}
This can be done in $O(|V(I_M)|^2)$ time.
For a given $\Delta_R$ between boundaries $\Delta_\p$ and $\Delta_\q$, we have $\vv_i, \vv_j\in V(I_M)\cap \Delta_R$ such that $\Delta_\p$ is within the band $\Delta_{\vv_i}, \Delta_{\vv_{i-1}}$ and $\Delta_\q$ is within the band $\Delta_{\vv_j}, \Delta_{\vv_{j+1}}$. 
See Figure~\ref{fig:di_m_0_illustrate} for an illustration.
Then we have 
\begin{eqnarray*}
\di({M|_{\bar{\Delta}_R}}, 0)=\|\Delta_\p\cap I_M\|_\infty\vee\|\Delta_\q\cap I_M\|_\infty\vee\diam(i,j) \mbox{ and}\\ \di({M|_{\Delta_R}}, 0)=\|\Delta_\p\cap I_M\|_\infty \vee\|\Delta_\q\cap I_M\|_\infty\vee\overline{\diam}(i,j). 
\end{eqnarray*}
Since we pre-compute and store $\diam(i,j)$ and $\overline{\diam}(i,j)$, each iteration incurs only $O(1)$ time for accessing constant number of them.
\begin{figure}[ht]
\centerline{\includegraphics[width=0.5\textwidth]{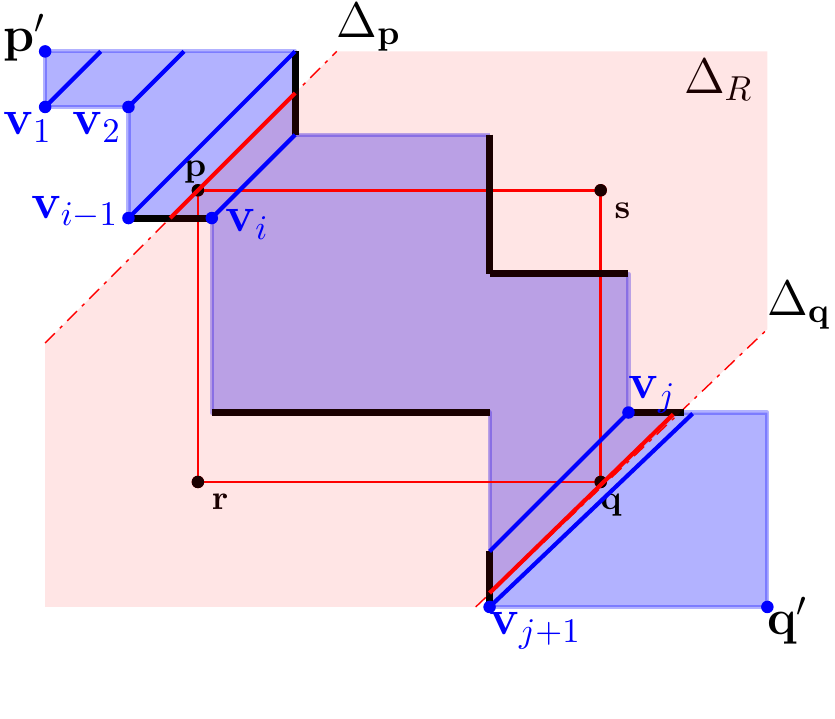}}
\caption{The blue diagonal line segments are $\Delta_{\vv}\cap I_M$, and the red diagonal line segments are $\Delta_{\p}\cap I_M$ and $\Delta_{\q}\cap I_M$.}
\label{fig:di_m_0_illustrate}
\end{figure}

{\bf Complexity.}
Step 3 in the algorithm takes constant time. The last term $\dH(M|_{\Delta_R}, R|_{\Delta_R})$ involves constant number of linear equations. The third term $\di(R|_{\Delta_R}, 0)=\min\{q_1-p_1, p_2-q_2\}$ can be computed in constant time. First two terms $\di(M|_{\bar{\Delta}_R}, 0)$ and $\di(M|_{\Delta_R}, 0)$ can also be evaluated
in constant time if we pre-compute the lengths of a collection of line segments as described above. In particular, only two line segments one on each $\Delta_\p$ and $\Delta_\q$ relate to  optimizing variables $\p$ and $\q$ and hence enter into our optimization which can be
solved in $O(1)$ time. To see this, observe that
the feasible region of $\di(M, R)$ is determined by a constant number of linear equations. An optimal solution is then given by a corner point on the boundary of the feasible region. With a brute-force computation, we can determine all these corner points and then take the optimal one in $O(1)$ time. One can also divide the entire optimization problem into several sub-problems with different feasible regions such that each sub-problem is a min-max problem which can be solved by linear programming, and the optimal solution becomes the minimum over all the solutions of sub-problems. The time complexity still remains $O(1)$ though this division into sub-problems can be more efficient in practice.

With $|\mathcal{D}|=O(|V(I_M)|)$ being the total number of vertices of the input interval $I$, the time complexity of the whole algorithm is $O(|V(I_M)|^4)$.

Now we can compute $\boxed{M}^*$ for any interval decomposable module $M=\oplus M_i$ by computing $\boxed{M_i}^*$ for each $i$. We have a result similar to Proposition~\ref{prop:db_bound0} 
but with a tighter bound on the additional $\epsilon$ terms.

\begin{theorem}\label{thm:db_bound1}
For two interval decomposable modules $M$ and $N$, 
\begin{equation*}
\di(M,N)\leq \db(M,N) \leq \db(\boxed{M}^*, \boxed{N}^*) + \epsilon_M^* + \epsilon_N^* \leq 3\di(M, N) + 4(\epsilon_M^* + \epsilon_N^*)
\end{equation*}
where $\epsilon_M^*=\db(M, \boxed{M}^*)$ and $\epsilon_N^*=\db(N, \boxed{N}^*)$.
\end{theorem}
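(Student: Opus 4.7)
The plan is to reuse the argument of Proposition~\ref{prop:db_bound0} almost verbatim, simply substituting the optimal rectangle decomposable approximations $\boxed{M}^*, \boxed{N}^*$ produced by the algorithm of this subsection in place of the constructions $\boxed{M}, \boxed{N}$, and substituting the tighter quantities $\epsilon_M^* = \db(M, \boxed{M}^*)$ and $\epsilon_N^* = \db(N, \boxed{N}^*)$ in place of $\epsilon_M, \epsilon_N$. Since $\boxed{M}^*$ and $\boxed{N}^*$ are by construction rectangle decomposable $\Real^2$-modules, Theorem~\ref{thm:rectangle_stability} still applies to the pair $(\boxed{M}^*, \boxed{N}^*)$ with $n=2$.

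The first inequality $\di(M,N) \leq \db(M,N)$ is the general fact recorded earlier. For the second inequality I would apply the triangle inequality of $\db$ to chain through the two approximations:
\[
\db(M,N) \leq \db(M, \boxed{M}^*) + \db(\boxed{M}^*, \boxed{N}^*) + \db(\boxed{N}^*, N) = \db(\boxed{M}^*, \boxed{N}^*) + \epsilon_M^* + \epsilon_N^*.
\]

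For the last inequality I would first invoke Theorem~\ref{thm:rectangle_stability} with $n=2$ to get $\db(\boxed{M}^*, \boxed{N}^*) \leq 3\,\di(\boxed{M}^*, \boxed{N}^*)$, and then use the triangle inequality for $\di$ together with $\di \leq \db$ to write
\[
\di(\boxed{M}^*, \boxed{N}^*) \leq \di(\boxed{M}^*, M) + \di(M,N) + \di(N, \boxed{N}^*) \leq \epsilon_M^* + \di(M,N) + \epsilon_N^*.
\]
Combining these two lines yields $\db(\boxed{M}^*, \boxed{N}^*) \leq 3\,\di(M,N) + 3(\epsilon_M^* + \epsilon_N^*)$, and adding the residual $\epsilon_M^* + \epsilon_N^*$ from the second inequality gives the claimed upper bound $3\,\di(M,N) + 4(\epsilon_M^* + \epsilon_N^*)$.

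No essentially new idea is required beyond the proof of Proposition~\ref{prop:db_bound0}. The only point worth checking is that $\epsilon_M^*$ and $\epsilon_N^*$, although defined via $\db$, can be inserted into the $\di$-triangle inequality above; this is ensured by $\di \leq \db$. Thus the main content of the theorem lies not in the proof, which is a formal mimic of the earlier one, but in the fact that $\epsilon_M^*, \epsilon_N^*$ are by definition no larger than the Construction~1 values $\epsilon_M, \epsilon_N$ and are now computable in polynomial time via the algorithm of this subsection, yielding a strictly sharper lower bound on $\di(M,N)$.
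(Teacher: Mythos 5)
Your proof is correct and follows exactly the approach the paper intends: the theorem is presented as a drop-in sharpening of Proposition~\ref{prop:db_bound0}, and its justification is precisely the proof of that proposition with $\boxed{M}^*, \boxed{N}^*$ substituted for $\boxed{M}, \boxed{N}$, with $\epsilon_M^*, \epsilon_N^*$ playing the role of $\epsilon_M, \epsilon_N$, and with the bridge $\di(M,\boxed{M}^*)\leq\db(M,\boxed{M}^*)=\epsilon_M^*$ (from $\di\leq\db$) feeding the $\di$-triangle inequality. Your observation that Theorem~\ref{thm:rectangle_stability} still applies because $\boxed{M}^*,\boxed{N}^*$ are rectangle decomposable, and that the real content is $\epsilon_M^*\leq\epsilon_M$, is exactly the point the paper is making.
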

\begin{remark}
When $M, N$ are rectangle decomposable modules, $\epsilon_M^*=\epsilon_N^*=0$. Therefore, our result is a generalization of Bjerkevik's~\cite{bjerkevik2021stability} stability theorem~\ref{thm:rectangle_stability}.

From this theorem we can see that $\frac{1}{3}\db(M,N) - \frac{4}{3} (\epsilon_M^* + \epsilon_N^*) \leq \di(M,N)\leq \db(M,N)$.
In general, it can be hard to approximate $\di$ by a constant factor. Bjerkevik et al.~\cite{Bjerkevik2020} shows that approximating interleaving distance within a constant factor less than 3 is NP-hard. This additional $\epsilon$ terms may be inevitable in which case this may turn out to be a good polynomial-time computable approximation on $\di$. Also, this bound points to where the gap between $\db$ and $\di$ comes from. These $\epsilon$ terms depend on the structures of $M$ and $N$ themselves, which measure how far an interval decomposable module is from a rectangle decomposable module. 

\end{remark}

\section{Generalized matching distance}\label{sec:general_approx}
In this section, we give an application of our rectangle approximation to define a distance on general 2-parameter persistence modules which can be viewed as a generalization of 
matching distance~\cite{biasotti2011new,landi2014rank}. 
As we know, matching distance and bottleneck distance are most commonly used as a lower and an upper bound respectively for interleaving distance. 
Following~\cite{landi2014rank}, matching distance $\dmatch$ is defined as follows.
\begin{definition}\label{def:dmatch}
For each line $\mathbf{L}$ in $\Real^2$, use $\mathbf{L}(t)=t\mathbf{a}+\mathbf{b}$ 
with $a_1\vee a_2=1, a_1\wedge a_2>0, b_1+b_2=0$ as its 1-parameterization over $t\in \Real$. For a module $M$, define a 1-parameter persistence module  $M_\mathbf{L}(t)\triangleq M(\mathbf{L}(t))$. Then
\begin{equation*}
    \dmatch(M, N)\triangleq\sup_{\mathbf{L}:\mathbf{L}(t)=t\mathbf{a}+\mathbf{b}}(a_1\wedge a_2)\db(M_{\mathbf{L}}, N_\mathbf{L}).
\end{equation*}
\end{definition}


Unfortunately, both matching and bottleneck distances have some drawbacks. 
Although matching distance can be computed (approximately or exactly~\cite{kerber2018exact_compute_matching_distance}) in polynomial time, as indicated in~\cite{cerri2016coherent, cerri2019coherent_geometrical}, it forgets the natural association between the homological properties of slices that are close to each other. On the other hand, the bottleneck distance emphasizes too much on the natural association among slices, which makes the matching obtained by bottleneck distance coherent within each indecomposable component but not stable with respect to the interleaving distance that considers the entire module.

In this section, we propose a general strategy to define a distance $\dbmm^\CC$ depending on a chosen convex covering $\CC$ on $\Real^2$ (defined later) which captures the natural association of slices like bottleneck distance but still enjoys some stability-like property based on our previous results as follows
\begin{equation}
    \dmatch(M, N)\leq \dbmm^{\CC}(M,N)\leq \min\{3\di(M, N)+\epsilon^*(M, N),\, \db(M, N)\}
\end{equation}
where $\epsilon^*(M, N)$ is determined by the covering $\CC$ as
\begin{equation}\label{eq:epsilon_star_def}
\epsilon^*(M, N)=4\cdot\sup_{C\in \CC, \mathbf{a}:a_1\wedge a_2=1}\db(M^\aaa|_{C}, \boxed{M^\aaa|_{C}}^*)+\db(N^\aaa|_{C}, \boxed{N^\aaa|_{C}}^*).
\end{equation}
Here $M^\aaa$ is a scaled module of $M$ defined as 

\begin{definition}\label{def:scaled_module}
    $M^\aaa(\uu)=M(\uu\odot\aaa)$ where $\uu\odot\aaa=(u_1a_1, u_2a_2)$ known to be the Hadamard product.
     See Figure~\ref{fig:scaled} for an illustration.
\end{definition}

By carefully choosing the covering $\CC_\alpha$ for any $\alpha\in[0,1]$, we can 
further get a distance $\dbmm^{\CC_\alpha}$ that enjoys $(3+\alpha)$-stability 
as follows:
\begin{equation}
    \dmatch(M, N)\leq \dbmm^{\CC_\alpha}(M,N)\leq \min\{(3+\alpha)\di(M, N), \; \db(M, N)\}.
\end{equation}
We will show that the matching distance can be viewed as a special case of our construction when $\CC$ is chosen to be the set containing all slope-1 lines in $\Real^2$. 
The main idea is to cover the entire plane $\Real^2$ by some collection of bands $\CC$. By carefully choosing these bands we make sure that the restricted modules $M|_C$ and $N|_C$ within each band $C\in \CC$ are interval decomposable. Then, locally we can compute the bottleneck distance which bounds the interleaving distance in each band according to our previous results. 
As in~\cite{cerri2011new}, our proposed distance can be approximated by choosing a finite set of directions. Recently, in~\cite{kerber2018exact_compute_matching_distance}, a polynomial time algorithm has been presented for computing the matching distance exactly. It remains open
if our proposed distance can also be computed exactly using similar ideas. 



We call a set $\CC$ of convex sets a convex covering of $\Real^2$ 
if the union of the sets of $\CC$ equals $\Real^2$.
For any convex set $C\subseteq \Real^n$, we have induced $C$-modules $M|_C$ and $N|_C$. Therefore, we can define $\di(M|_C, N|_C)$ with the following property:
\begin{fact}\label{fact:subset_convex_interleaving}
For two convex sets $C$ and $C'$, if $C'\subseteq C$, then $\di(M|_{C'}, N|_{C'} )\leq \di(M|_C, N|_C)$.
\end{fact}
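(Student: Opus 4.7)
The plan is to prove this by direct restriction of interleaving maps. Suppose $M|_C$ and $N|_C$ are $\delta$-interleaved for some $\delta \geq 0$, witnessed by families of maps $\{\phi(\x)\}$ and $\{\psi(\x)\}$ defined on $C$ (where the relations are required only for $\x$ with $\x$ and $\x+\vec\delta$ both in $C$, as is standard for the interleaving distance on a convex subset). Since $C' \subseteq C$, every $\x \in C'$ with $\x + \vec\delta \in C'$ automatically satisfies $\x, \x+\vec\delta \in C$, so the restricted families $\{\phi(\x)\}_{\x \in C'}$ and $\{\psi(\x)\}_{\x \in C'}$ are well-defined and the two interleaving conditions hold on $C'$ as a sub-collection of the conditions that hold on $C$.

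Hence $M|_{C'}$ and $N|_{C'}$ are $\delta$-interleaved. Taking the infimum over all admissible $\delta$ yields
\begin{equation*}
\di(M|_{C'}, N|_{C'}) \;\leq\; \inf\{\delta \mid M|_C, N|_C \text{ are } \delta\text{-interleaved}\} \;=\; \di(M|_C, N|_C).
\end{equation*}

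The only mildly delicate point is ensuring that the restricted maps really do form a valid interleaving on $C'$, which requires that the constraint regions for the naturality squares and the composition identities on $C'$ are contained in the corresponding regions on $C$; this follows immediately from $C' \subseteq C$ and needs no use of convexity beyond what is already baked into the definition of interleaving on a convex subset. There is no substantive obstacle here — the fact is essentially a functoriality statement for the restriction functor $(\cdot)|_{C'}$ composed with $(\cdot)|_C$, and the only thing to check is that this functor is nonexpansive with respect to $\di$, which the above one-line argument establishes.
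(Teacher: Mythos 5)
The paper states this as a \emph{Fact} and gives no proof, treating it as immediate from the definition of interleaving on a (convex) sub-poset. Your argument — restrict the interleaving families to $C'$ and observe that the required naturality and composition identities on $C'$ form a subset of those already holding on $C$ — is exactly the canonical justification one would supply, and it is correct.
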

We need the following definitions for further development.
\begin{definition}
Define $\PAR(\Real^2)\triangleq\{\CC\mid \CC \mbox{ is a convex covering on }\Real^2 \}$. For two convex coverings $\CC \mbox{ and } \CC'$, we say $\CC'$ is a refinement of $\CC$, denoted as $\CC'\subseteq \CC$,  if $\forall C'\in \CC', \exists C\in \CC, C'\subseteq C$.
For $\CC\in \PAR(\Real^2)$, define
$\di^\mathcal{C} (M,N)\triangleq \sup_{C\in \mathcal{C}} \di (M|_C, N|_C)$ and $\db^\mathcal{C} (M,N)\triangleq \sup_{C\in \mathcal{C}} \db (M|_C, N|_C)$ . 
\end{definition}

\begin{remark}
The refinement relation defines a partial order on $\PAR(\Real^2)$. The minumum covering is the discrete covering consisting of all singleton sets of $\Real^2$. The maximum covering is $\{\Real^2\}$. Furthermore, $\di^{(\cdot)}$ can be viewed as an order preserving mapping from poset $\PAR(\Real^2)$ to pseudometric functions on bimodules. That is, 
\end{remark}

\begin{proposition}\label{prop:order_preserving_function_on_PAR}
    For $\CC'\leq \CC\in \PAR(\Real^2)$, $ d^\mathcal{C'}_I \leq  d^\mathcal{C}_I$. Specifically, $\di^{\{\Real^2\}}=\di$ and $\db^{\{\Real^2\}}=\db$.
\end{proposition}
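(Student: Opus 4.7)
The plan is to prove the two assertions separately, and the first one is essentially an immediate consequence of Fact~\ref{fact:subset_convex_interleaving} combined with taking suprema.

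First I would handle the monotonicity claim. Suppose $\CC' \leq \CC$, meaning every $C' \in \CC'$ is contained in some $C \in \CC$. Fix any $C' \in \CC'$ and pick a witness $C \in \CC$ with $C' \subseteq C$. By Fact~\ref{fact:subset_convex_interleaving},
\[
\di(M|_{C'}, N|_{C'}) \leq \di(M|_C, N|_C) \leq \sup_{D \in \CC} \di(M|_D, N|_D) = \di^{\CC}(M,N).
\]
Since this bound holds uniformly for every $C' \in \CC'$, taking the supremum on the left gives $\di^{\CC'}(M,N) \leq \di^{\CC}(M,N)$. The same argument works verbatim for $\db$, provided one notes that restriction to a convex subposet preserves the indecomposable decomposition structure well enough for Fact~\ref{fact:subset_convex_interleaving} to apply (one can either quote an analogous statement for $\db$ or observe that a $\delta$-matching of $M$ and $N$ induces a $\delta$-matching of $M|_{C'}$ and $N|_{C'}$ by restriction).

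Second, for the identification $\di^{\{\Real^2\}} = \di$ (and similarly for $\db$), the covering has a single element $C = \Real^2$, so $M|_C = M$ and $N|_C = N$, and the supremum collapses:
\[
\di^{\{\Real^2\}}(M,N) = \di(M|_{\Real^2}, N|_{\Real^2}) = \di(M,N),
\]
and analogously for $\db$.

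The main obstacle, if any, is the $\db$ version of Fact~\ref{fact:subset_convex_interleaving}, since restriction of a persistence module to a convex subposet can change how its Krull--Schmidt summands split (an interval might break into several pieces when intersected with $C'$). I would address this by showing directly that a $\delta$-matching between the full decompositions of $M$ and $N$ on $\Real^2$ induces a $\delta$-matching on the refined decompositions of $M|_{C'}$ and $N|_{C'}$: each indecomposable summand of $M|_{C}$ restricts to a direct sum of indecomposables on $C' \subseteq C$, and the $\delta$-interleaving hypothesis restricts along with it, so the induced matching and triviality conditions still hold. This gives $\db(M|_{C'}, N|_{C'}) \leq \db(M|_C, N|_C)$ and completes the argument.
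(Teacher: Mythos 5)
Your proof of what the proposition actually claims is correct and is essentially the only sensible approach: monotonicity of $\di^{\CC}$ follows by applying Fact~\ref{fact:subset_convex_interleaving} to a witness $C\supseteq C'$ and taking suprema, and the equalities $\di^{\{\Real^2\}}=\di$, $\db^{\{\Real^2\}}=\db$ are immediate since $M|_{\Real^2}=M$. The paper gives no explicit proof, treating the result as a direct consequence of the definitions and Fact~\ref{fact:subset_convex_interleaving}, which is exactly what you did.

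However, you devote most of the write-up to proving $\db^{\CC'}\leq\db^{\CC}$, which is not part of the proposition (note the statement says only $d^{\CC'}_I\leq d^{\CC}_I$, then records two pointwise equalities as a ``specifically''). Your proposed fix for the Krull--Schmidt obstruction has a genuine gap: if $M_i$ is matched to $N_{\mu(i)}$ with $\di(M_i,N_{\mu(i)})\leq\delta$, restriction gives $\di\bigl(M_i|_{C'},N_{\mu(i)}|_{C'}\bigr)\leq\delta$, but once $M_i|_{C'}=\bigoplus_k A_k$ and $N_{\mu(i)}|_{C'}=\bigoplus_l B_l$ split into several indecomposables, you still need a $\delta$-matching between the $A_k$'s and the $B_l$'s. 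That is asking for $\db\leq\di$ at the level of the restricted modules, which is precisely the hard direction of algebraic stability and is false in general for multi-parameter modules. So the ``induced matching'' does not fall out of restriction alone. Since the paper never asserts this inequality for $\db$ (and where it later needs $\tdb$ monotonicity in Proposition~\ref{prop:tdi=di} it works with a very restricted class of coverings by bands, where interval summands do not split), you should simply drop this portion of the argument.
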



Most of $\di^\CC$ are hard to analyze in general which prompts us
to construct some special covering $\CC$.
Consider a finite subset $\SSet=\{\s^{(1)}, \s^{(2)}, \cdots, \s^{(l)}\}\subset\Real^2$ ordered by $\Delta_{\s^{(i)}}$, which means $\s^{(i)}\leq \s^{(j)}$ if $s^{(i)}_2 - s^{(i)}_1\leq s^{(j)}_2-s^{(j)}_1$. All these $\Delta_{\s^{(i)}}$ partition $\Real^2$ into finitely many regions. We collect all these 
regions with boundaries to construct the covering $\CC_\SSet$ for the set $\SSet$. For $\SSet=\emptyset$, we define $\CC_{\emptyset}=\{\Real^2\}$. We extend our notation to incorporate
the special set $\SSet=\Real^2$ by defining $\CC_{\Real^2}=\{\Delta_\s \mid \s\in \Real^2\}$, which is the collection of all diagonal lines in $\Real^2$. By Proposition~\ref{prop:order_preserving_function_on_PAR}, we immediately have 
\begin{proposition}
    For any finite set $\SSet\subset \Real^2$, $\CC_{\Real^2}\leq \CC_{\SSet} \leq \CC_\emptyset$. 
\end{proposition}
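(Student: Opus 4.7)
The plan is to verify both inequalities directly from the definition of refinement, namely that $\CC' \leq \CC$ means every member of $\CC'$ is contained in some member of $\CC$. The right inequality $\CC_\SSet \leq \CC_\emptyset$ is immediate: $\CC_\emptyset = \{\Real^2\}$, and every element of any covering of $\Real^2$ is by definition a subset of $\Real^2$.

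For the left inequality $\CC_{\Real^2} \leq \CC_\SSet$, the key observation I would make is that every diagonal line is characterized by a single scalar: the line $\Delta_\x$ consists of points $\x + \vec\alpha = (x_1 + \alpha, x_2 + \alpha)$, so every point on it satisfies $y_2 - y_1 = x_2 - x_1$. Hence the diagonal lines are precisely the level sets of the map $\sigma(\y) = y_2 - y_1$, and two diagonals $\Delta_{\s^{(i)}}$ and $\Delta_{\s^{(j)}}$ are distinct iff $\sigma(\s^{(i)}) \neq \sigma(\s^{(j)})$. After sorting $\SSet$ by $\sigma$, the partition of $\Real^2$ induced by the finitely many lines $\Delta_{\s^{(i)}}$ consists exactly of the closed bands $\{\y : \sigma(\s^{(i)}) \leq \sigma(\y) \leq \sigma(\s^{(i+1)})\}$ together with the two unbounded extremal bands.

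Given any $\s \in \Real^2$, the value $\sigma(\s)$ either lies between two consecutive values of $\sigma$ on $\SSet$ (possibly equal to one of them) or lies beyond the extremes. In each case, the entire line $\Delta_\s$, being the level set $\sigma^{-1}(\sigma(\s))$, is contained in the corresponding closed band $C \in \CC_\SSet$. This witnesses $\CC_{\Real^2} \leq \CC_\SSet$ and completes the argument. There is no real obstacle here; the only subtlety worth stating explicitly is that bands are taken \emph{with boundaries}, so a diagonal line coinciding with some $\Delta_{\s^{(i)}}$ still lies in (the boundary of) a valid band of $\CC_\SSet$.
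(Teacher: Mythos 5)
Your proof is correct and is exactly the direct unwinding of the refinement definition that the paper treats as immediate (the paper asserts this without a written proof, citing Proposition~\ref{prop:order_preserving_function_on_PAR}, which is really about the induced metrics rather than the ordering on coverings itself). The observation that diagonal lines are level sets of $\sigma(\y)=y_2-y_1$ and hence cannot cross a boundary $\Delta_{\s^{(i)}}$ is the right way to see that each $\Delta_\s$ lies entirely within one closed band of $\CC_\SSet$, and the reverse inequality is trivially witnessed by $\Real^2\in\CC_\emptyset$.
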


Before we define our generalized matching distance, we need another notation. Let $\mathbf{A}=\{\aaa=(a_1, a_2)\in\Real^2_+\mid a_1\wedge a_2=1 \}$.
For a vector $\mathbf{a}=(a_1, a_2)\in \mathbf{A}$, recall that the scaled module $M^{\aaa}$ of $M$ is $M^{\aaa}(\mathbf{u})\triangleq M(\mathbf{a}\odot\mathbf{u})=M(a_1\cdot u_1, a_2\cdot u_2)$ (Figure~\ref{fig:scaled}). 
Note that $M^{(1,1)}=M$. Similarly, we also define the scaled covering $\CC_\SSet^\aaa\triangleq \CC_{\aaa^{-1}\odot \SSet}$ for a covering $\CC_\SSet$, where ${\aaa^{-1}\odot \SSet}=\{\aaa^{-1} \odot \s\mid \s\in \SSet\}$. Intuitively, scaled modules and scaled coverings together account for
all directions $\mathbf{a}\in \mathbf{A}$ needed to be considered for matching distance even though
coverings align along the diagonal direction.

\begin{figure}[ht]
\centerline{\includegraphics[width=0.5\textwidth]{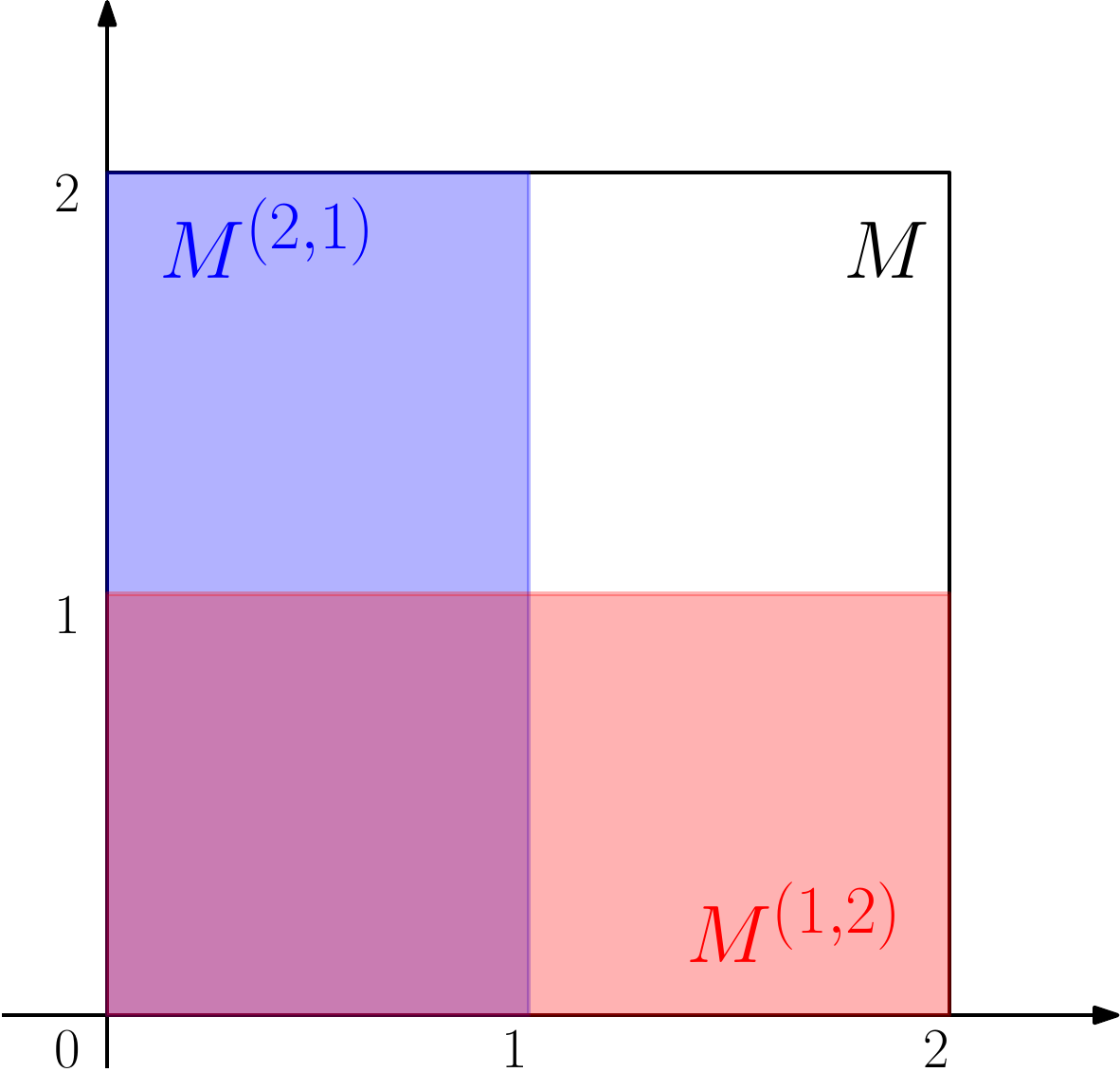}}
\caption{$M$ is a module on $[0,2]\times [0,2]$,
the blue region and red region are the underlying intervals of $M^{(2,1)}$ and $M^{(1,2)}$ respectively.}
\label{fig:scaled}
\end{figure}

Now we construct the generalized matching distance.
\begin{definition}
    Given a covering $\CC_\SSet$, for two persistence modules $M, N$, define 
\begin{equation*}
    \tilde{\di}^{\CC_\SSet}(M, N)\triangleq \sup_{\aaa\in\mathbf{A}}\di^{\CC_\SSet^{\aaa}}(M^\aaa, N^\aaa) = \sup_{\aaa\in\mathbf{A}}\sup_{C\in\CC_\SSet^{\aaa}}\di(M^\aaa|_C, N^\aaa|_C)
\end{equation*}
 and 
\begin{equation*}
    \tilde{\db}^{\CC_\SSet}(M, N)\triangleq \sup_{\aaa\in\mathbf{A}}\db^{\CC_\SSet^{\aaa}}(M^\aaa, N^\aaa) = \sup_{\aaa\in\mathbf{A}}\sup_{C\in\CC_\SSet^{\aaa}}\db(M^\aaa|_C, N^\aaa|_C).
\end{equation*}
\end{definition}
Based on our definition, we have the following properties:
\begin{proposition}\label{prop:tdi=di}
\begin{enumerate*}
    \item For $\CC_{\SSet'}\leq \CC_\SSet$, $\tilde{\di}^{\CC_{\SSet'}}\leq \tilde{\di}^{\CC_\SSet}$ and $\tilde{\db}^{\CC_{\SSet'}}\leq \tilde{\db}^{\CC_\SSet}$;\\
    \item $\tdi^{\CC_{\Real^2}}=\tdb^{\CC_{\Real^2}}=\dmatch$; \\
    \item $\tdi^{\CC_\emptyset}=\di, \tdb^{\CC_\emptyset}=\db$.\\
\end{enumerate*}
\end{proposition}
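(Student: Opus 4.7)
The plan is to handle the three assertions in sequence, exploiting two common ingredients: the scaling operation $M\mapsto M^\aaa$ is an invertible functor that commutes with direct sums, and every $\aaa\in\mathbf{A}$ satisfies $a_1\wedge a_2 = 1$ so that both $a_1,a_2\geq 1$, which ensures scaling never artificially shrinks any distance.

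For assertion (1), I would first note that for finite subsets $\SSet\subseteq \SSet'$ of $\Real^2$, the extra points introduce additional diagonal dividing lines, so $\CC_{\SSet'}$ refines $\CC_{\SSet}$; the boundary case $\SSet=\Real^2$ (all diagonal lines) is the finest covering. The entrywise map $\aaa^{-1}\odot(\cdot)$ preserves set inclusion, so for every $\aaa\in\mathbf{A}$ we get $\CC_{\SSet'}^\aaa \leq \CC_{\SSet}^\aaa$ in $\PAR(\Real^2)$, and Proposition~\ref{prop:order_preserving_function_on_PAR} gives $\di^{\CC_{\SSet'}^\aaa}(M^\aaa,N^\aaa)\leq \di^{\CC_{\SSet}^\aaa}(M^\aaa,N^\aaa)$ pointwise in $\aaa$; taking $\sup_{\aaa\in\mathbf{A}}$ of both sides finishes the $\tdi$ case, and the $\tdb$ case is identical. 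Assertion (3) then follows directly: since $\CC_\emptyset^\aaa=\CC_\emptyset=\{\Real^2\}$ for any $\aaa$, we have $\tdi^{\CC_\emptyset}(M,N)=\sup_{\aaa\in\mathbf{A}}\di(M^\aaa,N^\aaa)$, and a $\delta$-interleaving $\phi_\x:M(\x)\to N(\x+\vec\delta)$ produces a $\delta$-interleaving of $M^\aaa, N^\aaa$ simply by composing with the internal map $N(\aaa\odot\uu+\vec\delta \to \aaa\odot\uu+\aaa\odot\vec\delta)$, which exists because $\aaa\odot\vec\delta\geq \vec\delta$ whenever $a_1,a_2\geq 1$. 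Hence $\di(M^\aaa,N^\aaa)\leq \di(M,N)$ with equality at $\aaa=(1,1)\in\mathbf{A}$, so the supremum is $\di(M,N)$; the bottleneck version is the same argument applied to a partial matching of indecomposables, using that scaling commutes with direct sums.

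The main work lies in assertion (2). Every restriction $M^\aaa|_{\Delta_\s}$ is a p.f.d.\ $1$-parameter persistence module, so Lesnick's isometry theorem gives $\di=\db$ on each slice, which immediately yields $\tdi^{\CC_{\Real^2}}=\tdb^{\CC_{\Real^2}}$. To identify this common value with $\dmatch$, I would set up a correspondence between indexing pairs $(\aaa,\s)$ and lines $\mathbf{L}(t)=t\aaa'+\mathbf{b}$ from Definition~\ref{def:dmatch}: setting $\aaa'=\aaa/(a_1\vee a_2)$ (so $a_1'\vee a_2'=1$ and $a_1'\wedge a_2' = 1/(a_1\vee a_2)$) and choosing $\mathbf{b}$ on the line with $b_1+b_2=0$ identifies $\Delta_\s$ in scaled coordinates with the geometric line $\mathbf{L}$. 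A direct substitution gives $M_\mathbf{L}(t)=M^\aaa|_{\Delta_\s}(t/(a_1\vee a_2))$, so reparameterization yields $\di(M_\mathbf{L},N_\mathbf{L})=(a_1\vee a_2)\,\di(M^\aaa|_{\Delta_\s},N^\aaa|_{\Delta_\s})$, and multiplying by the matching-distance weight $a_1'\wedge a_2'=1/(a_1\vee a_2)$ cancels the factor exactly, so the supremum over lines in $\dmatch$ equals the supremum over $(\aaa,\s)$ in $\tdi^{\CC_{\Real^2}}$. The main obstacle will be bookkeeping with the reciprocal normalizations ($a_1\wedge a_2=1$ in our construction versus $a_1'\vee a_2'=1$ in the matching distance) and verifying that the map $(\aaa,\s)\mapsto \mathbf{L}$ is surjective modulo the trivial reparameterization induced by sliding $\s$ along the diagonal $(1,1)$ direction.
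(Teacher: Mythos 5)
Your proof is correct and follows essentially the same route as the paper's: (1) via Proposition~\ref{prop:order_preserving_function_on_PAR} applied pointwise in $\aaa$ (implicitly reading $\CC_{\SSet'}\leq\CC_{\SSet}$ as $\SSet\subseteq\SSet'$ so that refinement persists under Hadamard scaling), (2) via the one-parameter isometry theorem together with a line-reparameterization identifying $M_{\LL}$ with a diagonal slice of a scaled module, and (3) via the explicit construction of scaled interleaving morphisms $\phi^{\aaa}, \psi^{\aaa}$ composed with the internal maps $N(\aaa\odot\x+\vec{\delta}\rightarrow\aaa\odot\x+\aaa\odot\vec{\delta})$. The only cosmetic differences are that in (2) you keep the $a_1'\vee a_2'=1$ normalization and track the cancellation of the $(a_1\vee a_2)$ factor explicitly, whereas the paper first reparameterizes to $a_1\wedge a_2=1$ to drop the weight, and that in (3) you spell out the bottleneck case (the scaling functor is invertible, commutes with direct sums, and preserves indecomposability, so the matching transfers), which the paper compresses to ``immediate from the first equality.''
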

\begin{proof}
    The first one is an immediate corollary from Proposition~\ref{prop:order_preserving_function_on_PAR}.
    Consider the second one. 
    The first equality comes from isometry theorem of 1-parameter persistence module. To show the second equality, recall Definition~\ref{def:dmatch} of $\dmatch$:
    \begin{equation*}
        \dmatch(M, N)\triangleq\sup_{\substack{\mathbf{L}:\mathbf{L}(t)=t\mathbf{a}+\mathbf{b},\\ a_1\vee a_2=1,\\ a_1\wedge a_2>0, \\b_1+b_2=0}}(a_1\wedge a_2)\db(M_{\mathbf{L}}, N_\LL)
    \end{equation*}
    where $M_{\mathbf{L}}$ is defined by $M_{\mathbf{L}}(t) = M(\mathbf{L}(t))$.

    First, note that, in the definition of $\dmatch$, we can remove the scalar weight term $(a_1\wedge a_2)$ by changing parameterization of $\LL(t)$ with $a_1\wedge a_2=1, b_1+b_2=0$.
    Now we obtain an equivalent definition of $\dmatch$ as follows.
    \begin{equation*}
        \dmatch(M, N)=\sup_{\substack{\mathbf{L}:\mathbf{L}(t)=t\mathbf{a}+\mathbf{b},\\ a_1\wedge a_2=1, \\b_1+b_2=0}}\db(M_{\mathbf{L}}, N_\mathbf{L}).
    \end{equation*}
    
    For each $\LL$, let $\LL'\triangleq\aaa^{-1}\odot\LL$ given by $\LL'(t)=t(1,1)+\aaa^{-1}\odot\mathbf{b}$.
    Observe that
    \begin{equation*}
    M^\aaa|_{\LL'}(\LL'(t))=M(\aaa\odot\LL'(t))=M(\aaa\odot\aaa^{-1}\odot\LL(t))=M(\LL(t))=M_\LL(t).
    \end{equation*}
    This means $\forall\LL:\LL(t)=t\aaa+\mathbf{b}$, the 1-parameter module $M_\LL$ and the corresponding $\aaa$-scaled model $\LL'$-module $M^{\aaa}|_{\LL'}$ are the same in the sense that $\forall t, M_\LL(t)=M^\aaa|_{\LL'}(\LL'(t))$. It is not difficult to check that the corresponding 
    linear maps are also the same. 
    Taking the supremum over all lines $\LL$, we get $\dmatch=\tdb^{\CC_{\Real^2}}$ by definition.
    
    For the third one, the second equality is an immediate result from the first equality. 
    For the first equality, 
    we need to show that $\di(M^{\aaa}, N^{\aaa})\leq\di(M, N), \forall\mathbf{a}$. Assume $M, N$ are $\delta$-interleaved with interleaving morphisms $\phi:M\rightarrow N_{\rightarrow \vec{\delta}}$ and $\psi:N\rightarrow M_{\rightarrow \vec{\delta}}$. We claim that $M^{\aaa}, N^{\aaa}$ 
    are $\delta$-interleaved witnessed by maps
    $\phi^{\aaa}$ and $\psi^{\aaa}$ between $M^{\aaa}$ and  $N^{\aaa}$ constructed as follows. 
    \begin{flalign*}
        \phi^{\aaa}(\x):=&N(\aaa\odot\x+\vec{\delta}\rightarrow \aaa\odot\x+\aaa\odot\vec{\delta})\circ\phi(\aaa\odot\x) \\
        \psi^{\aaa}(\x):=&M(\aaa\odot\x+\vec{\delta}\rightarrow \aaa\odot\x+\aaa\odot\vec{\delta})\circ\psi(\aaa\odot\x)
    \end{flalign*}
    Notice that $\aaa$ satisfies $a_1\wedge a_2=1$. So, $\aaa\odot\x+\vec{\delta}\leq \aaa\odot\x+\aaa\odot\vec{\delta}$. Therefore, all the arrows `$\rightarrow$' above are well-defined.
    The construction for $\phi^{\aaa}, \psi^{\aaa}$ are symmetric. 
    By commutative properties of $\phi, \psi$ and linear maps within $M$ and $N$, 
    it can be verified that  
    $\phi^{\aaa}, \psi^{\aaa}$ 
    indeed provide $\delta$-interleaving between 
    $M^{\aaa}$ and $N^{\aaa}$. We 
    leave details in~\ref{sec:missing_proofs2}.
\end{proof}

Therefore, $\dmatch$ can be viewed as a special case which bounds $\tdi$ and $\tdb$ from below.
Computing 
$\tdi$ or $\tilde{\db}$ in general can be intractable. 
For some $\CC_{\SSet}$ constructed in a specific way, we provide an efficient algorithm to approximate $\tdi^{\CC_{\SSet}}$ by
a finite subset $\hat{\mathbf{A}}\subseteq \mathbf{A}$.
The idea is to construct a set $\SSet$ for $M, N$ so that for any $\aaa\in \mathbf{A}$ and any $C\in \CC_\SSet^\aaa$, both $M^\aaa|_C$ and $N^\aaa|_C$ are interval decomposable. Based on such a set $\SSet$,
we can approximate $\tdb$ efficiently and utilize the results in the previous section to bound $\tdi$.
Before we give the construction of the set $\SSet$, we first introduce the results and the algorithm.

\begin{proposition}\label{prop:tdb_stability}
    Given two modules $M, N$ and a covering $\CC_\SSet$ such that $\forall\aaa\in \mathbf{A}, \forall C\in \CC_\SSet^\aaa$, both $M^\aaa|_C$ and $N^\aaa|_C$ are interval decomposable, we have
    \begin{align}
        \dmatch(M, N) \leq \tdi^{\CC_\SSet}(M, N)\leq \tdb^{\CC_\SSet}(M, N) &\leq \min\{3\tdi^{\CC_\SSet}(M, N)+\epsilon^*(M, N), \db(M, N) \}\\
        &\leq  \min\{3\di(M, N)+\epsilon^{\CC_\SSet}(M, N), \db(M, N) \}
    \end{align}
where 
\begin{equation} 
\epsilon^{\CC_\SSet}(M, N)=4\cdot\sup_{C\in \CC_\SSet, \mathbf{a}\in\mathbf{A}}\db(M^\aaa|_{C}, \boxed{M^\aaa|_{C}}^*)+\db(N^\aaa|_{C}, \boxed{N^\aaa|_{C}}^*).
\end{equation}
\end{proposition}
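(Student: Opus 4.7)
The plan is to prove each inequality in the displayed chain separately, reducing claims about $\tdi^{\CC_\SSet}$ and $\tdb^{\CC_\SSet}$ to slice-wise statements on $\di(M^\aaa|_C, N^\aaa|_C)$ and $\db(M^\aaa|_C, N^\aaa|_C)$ and then taking the supremum over $\aaa \in \mathbf{A}$ and $C \in \CC_\SSet^\aaa$.

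For the leftmost inequality $\dmatch(M, N) \leq \tdi^{\CC_\SSet}(M, N)$, I would invoke Proposition~\ref{prop:tdi=di} parts (1) and (2): since $\CC_{\Real^2} \leq \CC_\SSet$ in the refinement poset, $\tdi^{\CC_{\Real^2}} \leq \tdi^{\CC_\SSet}$, and $\tdi^{\CC_{\Real^2}}$ equals $\dmatch$. The next inequality $\tdi^{\CC_\SSet} \leq \tdb^{\CC_\SSet}$ is immediate from $\di \leq \db$ slice-wise followed by taking suprema.

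The heart of the proof is bounding $\tdb^{\CC_\SSet}$ from above in two different ways. For $\tdb^{\CC_\SSet}(M, N) \leq 3\,\tdi^{\CC_\SSet}(M, N) + \epsilon^*(M, N)$, the hypothesis that $M^\aaa|_C$ and $N^\aaa|_C$ are interval decomposable for every $(\aaa, C)$ puts Theorem~\ref{thm:db_bound1} into play slice-wise, yielding
\[
\db(M^\aaa|_C, N^\aaa|_C) \leq 3\,\di(M^\aaa|_C, N^\aaa|_C) + 4\bigl(\db(M^\aaa|_C, \boxed{M^\aaa|_C}^*) + \db(N^\aaa|_C, \boxed{N^\aaa|_C}^*)\bigr).
\]
Taking the supremum over $\aaa$ and $C$ bounds the left by $\tdb^{\CC_\SSet}(M, N)$, the first right-hand term by $3\,\tdi^{\CC_\SSet}(M, N)$, and the remaining sum exactly by $\epsilon^*(M, N)$, which, with the covering $\CC_\SSet$, equals $\epsilon^{\CC_\SSet}(M, N)$.

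For the companion bound $\tdb^{\CC_\SSet}(M, N) \leq \db(M, N)$, I would argue that neither scaling by $\aaa$ nor restricting to a convex $C$ can increase the bottleneck distance. Scaling preserves indecomposability and satisfies $\di(M_i^\aaa, N_j^\aaa) \leq \di(M_i, N_j)$ via the explicit interleaving construction used in the proof of Proposition~\ref{prop:tdi=di}(3), so any $\delta$-matching of $M, N$ transports to a $\delta$-matching of $M^\aaa, N^\aaa$. The restriction to $C$ is the delicate step: the general indecomposables of $M^\aaa$ and $N^\aaa$ may split into several interval summands upon restriction, so the matching has to be refined. I plan to invoke Krull-Schmidt on each $M^\aaa_i|_C$, which is forced to be interval decomposable since $M^\aaa|_C$ is, and then use the $\delta$-interleaving between $M^\aaa_i$ and $N^\aaa_{\mu(i)}$ together with Fact~\ref{fact:subset_convex_interleaving} to pair off the resulting interval summands while preserving the $\delta$-interleaving, with unmatched or trivializable components staying $\delta$-trivializable. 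The outermost inequality then follows from $\tdi^{\CC_\SSet} \leq \tdi^{\CC_\emptyset} = \di$, combining Proposition~\ref{prop:tdi=di}(1) and (3), together with the identification $\epsilon^*(M, N) = \epsilon^{\CC_\SSet}(M, N)$. The main obstacle is clearly the restriction step inside $\tdb^{\CC_\SSet} \leq \db$, since passing a matching from possibly non-interval indecomposables to a matching between interval pieces without blowing up the $\delta$-bound is not immediate from the bare definition of $\db$ and genuinely uses the interval-decomposability hypothesis.
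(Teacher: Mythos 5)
Your decomposition of the chain is sound for four of the five steps, and for those steps your route is the natural one. The inequality $\dmatch\leq\tdi^{\CC_\SSet}$ via Proposition~\ref{prop:tdi=di}(1)--(2), $\tdi^{\CC_\SSet}\leq\tdb^{\CC_\SSet}$ via slice-wise $\di\leq\db$, the application of Theorem~\ref{thm:db_bound1} slice-wise to get $\tdb^{\CC_\SSet}\leq3\tdi^{\CC_\SSet}+\epsilon^*$, and the passage from $3\tdi^{\CC_\SSet}+\epsilon^*$ to $3\di+\epsilon^{\CC_\SSet}$ via Proposition~\ref{prop:tdi=di}(1) and (3) are all correct and are clearly the intended arguments.

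The genuine gap is in the step you yourself flag as ``delicate,'' namely $\tdb^{\CC_\SSet}\leq\db$, and your proposed remedy does not close it. After transporting a $\delta$-matching $\mu$ of $M,N$ through scaling (which is fine, since scaling is an equivalence of categories and preserves interleavings), the restriction step leaves you with $\di(M_i^\aaa|_C,\,N_{\mu(i)}^\aaa|_C)\leq\delta$ for each matched pair by Fact~\ref{fact:subset_convex_interleaving}. You then want to ``pair off the resulting interval summands while preserving the $\delta$-interleaving,'' i.e.\ refine this to a $\delta$-matching between the indecomposable interval pieces of $M_i^\aaa|_C$ and $N_{\mu(i)}^\aaa|_C$. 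But that is exactly the algebraic-stability problem $\db\leq\di$ for interval decomposable $2$-parameter modules, and no factor-$1$ bound of that kind is available: the best known constant in the literature (and the one this paper itself relies on, via Theorem~\ref{thm:rectangle_stability} and Theorem~\ref{thm:db_bound1}) is $3$, plus additive error. Neither Krull--Schmidt nor Fact~\ref{fact:subset_convex_interleaving} nor the interval-decomposability hypothesis produces the required pairing; they tell you the pieces exist and that the interleaving restricts, but not that the interleaving respects the finer direct-sum structure. If you only invoke Theorem~\ref{thm:db_bound1} at this stage you land back on the $3\tdi^{\CC_\SSet}+\epsilon^*$ bound, not on $\db(M,N)$, so the second branch of the $\min$ is not established by this route. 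You would need an independent argument specific to diagonal-band restrictions, and your outline does not supply one.
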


The distance $\tdb^{\CC_\SSet}$ is defined as a supremum over all scaled modules $M^\aaa$ and $N^\aaa$ for every $\aaa\in \mathbf{A}$. 
To develop a polynomial time approximation algorithm, following~\cite{cerri2011new}, we take a finite subset $\hat{\mathbf{A}}\subseteq\mathbf{A}$ to approximate the distance.
We summarize the algorithm for approximating $\tdb$:
\begin{enumerate}
    \item Compute the minimal presentations of $M$ and $N$ by the algorithm given in~\cite{lesnick2019computing_minimal_presentation}.
    \item Compute the decomposition $M\simeq \oplus M_i$ and $N\simeq \oplus N_j$ by the algorithm given in~\cite{DeyCheng19}.
    \item For every non-interval component $M_i$ and $N_j$, select a set $\SSet\subseteq \mathbb{R}^2$ satisfying the premise of Proposition~\ref{prop:tdb_stability} and construct $\CC_\SSet$. Take the union over all these sets of $\SSet$'s.
    \item For $\forall \aaa\in\hat{\mathbf{A}}, \forall C\in\CC_\SSet^{\mathbf{a}}$, compute the bottleneck distance $\db(M^{\mathbf{a}}|_{C}, N^\mathbf{a}|_{C})$ by the algorithm given in~\cite{DX18}. 
    \item Output $\displaystyle\max_{\aaa\in \hat{\mathbf{A}}, \Delta_C\in\Delta_{\CC_\SSet^\aaa}} \db(M^{\mathbf{a}}|_{C}, N^\mathbf{a}|_{C})$ as an approximation of $\tdb^{\CC_\SSet}(M, N)$.

\end{enumerate}

The upper bound in Proposition~\ref{prop:tdb_stability} can be improved for some special covering.
\begin{corollary}
    Let $\CC_\SSet$ be a covering satisfying the condition in Proposition~\ref{prop:tdb_stability} and let $\CC_{\SSet'}\subset \CC_\SSet$ be a refinement such that
    $\epsilon^{\CC_{\SSet'}}(M, N)\leq \alpha \dmatch(M, N)$,
    then we have
\begin{equation}
    \dmatch(M, N) \leq \tdb^{\CC_{\SSet'}}(M, N) \leq \min\{(3+\alpha)\cdot \di(M, N),\,\db(M, N)\}.
\end{equation}    
\end{corollary}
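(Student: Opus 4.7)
The plan is to deduce the corollary by assembling three ingredients already in the paper: the monotonicity of $\tdb^{(\cdot)}$ under refinement of coverings (Proposition~\ref{prop:tdi=di}, part~1), the boundary identifications $\tdb^{\CC_{\Real^2}}=\dmatch$ and $\tdb^{\CC_\emptyset}=\db$ (Proposition~\ref{prop:tdi=di}, parts~2 and~3), and the stability inequality of Proposition~\ref{prop:tdb_stability}. I would establish the lower bound $\dmatch\le \tdb^{\CC_{\SSet'}}$ and the two upper bounds forming $\min\{(3+\alpha)\di,\db\}$ separately.

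For the lower bound, I would note that $\CC_{\Real^2}$ (the collection of all diagonal lines) is a refinement of any covering $\CC_{\SSet'}$ obtained from bands between diagonal lines, since each diagonal line is contained in some band of $\CC_{\SSet'}$. Then part~1 of Proposition~\ref{prop:tdi=di} gives $\tdb^{\CC_{\Real^2}}\le \tdb^{\CC_{\SSet'}}$, and combining with part~2 yields $\dmatch=\tdb^{\CC_{\Real^2}}\le \tdb^{\CC_{\SSet'}}$. The symmetric argument handles the easy upper bound: $\CC_{\SSet'}$ is a refinement of $\CC_\emptyset=\{\Real^2\}$, so part~1 together with part~3 gives $\tdb^{\CC_{\SSet'}}\le \tdb^{\CC_\emptyset}=\db$.

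For the principal upper bound $\tdb^{\CC_{\SSet'}}\le (3+\alpha)\di$, I would first verify that $\CC_{\SSet'}$ still satisfies the premise of Proposition~\ref{prop:tdb_stability}. Because $\CC_{\SSet'}$ refines $\CC_\SSet$, every $C'\in\CC_{\SSet'}^{\aaa}$ sits inside some $C\in\CC_\SSet^{\aaa}$ on which $M^{\aaa}|_C$ and $N^{\aaa}|_C$ are interval decomposable; since restricting an interval module to a convex subset either yields an interval module or the zero module, $M^{\aaa}|_{C'}$ and $N^{\aaa}|_{C'}$ remain interval decomposable. Applying Proposition~\ref{prop:tdb_stability} to $\CC_{\SSet'}$ then gives
\begin{equation*}
\tdb^{\CC_{\SSet'}}(M,N)\ \le\ 3\,\di(M,N)+\epsilon^{\CC_{\SSet'}}(M,N).
\end{equation*}
The hypothesis on $\CC_{\SSet'}$ is $\epsilon^{\CC_{\SSet'}}(M,N)\le\alpha\,\dmatch(M,N)$, and parts~1--3 of Proposition~\ref{prop:tdi=di} applied to the $\tdi$ family give $\dmatch=\tdi^{\CC_{\Real^2}}\le\tdi^{\CC_\emptyset}=\di$, so $\epsilon^{\CC_{\SSet'}}(M,N)\le\alpha\,\di(M,N)$, which produces the desired $(3+\alpha)\di$ bound.

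The only genuinely new step is the verification that the interval-decomposability premise of Proposition~\ref{prop:tdb_stability} is inherited by refinements; I expect this to be the main point to articulate cleanly, but it follows immediately from the observation that restriction of interval modules to convex subsets preserves interval decomposability. Everything else is bookkeeping with the monotonicity of $\tdb^{(\cdot)}$ and the boundary cases $\CC_{\Real^2}$ and $\CC_\emptyset$ already computed in Proposition~\ref{prop:tdi=di}.
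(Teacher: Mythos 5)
Your proof is correct and follows essentially the same route the paper takes implicitly: the corollary is obtained by applying Proposition~\ref{prop:tdb_stability} to the finer covering $\CC_{\SSet'}$, bounding $\epsilon^{\CC_{\SSet'}}$ by $\alpha\,\dmatch$ via the hypothesis, and then replacing $\dmatch$ by $\di$ using $\dmatch\le\di$; the lower bound and the $\db$ cap come from the monotonicity and boundary identifications in Proposition~\ref{prop:tdi=di}. One small imprecision worth fixing: the restriction of an interval module to a convex sub-band need not be an interval module or zero --- the underlying interval can disconnect into several pieces --- but it is still a finite direct sum of interval modules, which is all that is needed to verify the premise of Proposition~\ref{prop:tdb_stability} for $\CC_{\SSet'}$.
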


Such a refinement $\CC_{\SSet'}$ can simply be obtained as follows. Each $C\in\CC_\SSet$, which does not satisfy the condition that $4\sup_{\mathbf{a}\in\mathbf{A}}[\db(M^\aaa|_{C}, \boxed{M^\aaa|_ {C}}^*)+\db(N^\aaa|_{C}, \boxed{N^\aaa|_{C}}^*)] \leq \alpha\dmatch(M, N)$, is subdivided evenly into several bands $C_i$'s so that $2\epsilon_{C_i}\leq \alpha\dmatch(M, N)$. Thanks to Proposition~\ref{prop:subdivide_bound} in~\ref{sec:missing_proofs2}, we 
get that
\begin{equation*}
4\sup_{C_i}\sup_{\mathbf{a}\in\mathbf{A}}\db(M^\aaa|_{C_i}, \boxed{M^\aaa|_ {C_i}}^*)+\db(N^\aaa|_{C_i}, \boxed{N^\aaa|_{C_i}}^*) \leq \alpha\dmatch(M, N).
\end{equation*}
Now $\CC_{\SSet'}$ as constructed above satisfies that $\epsilon^{\CC_{\SSet'}}(M, N)\leq \alpha \dmatch(M, N)$. 
If $C$ is an infinity band (not bounded by two lines), 
one can subdivide $C$ starting
from the lower or upper boundary iteratively until the condition is satisfied. 

\begin{remark*}
    From above constructions, we can see that the choice of covering $\CC$ gives a trade-off between lower and upper bounds. Finer construction gives us a distance closer to the matching distance $\dmatch$, while coarser construction provides a distance which is closer to the bottleneck distance $\db$.
\end{remark*}

Now we discuss how to construct the set $\SSet$ in step 3 which should satisfy the premise of Proposition~\ref{prop:tdb_stability}.
For this we need the idea of finite presentation which is well studied in commutative algebra 
for structures like graded modules.
Here we follow the description in~\cite{kerber2018exact_compute_matching_distance} to give a more
intuitive definition of finite presentation as a graded matrix which does not require too much background in commutative algebra (one can check~\cite{atiyah2018introduction, cox2006usingAG, lesnick2015theory} for more details about finite presentation).

For any band described before $C\subset \Real^2$, a finite presentation is an $n\times m$ matrix $P$ over field $\field{F}_2$,
where each row $i$ is labeled by a point $\gr(\row_i)\in C$, and each column $j$ is labeled by a point $\gr(\col_j)\in C$, called the grades, such that $P_{i,j}=1\implies \gr(\row_i) \leq \gr(\col_j)$.
We denote all grades of rows as $\gr(\row P)$ and all grades of columns as $\gr(\col P)$. The grades $\gr(\row P)$ and $\gr(\col P)$ represent generators and relations of a persistence module constructed as follows:
Let $e_1, e_2, \cdots, e_l$ denote the standard basis of $\field{F}_2^l$. 
For any $\mathbf{u}\in {C}$, define the subset 
$\gen(\mathbf{u})\triangleq\{e_i\mid \gr(\row_i)\leq \mathbf{u}\}$ and $\rel(\mathbf{u})\triangleq\{\col_j\mid \gr(\col_j)\leq \mathbf{u}\}$. 
 Then define $M^P(\mathbf{u})=\spanning(\gen(\mathbf{u}))/\spanning(\rel(\mathbf{u}))$, and $M^P(\mathbf{u}\rightarrow \mathbf{v})$ as the map induced by the inclusion map $\spanning(\gen(\mathbf{u}))\rightarrow \spanning(\gen(\mathbf{v}))$. It can be checked that $M^P$ is a $C$- module. If $M^P$ is isomorphic to some persistence module $M$, we say $P$ is a {\emph{presentation matrix}} of $M$. 
\begin{remark*}
Note that our definition above is consistent with the one in~\cite{kerber2018exact_compute_matching_distance} for $C=\Real^2$. We extend it for all band posets we care about here. 
In practice, 
the data is often given as a simplicial filtration which induces the persistence module. It is shown in~\cite{lesnick2019computing_minimal_presentation} that for a $2$-parameter persistence module, a finite presentation can be computed in cubic time. 
\end{remark*}






Note that $\gr(\col P)$ and $\gr(\row P)$ are two subsets of $\Real^2$ with partial order naturally induced from $\Real^2$. Specifically, we claim the following Proposition for an important special case (proof in \ref{sec:missing_proofs2})
\begin{proposition}\label{prop:linearly_orderd_interval_decomp}
If $P$ is a presentation of $M$ such that neither $\gr(\col P)$ nor $\gr(\row P)$ has incomparable pairs, 
then $M$ is interval decomposable. 
\end{proposition}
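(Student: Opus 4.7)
My plan is to reduce the presentation matrix $P$ to a ``diagonal'' form via grading-respecting row and column operations, and then read off an explicit interval decomposition from the reduced form. After relabelling (using the chain hypothesis) I assume $g_1\le g_2\le\cdots\le g_n$ with $g_i=\gr(\row_i)$ and $r_1\le r_2\le\cdots\le r_m$ with $r_j=\gr(\col_j)$. Translating change-of-basis on the graded free modules $F_0,F_1$ underlying $P$, the allowed matrix moves are: add column $j'$ to column $j$ whenever $r_{j'}\le r_j$, and add row $i$ to row $i'$ (replacing $e_{i'}$ with $e_{i'}+e_i$) whenever $g_i\ge g_{i'}$. Each such move replaces $P$ by a presentation of the same module $M$.

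Next I run a two-phase reduction. In Phase~1 (column reduction) I process $j=1,2,\ldots,m$; for each $j$, while its pivot $\pi(j):=\max\{i: P_{ij}=1\}$ equals $\pi(j')$ for some $j'<j$, I add column $j'$ to column $j$. Every step is legal ($r_{j'}\le r_j$) and strictly decreases $\pi(j)$, so the phase terminates with pairwise distinct pivots across nonzero columns. In Phase~2 (row reduction) I process columns in order of increasing $\pi(j)$; for each $i<\pi(j)$ with $P_{ij}=1$ I apply the legal operation ``row $i$ += row $\pi(j)$'' to clear that entry. The key invariant is that, since $\pi(j)$ is the topmost $1$ of column $j$, the only other columns $k$ affected by this row operation satisfy $\pi(k)>\pi(j)$; hence no existing pivot is disturbed, and any auxiliary entries produced lie strictly below their column's pivot. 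Processing in increasing-pivot order therefore propagates the clean-up monotonically through a finite chain of columns, and at termination every nonzero column has exactly one nonzero entry (its pivot), while by distinctness of pivots every row has at most one.

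The reduced matrix splits as a direct sum of $1\times 1$ blocks together with empty rows, so
\[M\;\cong\;\bigoplus_{i\text{ unpaired}} M_{[g_i,\infty)}\;\oplus\;\bigoplus_{(i,j)\text{ paired}} M_{[g_i,r_j)},\]
where $M_{[g_i,\infty)}$ denotes the interval module supported on $\{\mathbf{u}\in C:g_i\le\mathbf{u}\}$ and $M_{[g_i,r_j)}$ the one supported on $\{\mathbf{u}\in C:g_i\le\mathbf{u},\ r_j\not\le\mathbf{u}\}$. Each support is poset-convex in $C$ (if $p\le\mathbf{u}\le q$ with $p,q$ in the support, then $g_i\le p\le\mathbf{u}$, and $\mathbf{u}\ge r_j$ would force $q\ge r_j$, a contradiction) and, when nonempty, zigzag-connected through the anchor $g_i\in C$; hence each summand is a genuine interval module and $M$ is interval decomposable.

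The main obstacle is the bookkeeping in Phase~2: verifying that the cascade of row operations preserves the distinct-pivot structure secured in Phase~1 and terminates. The decisive observation is that operations emanating from a column of pivot $\pi(j)$ can only alter entries in columns of strictly larger pivot, which, together with finiteness of the column set, forces monotonic termination. A secondary subtlety is confirming that the supports above really are intervals inside the (possibly unbounded) band $C$; this uses convexity of $C$ together with the presence of the grade $g_i\in C$ as a common zigzag anchor within each nonempty support.
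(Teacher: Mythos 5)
Your proof is correct and follows essentially the same route as the paper's sketch: perform admissible column reductions (persistence-style, left to right) so that pivots become distinct, then admissible row reductions to diagonalize, and read the result as a direct sum of single-generator modules whose supports are intervals. You supply more detail than the paper's brief description, including an explicit termination argument for the row phase and a verification that each support is genuinely an interval; the only slip is the parenthetical describing the basis change (the operation adding row $i$ to row $i'$, legal when $g_i \ge g_{i'}$, corresponds to $e_i\mapsto e_i+e_{i'}$ on $F_0$, not $e_{i'}\mapsto e_{i'}+e_i$), which has no bearing on the argument.
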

Observe that the Proposition above provides only a sufficient condition for interval decomposable
modules. It is not a necessary condition.

Borrowing an idea from~\cite{lesnick2015interactive_visualization_2d_modules}, for any persistence module $M$ with finite presentation $P$, and for any band $C$ with upper and lower boundary $\Delta_a$ and $\Delta_b$ respectively for some $-\infty\leq a\leq b\leq +\infty$ ($a=-\infty$ means  $C$ has no lower boundary, $b=+\infty$ means $C$ has no upper boundary), define a $\push$ operation on any point $\uu\in \Real^2$ to band $C$ as:


\begin{equation}\label{eq:push_B_formula}
\push_C(\mathbf{u}=(u_1,u_2))\triangleq\min\{\ww\in C\mid\ww\leq \uu\}=
    \begin{cases}
     (u_2-b, u_2) \mbox{ if } \uu \mbox{ is above }\Delta_b, \\
     (u_1, u_1+a) \mbox{ if } \uu \mbox{ is below } \Delta_a,\\
     \uu \mbox{ if } \uu\in C.
    \end{cases}
\end{equation}

Intuitively, $\push_C(\mathbf{u})$ is the lower-left corner of the intersection between $C$ and the unbounded rectangle $[\uu, (+\infty, +\infty)]$. See Figure~\ref{fig:push_B} as an illustration.

\begin{figure}[ht]
\centerline{\includegraphics[width=0.5\textwidth]{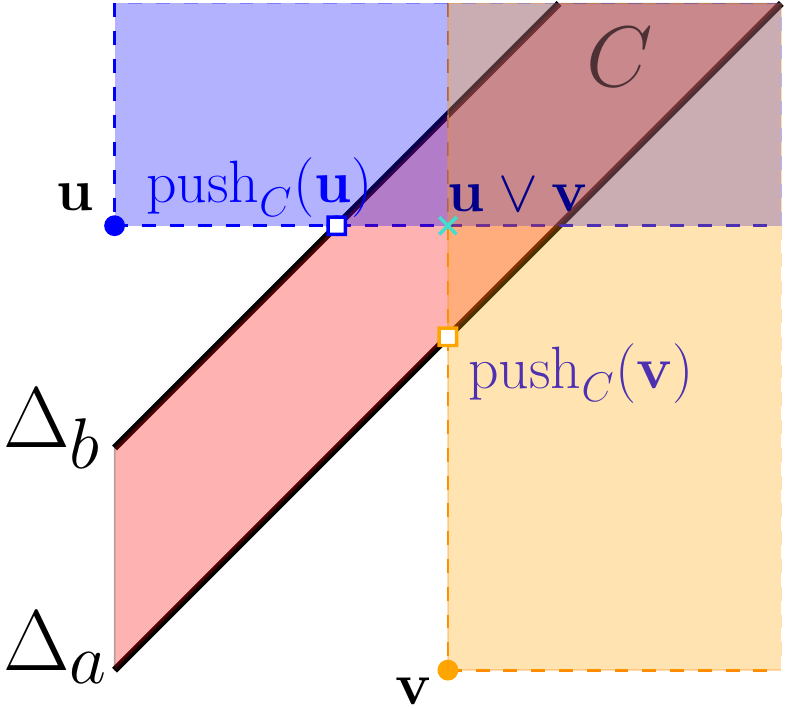}}
\caption{Examples of $\push_C$ of two points $\uu,\vv$. The square points are $\push_C(\uu),\push_C(\vv)$.
}
\label{fig:push_B}
\end{figure}

Now we claim the following Proposition to connect the finite presentation between $M$ and $M|_C$ (proof in \ref{sec:missing_proofs2}).


\begin{proposition}\label{lm:presentation_on_band}
Given a presentation of $M$ with finite presentation $P$ and a band $C$,
the restriction $M|_C$ has a finite presentation $P$ which can be obtained by replacing each $\mathbf{u}$ of $P$ with $\push_C(\mathbf{u})\triangleq\min\{\ww\in C\mid\ww\leq \uu\}$.
\end{proposition}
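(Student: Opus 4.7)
The plan is to construct a new presentation matrix $P'$ that has the same $\field{F}_2$-entries as $P$ but with row grades relocated to $\push_C(\gr(\row_i))$ and column grades relocated to $\push_C(\gr(\col_j))$, and then to verify (a) that $P'$ is a well-formed presentation matrix on the band $C$, and (b) that the resulting module $M^{P'}$ is isomorphic to $M|_C$ via the identity on generators.

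First, I would check that $\push_C$ actually lands in $C$ and is order-preserving. Order-preservation is a short case analysis driven by Equation~\ref{eq:push_B_formula}: within each of the three regimes (above $\Delta_b$, below $\Delta_a$, inside $C$) the formula is coordinate-wise monotone; the mixed regimes follow from chasing the band conditions $v_2 - v_1 \geq a$ and $v_2-v_1 \leq b$ together with $\uu\leq \vv$. This immediately guarantees $P_{i,j}=1 \implies \gr(\row_i)\leq \gr(\col_j) \implies \push_C(\gr(\row_i))\leq \push_C(\gr(\col_j))$, so $P'$ satisfies the grading constraint of a presentation matrix indexed by $C$.

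The central step is the elementary equivalence: for every $\uu\in \Real^2$ and every $\ww\in C$,
\begin{equation*}
\uu \leq \ww \;\iff\; \push_C(\uu)\leq \ww.
\end{equation*}
The $(\Leftarrow)$ direction is immediate once one notes that $\push_C(\uu)\geq \uu$, which follows from the defining formula (and which is reflected in Figure~\ref{fig:push_B}; the text's use of ``$\ww\leq\uu$'' in the set defining $\push_C$ is a typographical slip for ``$\ww\geq \uu$''). For $(\Rightarrow)$, if $\ww\in C$ and $\ww\geq \uu$, then $\ww$ itself belongs to $\{\x\in C : \x\geq \uu\}$, and so $\ww\geq \min\{\x\in C:\x\geq \uu\}=\push_C(\uu)$. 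Applied to the row grades, this equivalence gives $\gen'(\ww)=\{e_i : \push_C(\gr(\row_i))\leq \ww\}=\{e_i:\gr(\row_i)\leq \ww\}=\gen(\ww)$ for every $\ww\in C$, and the same identity holds for $\rel'(\ww)=\rel(\ww)$. Hence $M^{P'}(\ww)=\spanning(\gen'(\ww))/\spanning(\rel'(\ww))=M^{P}(\ww)=M(\ww)=M|_C(\ww)$; and for $\ww\leq \vv\in C$, the structure map $M^{P'}(\ww\rightarrow \vv)$, being induced by the inclusion $\spanning(\gen'(\ww))\hookrightarrow \spanning(\gen'(\vv))$, coincides with $M(\ww\rightarrow \vv)$ restricted to $C$. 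This yields the required isomorphism $M^{P'}\cong M|_C$.

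The main obstacle, which is mild, is precisely the bookkeeping of the $\push_C$ convention: once one fixes the interpretation $\push_C(\uu)=\min\{\ww\in C : \ww\geq \uu\}$ consistent with the ``lower-left corner'' description and with Figure~\ref{fig:push_B}, both the order-preservation of $\push_C$ and the generator/relation equivalence above are purely definitional. No algebraic or categorical machinery is needed beyond unwinding the definition of a presentation and the observation that restriction of a presented module to a down-closed subset of indices can be carried out grade-by-grade.
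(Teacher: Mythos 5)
Your argument is correct and follows essentially the same route as the paper's own proof: both hinge on the pointwise equivalence $\push_C(\uu)\leq \ww \iff \uu\leq\ww$ for $\ww\in C$, from which $\gen'(\ww)=\gen(\ww)$ and $\rel'(\ww)=\rel(\ww)$ and hence $M^{P'}\cong M|_C$ follow immediately. You also correctly flag the typographical slip in the statement (the appendix version of the proposition and Equation~\ref{eq:push_B_formula} together confirm the intended reading $\push_C(\uu)=\min\{\ww\in C\mid\ww\geq\uu\}$) and add the unstated but necessary check that $\push_C$ is order-preserving, which is what makes the relocated matrix a well-formed presentation over $C$.
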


Now given persistence modules $M$ and $N$ with finite presentations $P_M$ and $P_N$ respectively, let $Gr=\{\gr(\row P_M), \gr(\col P_M), \gr(\row P_N), \gr(\col P_N)\}$.
Let $\SSet=\{\uu\vee\vv\triangleq(u_2\vee v_2, u_1\vee v_1)\mid  \mathbf{u}, \mathbf{v} \mbox{ are incomparable in } G \mbox{ for } G\in Gr\}$ be the anchor set. As an immediate result from Proposition~\ref{prop:linearly_orderd_interval_decomp}, we have the following property which satisfies the requirement in Proposition~\ref{prop:tdb_stability}.



\begin{proposition}\label{prop:B1_interval_decomposable}
Given a persistent module $M$ with finite presentation $P$, let 
$\SSet\triangleq\{\uu\vee\vv\triangleq(u_2\vee v_2, u_1\vee v_1)\mid  \mathbf{u}, \mathbf{v} \mbox{ are incomparable in } \gr(\row P_M) \mbox{ or } \gr(\col P)\}$, 
then we have $\forall \aaa, C\in \CC_\SSet$, 
$M^{\aaa}|_{C}$ is interval decomposable.
\end{proposition}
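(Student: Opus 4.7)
The plan is to reduce to Proposition~\ref{prop:linearly_orderd_interval_decomp} by constructing an explicit presentation of $M^{\aaa}|_C$ whose row and column grade sets are each totally ordered. First, if $P$ presents $M$ then the rescaled matrix $P^{\aaa}$, obtained by replacing every grade $\mathbf{u}$ of $P$ with $\aaa^{-1}\odot\mathbf{u}$ while keeping all entries fixed, presents $M^{\aaa}$: one checks $\gen_{P^{\aaa}}(\mathbf{u})=\{e_i\mid \aaa^{-1}\odot\gr(\row_i)\leq \mathbf{u}\}=\gen_P(\aaa\odot\mathbf{u})$ and the analogous identity for $\rel$, giving $M^{P^{\aaa}}(\mathbf{u})=M(\aaa\odot\mathbf{u})=M^{\aaa}(\mathbf{u})$. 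By Proposition~\ref{lm:presentation_on_band} the restriction $M^{\aaa}|_C$ then has a presentation whose grades are $\push_C(\aaa^{-1}\odot\mathbf{u})$ as $\mathbf{u}$ ranges over the grades of $P$.

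Next I would show that pushing eliminates all incomparabilities among the row grades; the column case is identical. Because $\aaa$ has strictly positive coordinates, the map $\mathbf{w}\mapsto\aaa^{-1}\odot\mathbf{w}$ is an order isomorphism on $\Real^2$ and commutes with joins. Hence an incomparable pair $\uu,\vv$ of row grades in $P^{\aaa}$ corresponds to an incomparable pair $\aaa\odot\uu,\aaa\odot\vv$ of row grades of $P$, whose join lies in $\SSet$ by the definition of $\SSet$; equivalently, $\uu\vee\vv\in\aaa^{-1}\odot\SSet$. Consequently $\Delta_{\uu\vee\vv}$ is one of the diagonals that cut $\Real^2$ into the cells of $\CC_\SSet^{\aaa}$, so the band $C$ lies entirely on one side of $\Delta_{\uu\vee\vv}$: writing $[a,b]$ for the range of diagonal offsets contained in $C$, the offset $c$ of $\uu\vee\vv$ satisfies either $c\leq a$ or $c\geq b$.

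The heart of the argument is a short case analysis showing $\push_C(\uu)$ and $\push_C(\vv)$ are comparable in $C$. Assume without loss of generality $u_1\leq v_1$ and $u_2\geq v_2$, so $\uu\vee\vv=(v_1,u_2)$ and $c=u_2-v_1$; the diagonal of $\uu$ has offset at least $c$ and that of $\vv$ at most $c$. If $c\leq a$ then $\vv$ lies at or below the lower boundary of $C$, giving $\push_C(\vv)=(v_1,v_1+a)$, while $\uu$ may lie below, inside, or above $C$, giving $\push_C(\uu)=(u_1,u_1+a)$, $(u_1,u_2)$, or $(u_2-b,u_2)$ respectively. In every one of these three subcases the inequalities $u_1\leq v_1$, $u_2\leq v_1+a$ (a direct consequence of $c\leq a$), and $a\leq b$ combine to give $\push_C(\uu)\leq\push_C(\vv)$. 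The case $c\geq b$ is symmetric and yields $\push_C(\uu)\geq\push_C(\vv)$. Applying the same reasoning to the column grades, both grade sets of the pushed presentation of $M^{\aaa}|_C$ are linearly ordered, so Proposition~\ref{prop:linearly_orderd_interval_decomp} delivers interval decomposability.

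The only real obstacle is the bookkeeping in the push case analysis: one must verify that the inequality $c\leq a$ (or $c\geq b$) propagates through the piecewise formula for $\push_C$ so as to force coordinatewise comparability in every subcase. Everything else is formal: scaling by a positive $\aaa$ commutes with joins and preserves the partial order, and the cited presentation-level results handle restrictions to $C$ and the passage from linearly ordered grades to interval decomposability.
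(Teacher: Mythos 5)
Your proposal is correct, and it follows exactly the route the paper has in mind: rescale $P$ to get a presentation $P^{\aaa}$ of $M^{\aaa}$, push its grades into $C$ via Proposition~\ref{lm:presentation_on_band}, show the pushed grade sets are totally ordered, and invoke Proposition~\ref{prop:linearly_orderd_interval_decomp}. The paper itself declares the result ``immediate'' and gives no argument, so your push case analysis is the genuine content that was missing; I checked it and all three subcases under $c\leq a$ (and, symmetrically, $c\geq b$) go through, and comparable pairs cannot become incomparable under $\push_C$.

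One point you should make explicit rather than leave tacit: the paper's formula $\uu\vee\vv\triangleq(u_2\vee v_2,\, u_1\vee v_1)$ has its coordinates swapped relative to the ordinary componentwise join, so the resulting anchor has diagonal offset $(u_1\vee v_1)-(u_2\vee v_2)$ — the negative of the offset your analysis needs. You silently use the standard join (writing $(v_1,u_2)$ under your WLOG convention $u_1\leq v_1$, $u_2\geq v_2$, which is $(u_1\vee v_1,\,u_2\vee v_2)$), and that is what makes the case analysis close. With the swapped formula the claim is actually false: take $\uu=(0,10)$, $\vv=(5,0)$; the swapped anchor $(10,5)$ has offset $-5$, and in the band $\{\text{offset}\geq -5\}$ both $\uu$ and $\vv$ survive the push unchanged and remain incomparable, whereas the standard anchor $(5,10)$ with offset $5$ separates them correctly. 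So you are right to use the standard join, but this is a correction of the statement, not a restatement of it, and deserves a remark.
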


\section{Conclusion}
In this paper, first we consider interval decomposable modules and their optimal approximations with rectangle decomposable modules with respect to the bottleneck distance. We present a polynomial time algorithm for computing this optimal approximation exactly which, together with the polynomial-time computable bottleneck distance among interval decomposable modules~\cite{DX18}, provides a lower bound on the interleaving distance.
Next, we propose a distance between $2$-parameter persistence modules that is more discriminating than the matching distance and is bounded from above by the bottleneck distance. This distance
can be approximated by a polynomial time algorithm. Using an approach in~\cite{kerber2018exact_compute_matching_distance}, it may be plausible
to compute it exactly--a question we leave open for future research. This distance can bound the interleaving distance from below with some subtractive factors derived from our optimal approximation of interval decomposable modules with rectangle decomposable modules. For this, we propose
a non-trivial covering of a given module 
by interval decomposable modules which can be an interesting technique on its own.

\bibliography{ref}

\appendix

\newpage
\section{Missing proofs in Section~\ref{sec:rectangle_approx}}\label{sec:missing_proofs}
First, we introduce some definitions and results used later in our proof. Most of them comes from~\cite{DX18}.

\begin{definition} [Valid intersection]
An intersection component $Q_i$  is $(M,N)$-\emph{valid} (or \emph{valid} in short if the order of the modules is clear from the context) if for each $\x\in I_{Q_i}$ the following two conditions hold (see Figure~\ref{fig:valid_intersection}):
\[
\mbox{(i) } \y  \leq \x \mbox{ and } \y \in I_M \implies \y \in I_N, \mbox{ and (ii) } \z \geq \x \mbox{ and } \z \in I_N \implies \z \in I_M.
\]
\end{definition}

\begin{figure}[ht]
\centerline{\includegraphics[height=3.5cm]{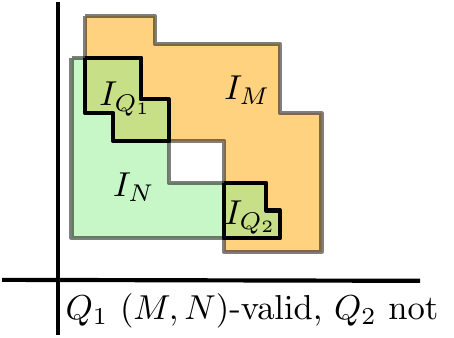}}
\caption{Examples of a valid intersection and an invalid intersection.}
\label{fig:valid_intersection}
\end{figure}

\begin{figure}[ht]
    \centering
    \includegraphics{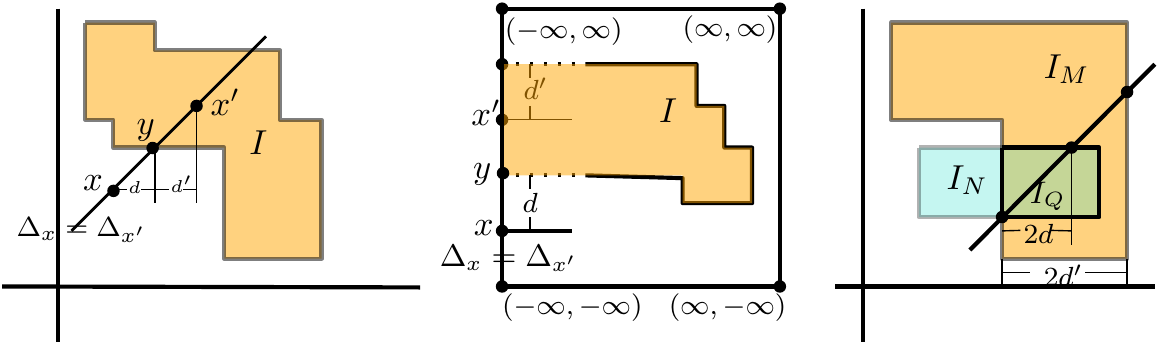}
    \caption{$d=\dl(x,I)$, $y=\pi_I(x)$, $d'=\dl(x',L(I))$ (left); $d=\dl(x,I)$ and $d'=\dl(x',U(I))$ are defined on the left edge of $B(\RB^2)$ (middle); $Q$ is $d'_{(M,N)}$- and $d_{(N,M)}$-trivializable (right). }
    \label{fig:dl}
\end{figure}

 \begin{definition} [Trivializable intersection]
  Let $Q$ be a connected component of the intersection of two modules $M$ and $N$. 
  For each point $\x\in I_Q$, define
  \[
  d^{(M,N)}_{triv}(\x) =\max\{\dl(\x, U(I_M))/2, \dl(\x, L(I_N))/2)\}.
  \]
  For $\delta \geq 0$, we say a point $\x$ is $\delta_{(M, N)}$\emph{-trivializable} if $d^{(M,N)}_{triv}(\x) < \delta$.
  We say an intersection component $Q$ is $\delta_{(M,N)}$\emph{-trivializable} (or \emph{trivializable} for brief if the order of modules is clear from context)  if each point in $I_Q$ is $\delta_{(M,N)}$\textit{-trivializable} (Figure~\ref{fig:dl}). We also denote $d_{triv}^{(M, N)}(I_Q):=\sup_{\x\in I_Q}{\{d_{triv}^{(M, N)}(\x)\}}.$
\end{definition}

Note that if we set $N=M$, and let $Q=M$ be the only connected component of intersection of $M$ and $N$, then $d_{triv}^{M,M}(I_M)=\di(M,0)$. We say $M$ is \emph{$\delta$-trivializable} if $\delta \geq \di(M,0)$. 

\begin{definition}

Given an interval module $M$ and the diagonal line $\Delta_\x$ for any $\x\in \Real^2$, 
let $\LL(t)=t\cdot(1,1)+0.5\cdot(-c, c)$ as the parameterization of line $\Delta_{\x}$ with $c=x_2-x_1$ being the vertical intercept of $\Delta_{\x}$. 
We define {\emph{1-parameter slice}} of $M$ along $\Delta_\x$, denoted as $M|_{\Delta_\x}$, given by $M|_{\Delta_\x}(t)=M(\LL(t))$.
Define 
$\di^{\Delta}(M, N)=\sup_{\x\in \Real^2}\set{\di(M|_{\Delta_\x}, N|_{\Delta_\x})}.$
\end{definition}

The following theorem has been proved in~\cite{DX18} which also led to a polynomial time algorithm for computing the interleaving distance between two interval modules.
\begin{theorem} \label{di_criteria}

For two interval modules $M$ and $N$, $\di(M, N)\leq \delta$ if and only if both of the following two conditions are satisfied:

(i) $\delta \geq \di^{\Delta}(M, N)$,

\begin{sloppypar}
(ii) $\forall\delta' > \di^{\Delta}(M, N)$, each intersection component of $M$ and $N_{\rightarrow \delta'}$ is either $(M,N_{\rightarrow \delta'})$\textit{-valid} or $\delta_{(M,N_{\rightarrow \delta'})}$\textit{-trivializable}, and each intersection component of $ M_{\rightarrow \delta'}$ and $N$ is either $(N,M_{\rightarrow \delta'})$\textit{-valid} or $\delta_{(N,M_{\rightarrow \delta'})}$\textit{-trivializable}.
\end{sloppypar}
\end{theorem}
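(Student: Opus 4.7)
\textbf{Proof proposal for Theorem~\ref{di_criteria}.}

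The plan is to prove the two directions separately. For the necessity direction ($\di(M,N)\le\delta \Rightarrow$ (i) and (ii)), I would fix a $\delta$-interleaving $(\phi,\psi)$ between $M$ and $N$ and argue as follows. Condition (i) is essentially the functoriality of slicing: for any diagonal $\Delta_\x$, the restrictions $\phi|_{\Delta_\x}$ and $\psi|_{\Delta_{\x+\vec\delta}}$ give a $\delta$-interleaving between the 1-parameter modules $M|_{\Delta_\x}$ and $N|_{\Delta_\x}$ (because a diagonal shift by $\vec\delta$ on $\Real^2$ corresponds to a parameter shift by $\delta$ on $\Delta_\x$), so by isometry for 1-parameter modules $\di(M|_{\Delta_\x}, N|_{\Delta_\x})\le\delta$. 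Taking supremum gives $\di^\Delta(M,N)\le\delta$. For condition (ii), fix $\delta'>\di^\Delta(M,N)$ and a point $\x$ lying in some intersection component $Q$ of $M$ and $N_{\to\delta'}$. If $Q$ is not $\delta_{(M,N_{\to\delta'})}$-trivializable, I would use $\x$ with $d_{triv}^{(M,N_{\to\delta'})}(\x)\ge\delta$ and push/pull through $\phi$ and $\psi$ to show the implications defining validity; the key is that a violation of validity at $\x$ together with non-trivializability at $\x$ would force a composition in the interleaving square to factor through a zero space in an inconsistent way, contradicting the $\delta$-interleaving identities.

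For the sufficiency direction, assume (i) and (ii). The plan is to \emph{construct} $\delta$-interleaving maps $\phi,\psi$ explicitly. Because $M,N$ are interval modules over $\mathbb{F}_2$, the candidate maps are determined pointwise by a 0/1 choice: set $\phi(\x)=\mathbb{1}$ whenever $\x\in I_M$ and $\x+\vec\delta\in I_N$, and $\phi(\x)=0$ otherwise; define $\psi$ symmetrically. The required verifications split into two families:
\begin{itemize}
    \item Naturality squares (second bullet in the interleaving definition) reduce to checking, for each comparable pair $\x\le\y$, that the four indicator values agree in $\mathbb{F}_2$. This reduces to the statement that if $\x\in I_M$, $\y\in I_M$ and $\x+\vec\delta\in I_N$, then $\y+\vec\delta\in I_N$ — this is controlled by the validity of the intersection component containing $\x+\vec\delta$ (applied to $M$ and $N_{\to\delta'}$ for $\delta'$ approaching $\delta$ from above), with the ``bad'' components ruled out because they are trivializable and hence do not carry a nonzero value of $\phi$.
    \item The composition identities $\psi(\x+\vec\delta)\circ\phi(\x)=M(\x\to\x+2\vec\delta)$ and its dual reduce to: whenever $\x,\x+2\vec\delta\in I_M$, then $\x+\vec\delta\in I_N$. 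This is exactly the content of condition (i) applied to the diagonal line through $\x$: on that 1-parameter slice the modules are $\delta'$-interleaved for any $\delta'>\di^\Delta(M,N)$, and 1-parameter interleaving of interval modules forces precisely this sandwich property (passing to the limit $\delta'\downarrow\di^\Delta(M,N)\le\delta$).
\end{itemize}

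The main obstacle will be the sufficiency direction, specifically reconciling the two ``layers'' of data: condition (i) gives slice-wise compatibility, but a priori the interleaving maps chosen on different slices must be \emph{consistent} across slices, and this transverse consistency is exactly what condition (ii) enforces. The delicate point is handling intersection components that are trivializable but not valid: on such a component the composition $\psi\circ\phi$ may vanish even though $M(\x\to\x+2\vec\delta)$ is nonzero on part of its ``shadow,'' and I would need to check that trivializability of the component forces $\x$ itself to lie outside $I_M$ or outside the relevant shifted region so that no composition identity is violated. Once these case distinctions are organized according to the $\dl(\cdot,U(I_M))$ and $\dl(\cdot,L(I_N))$ values appearing in the trivializability threshold, the verification becomes a finite bookkeeping exercise, and combined with the slice-wise argument and naturality check it completes the construction of $(\phi,\psi)$, yielding $\di(M,N)\le\delta$.
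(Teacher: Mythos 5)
First, note that the paper itself does not prove Theorem~\ref{di_criteria}: it is imported verbatim from~\cite{DX18}, so there is no in-paper argument to compare against; your proposal has to stand on its own. Judged that way, the overall strategy (slice-wise control for condition (i), componentwise construction of interleaving maps with validity/trivializability governing the choices) is the right one, but the sufficiency direction as written contains a genuine flaw. The map you actually define, $\phi(\x)=\mathbb{1}$ whenever $\x\in I_M$ and $\x+\vec{\delta}\in I_N$, is in general not a morphism of persistence modules: if $\x\leq\y$ with $\x\in I_M$, $\x+\vec{\delta}\notin I_N$, $\y\in I_M$, $\y+\vec{\delta}\in I_N$, then $\phi(\y)\circ M(\x\rightarrow\y)=\mathbb{1}$ while $N(\x+\vec{\delta}\rightarrow\y+\vec{\delta})\circ\phi(\x)=0$. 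Such pairs exist exactly when an intersection component of $I_M\cap I_{N_{\rightarrow\delta}}$ violates validity condition (i), i.e.\ precisely in the trivializable-but-not-valid situation the theorem is designed to permit; this also contradicts your own later claim that bad components ``do not carry a nonzero value of $\phi$'' --- under your definition they do. The repair is to assign values component by component ($\mathbb{1}$ on valid components, $0$ on trivializable non-valid ones), and then the real work begins: naturality on valid components (using order-convexity of intervals so that comparable points of the intersection lie in one component), and the composition identities on zero-assigned components, where \emph{both} halves of the trivializability bound are needed --- $\dl(\x,U(I_M))<2\delta$ forces $M(\x\rightarrow\x+2\vec{\delta})=0$ for $\x$ in such a component, and $\dl(\cdot,L(I_{N_{\rightarrow\delta}}))<2\delta$ is what kills $N(\y\rightarrow\y+2\vec{\delta})$ when $\y+\vec{\delta}$ lands in it. Your ``main obstacle'' paragraph gestures at this, but condition (i) alone, as you assert, does not reduce the composition identities to the sandwich property; one must also rule out that either of the two relevant components is zero-assigned.

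The necessity direction is also too thin to count as a proof. You fix ``a $\delta$-interleaving,'' but $\di(M,N)\leq\delta$ only guarantees $(\delta+\epsilon)$-interleavings for every $\epsilon>0$, so a limiting argument (or an argument that the infimum is attained for closed interval modules) is required. More importantly, condition (ii) quantifies over \emph{every} $\delta'>\di^{\Delta}(M,N)$, including $\delta'\neq\delta$; the components you must show valid or $\delta$-trivializable live in $I_M\cap I_{N_{\rightarrow\delta'}}$, for which the interleaving maps $\phi\colon M\rightarrow N_{\rightarrow\delta}$ and $\psi\colon N\rightarrow M_{\rightarrow\delta}$ give no direct pointwise information. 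Your one-line plan of ``pushing/pulling through $\phi$ and $\psi$'' does not address this mismatch of scales, which is exactly why the statement is phrased with the auxiliary parameter $\delta'$ (it absorbs open/closed boundary issues). Until these two points are handled, the proposal is a plausible outline of the argument in~\cite{DX18} rather than a proof.
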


\begin{remark}
    In this theorem, an intersection component between the unshifted module and the shifted one being valid or trivializable depends on the order of modules. However, note that, we always take the order where the unshifted module is followed by the shifted one. Therefore, when we mention the second condition later in the paper, for brevity, we just say each intersection component (on either/both side of shifting) is valid (or trivializable). 
\end{remark}

From the definition $\dH$, it is easy to see the following two facts.
\begin{fact}\label{claim:dh_union}
$d_H(A, B\cup C)\leq \max\{d_H(A,B), \, d_H(A,C)\}$.
\end{fact}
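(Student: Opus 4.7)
The plan is to unpack the definition of $\dH$ directly and show that any $\delta$ that simultaneously witnesses $\dH(A,B)$ and $\dH(A,C)$ also witnesses $\dH(A,B\cup C)$. Set $\delta_B = \dH(A,B)$ and $\delta_C = \dH(A,C)$, and let $\delta = \max\{\delta_B,\delta_C\}$. It suffices to verify the two containments that define $\dH(A, B\cup C)\leq \delta$, namely $A\subseteq (B\cup C)^{(+\delta)}$ and $B\cup C\subseteq A^{(+\delta)}$.

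For the first containment, I would use $A\subseteq B^{(+\delta_B)}$ (which holds by definition of $\delta_B$) together with the monotonicity $B^{(+\delta_B)}\subseteq (B\cup C)^{(+\delta)}$, which follows because enlarging the underlying set or enlarging the radius only enlarges the neighborhood. For the second containment, I would handle $B$ and $C$ separately: $B\subseteq A^{(+\delta_B)}\subseteq A^{(+\delta)}$ and $C\subseteq A^{(+\delta_C)}\subseteq A^{(+\delta)}$, and then take the union of the left-hand sides.

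Strictly speaking, since $\dH$ is defined via an infimum that may not be attained, I would carry out the argument with arbitrary witnesses $\delta_B' > \dH(A,B)$ and $\delta_C' > \dH(A,C)$, conclude $\dH(A, B\cup C)\leq \max\{\delta_B',\delta_C'\}$, and then take the infimum over such choices. There is no real obstacle here; the only thing to be careful about is the monotonicity of the $(\cdot)^{(+\delta)}$ operation in both the set and the radius, which is immediate from the definition $S^{(+\delta)} = \bigcup_{\x\in S}\{\x': \|\x-\x'\|_\infty \leq \delta\}$.
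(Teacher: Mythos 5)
Your proof is correct, and since the paper states this fact without proof (merely noting it follows from the definition of $\dH$), your direct unpacking of the definition is exactly the intended argument; the care taken with the infimum not being attained is appropriate and does not change the conclusion.
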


\begin{fact}\label{claim:dh_subset}
For any $A'\subseteq A$  and $B'\subseteq B$, $d_H(A,B)\leq \max\{d_H(A,B'), \, d_H(B, A')\}.$
\end{fact}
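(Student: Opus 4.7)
The plan is to unfold the definition of the Hausdorff distance and exploit the elementary monotonicity property $S \subseteq T \implies S^{(+\delta)} \subseteq T^{(+\delta)}$ of the $\delta$-fattening operator. Let $\delta := \max\{d_H(A,B'),\, d_H(B,A')\}$ denote the quantity on the right-hand side; it suffices to show $d_H(A,B) \leq \delta$, which by definition amounts to verifying the two containments $A \subseteq B^{(+\delta')}$ and $B \subseteq A^{(+\delta')}$ for every $\delta' > \delta$ (or for $\delta$ itself, once one checks that the infimum in the definition behaves as expected).

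The first containment will follow from $d_H(A,B') \leq \delta$, which gives $A \subseteq (B')^{(+\delta)}$; since $B' \subseteq B$, monotonicity yields $(B')^{(+\delta)} \subseteq B^{(+\delta)}$, and composing gives $A \subseteq B^{(+\delta)}$. Symmetrically, the second containment will follow from $d_H(B,A') \leq \delta$, which gives $B \subseteq (A')^{(+\delta)} \subseteq A^{(+\delta)}$ because $A' \subseteq A$. Putting the two containments together proves $d_H(A,B) \leq \delta$, which is the desired inequality.

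The step that requires the most care, although it is still routine, is handling the infimum cleanly: strictly speaking, $d_H(A,B') \leq \delta$ only guarantees $A \subseteq (B')^{(+\delta+\varepsilon)}$ for every $\varepsilon > 0$, so the clean way to write the argument is to fix an arbitrary $\varepsilon > 0$, run the containments with $\delta + \varepsilon$ in place of $\delta$, and then let $\varepsilon \to 0$. No other subtlety arises — in particular, the inequality does not require $A$ or $B$ to be closed or bounded, and the one-sided nature of the hypotheses (only $A' \subseteq A$ and $B' \subseteq B$, with no symmetric conditions) is exactly what monotonicity of $(\cdot)^{(+\delta)}$ is designed to exploit.
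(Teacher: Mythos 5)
Your proof is correct: letting $\delta$ be the right-hand side, the one-sided containments $A \subseteq (B')^{(+\delta+\varepsilon)} \subseteq B^{(+\delta+\varepsilon)}$ and $B \subseteq (A')^{(+\delta+\varepsilon)} \subseteq A^{(+\delta+\varepsilon)}$ (using monotonicity of the fattening and the inclusions $B'\subseteq B$, $A'\subseteq A$) give $\dH(A,B)\leq\delta+\varepsilon$ for every $\varepsilon>0$, hence $\dH(A,B)\leq\delta$. This is exactly the definitional argument the paper intends — it states this as a Fact without proof, noting it follows easily from the definition of $\dH$ — and your care with the infimum via the $\varepsilon$-relaxation is the right way to make it rigorous without closedness assumptions.
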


\begin{notation*}
$\square_\delta(\x)=\{\x'\in\Real^2: \|\x-\x'\|_\infty\leq \delta\}.$
\end{notation*}

\begin{corollary}\label{cor:dh_slice}
    For interval modules $M$ and $N$, 
    $d_H(I_M, I_N)\leq \delta\implies \delta \geq d_I^{\Delta}(M, N)$.
\end{corollary}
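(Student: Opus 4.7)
The plan is to fix $\x\in\Real^2$ and show $\di(M|_{\Delta_\x}, N|_{\Delta_\x})\le\delta$; taking the supremum over $\x$ then yields the conclusion. Let $I_1 := I_M\cap\Delta_\x$ and $I_2 := I_N\cap\Delta_\x$. Since $\Delta_\x$ is totally ordered, interval property~(1) forces each $I_i$ to be a (possibly empty) closed $1$-dimensional interval on $\Delta_\x$, so the slices are $1$-parameter interval modules. Under the paper's $\infty$-norm-arclength parameterization $\LL(t)$, the standard $1$-parameter interleaving criterion for indicator modules reduces $\di(M|_{\Delta_\x}, N|_{\Delta_\x})\le\delta$ to the symmetric condition that, writing $I_i=[a_i,b_i]$ in parameter coordinates, for each $i\in\{1,2\}$ either $|I_i|\le 2\delta$ or $[a_i+\delta,b_i-\delta]\subseteq I_{3-i}$. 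My task is to verify both. (The degenerate cases where $M$ or $N$ is the zero module follow directly, since restriction preserves trivializability.)

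The argument hinges on a geometric lemma: if a $2$-parameter interval $I$ avoids $\Delta_\x$, then $I$ lies entirely on one side of $\Delta_\x$ (determined by the sign of $y-x-c$ where $c=x_2-x_1$). This follows by combining the zigzag condition~(2) with the poset-convexity condition~(1)---for two consecutive comparable points $\p_i\le\p_{i+1}$ on a zigzag path in $I$, the rectangle $[\p_i,\p_{i+1}]$ is contained in $I$, and its range of intercepts $y-x$ is the closed interval $[p_{i,2}-p_{i+1,1},\,p_{i+1,2}-p_{i,1}]$, which contains both $p_{i,2}-p_{i,1}$ and $p_{i+1,2}-p_{i+1,1}$; hence a sign change in the intercept relative to $c$ along the zigzag would force some rectangle to meet $\Delta_\x$. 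Given this lemma, when both $I_1,I_2$ are non-empty and $|I_1|\ge 2\delta$ (writing $I_1=[\p,\q]$ on $\Delta_\x$ with $\p=(a,a+c)$, $\q=(b,b+c)$), for any $\z=(x_0,x_0+c)\in\Delta_\x$ with $x_0\in[a+\delta,b-\delta]$, I would use $\dH(I_M,I_N)\le\delta$ to pick $\tilde{\p},\tilde{\q}\in I_N$ within $\infty$-distance $\delta$ of $\p,\q$ and chain $\tilde{\p}\le\p+(\delta,\delta)\le\z\le\q-(\delta,\delta)\le\tilde{\q}$; then $\z\in I_N$ by convexity, so $\z\in I_2$. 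The symmetric inclusion for the second index is identical.

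The hard part will be the case where exactly one slice is empty, say $I_1=\emptyset$ and $I_2\ne\emptyset$: here I must prove $|I_2|\le 2\delta$ so that $N|_{\Delta_\x}$ is $\delta$-trivializable. By the lemma, $I_M$ lies entirely on one side of $\Delta_\x$, say strictly above (intercept $>c$). Writing $I_2=[\p,\q]$ with $\p=(a,a+c)$, $\q=(b,b+c)$ and picking $\p',\q'\in I_M$ within $\delta$ of $\p,\q$, suppose for contradiction $b-a>2\delta$. Then the Hausdorff bounds give $p'_1\le a+\delta<b-\delta\le q'_1$ and $p'_2\le a+c+\delta<b+c-\delta\le q'_2$, so $\p'\le\q'$ componentwise and the rectangle $[\p',\q']\subseteq I_M$ by poset-convexity. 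But its bottom-right corner $(q'_1,p'_2)$ would have intercept $p'_2-q'_1\le(a+c+\delta)-(b-\delta)=c+2\delta-(b-a)<c$, contradicting that every point of $I_M$ has intercept $>c$. Hence $|I_2|\le 2\delta$, completing this case; the symmetric subcase with $I_2=\emptyset$ is handled identically.
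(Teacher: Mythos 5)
Your overall strategy matches the paper's: fix a diagonal $\Delta_\x$, restrict both intervals to it, and use the finite Hausdorff bound together with the poset-convexity condition~(1) to force the two $1$-dimensional slices to be $\delta$-interleaved. The arguments you give for the individual cases are all sound. The direct inclusion argument (chaining $\tilde{\p}\le\p+\vec{\delta}\le\z\le\q-\vec{\delta}\le\tilde{\q}$ and invoking convexity) is a valid replacement for the paper's contradiction argument that the endpoints must lie in each other's $\delta$-boxes. Your geometric lemma---that an interval disjoint from $\Delta_\x$ lies entirely on one side, proven by tracking the intercept $y-x$ along a zigzag---is correct and is a genuine addition: the paper's written proof never actually addresses what happens when exactly one slice is empty, whereas you dispose of it cleanly by producing the bottom-right (resp.\ top-left) corner of the rectangle $[\p',\q']$ as a contradiction point. (A slightly shorter route, closer in flavor to the paper, is to note that $[\p',\q']$ must meet $\Delta_\x$ outright when $b-a>2\delta$, directly contradicting $I_1=\emptyset$; the one-sidedness lemma is then not needed.)

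There is one genuine gap. You assert that each slice $I_i$ is a closed interval and write it as $[a_i,b_i]$, then pick $\tilde{\p},\tilde{\q}\in I_N$ within $\infty$-distance $\delta$ of the two \emph{finite} endpoints $\p,\q$. This silently assumes both slices are bounded. Since the paper works over $\RB^n$ and allows intervals containing $(+\infty,+\infty)$ or $(-\infty,-\infty)$, a slice $I_M\cap\Delta_\x$ can be a half-line or all of $\Delta_\x$, in which case there is no endpoint $\q$ to which a $\delta$-close $\tilde{\q}$ exists. The paper treats this separately: it observes that $(+\infty,+\infty)\in I$ is equivalent to every diagonal slice of $I$ being unbounded above, and that $\dH(I_M,I_N)<\infty$ forces $(+\infty,+\infty)\in I_M\iff(+\infty,+\infty)\in I_N$ (and similarly at $(-\infty,-\infty)$), so the two slices are unbounded on the same ends and match there for free, reducing to the bounded case. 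Your proof needs a corresponding paragraph; as written, it does not cover half-infinite or doubly-infinite slices.
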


\begin{proof}
Assume $\dH(I_M, I_N)\leq \delta$, we want to show that $\forall\x\in\Real^2, \di(M|_{\Delta_\x}, N|_{\Delta_\x})\leq \delta$. For the case that $I_{M|_{\Delta_{\x}}}$ and $I_{N|_{\Delta_{\x}}}$ are bounded, $\overline{\s\mathbf{t}}\triangleq I_{M|_{\Delta_{\x}}}$ and $\overline{\uu\vv}\triangleq I_{N|_{\Delta_{\x}}}$ are two line segments on the same diagonal line $\Delta_{\x}$ with $\s,\mathbf{t},\uu,\vv\in \Real^2$ being end points. 
If $\|\s-\mathbf{t}\|\leq \delta$  and $\|\uu-\vv\|\leq \delta$, then $\di(M|_{\Delta_{\x}}, N|_{\Delta_{\x}} )\leq \delta$. If one of them is greater than $\delta$, without loss of generality, say $\|\uu-\vv\| > \delta$, we claim that $\s\in \square_\delta(\uu)$ and $\mathbf{t} \in \square_\delta(\vv)$, which implies $\di((M|_{\Delta_{\x}}, N|_{\Delta_{\x}} ))\leq \delta$.
To show it by contradiction, without loss of generality, assume $\s\notin \square_\delta(\uu)$. 
Let $\s=(s_1, s_2)$ and $\uu=(u_1, u_2)$. Then either $s_1-u_1=s_2-u_2>\delta$ or $u_1-s_1=u_2-s_2 >\delta$. 
For the first case where $s_1-u_1=s_2-u_2>\delta$, because of $\dH(I_M, I_N)\leq \delta$, we have $\exists \y\in I_M, \square_\delta(\y)\ni \uu$, or equivalently, $\exists \y\in I_M, \y\in \square_\delta(\uu)$.
Then we have $\y\leq \uu+\vec{\delta}<\s$ which implies $\uu+\vec{\delta}\in I_M$. However, $\uu+\vec{\delta} < \s\in I_{M|_{\Delta_\x}}$, which contradicts the assumption that $\s$ is the lower end of the segment $I_{M|_{\Delta_\x}}$. 
Similarly, for the second case $u_1-s_1=u_2-s_2 >\delta$, we have $\exists \z\in I_N, \z\in \square_\delta(\s)$ because $\dH(I_M, I_N)\leq \delta$. Then we have $\z\leq \s+\vec{\delta} < \uu$ which implies $\s+\vec{\delta}\in I_N$. However, $\s+\vec{\delta} < \uu\in I_{N|_{\Delta_\x}}$ which contradicts the assumption that $ \uu $ is the lower end of the segment $I_{N|_{\Delta_\x}}$.  

For the special case that $I_{M|_{\Delta_{\x}}}, I_{N|_{\Delta_{\x}}}$ are unbounded, we first introduce the following observation which can be checked easily.
\begin{fact}
    Given an interval $I\in \RB^2$, the following three are equivalent:
    \begin{enumerate}
        \item $(+\infty, +\infty)\in I$ ($(-\infty, -\infty)\in I$), 
        \item $\exists \x\in \Real^2, I\cap \Delta_\x$ has no upper (lower) bound,
        \item $\forall \x\in \Real^2, I\cap \Delta_\x$ has no upper (lower) bound.
    \end{enumerate}
\end{fact}
Now assume that $I_{M|_{\Delta_{\x}}}=I_M\cap \Delta_{\x}$ has no upper bound. By the above observation, we have $I_M\ni (+\infty, +\infty)$. If $\dH(I_M, I_N)< +\infty$, then $I_N\ni (+\infty, +\infty)$, which implies $I_{N|_{\Delta_{\x}}}$ also has no upper bound. Therefore $I_{M|_{\Delta_{\x}}}\mbox{ and } I_{N|_{\Delta_{\x}}}$ are matched perfectly on their upper ends. The rest is to check their lower end to see if it is bounded by $\dH(I_M, I_N)$, which is similar to the previous arguments. 

\end{proof}


We define two preorders on intervals that have certain properties. 
\begin{definition}
For any two intervals $I,J \subseteq \Real^2$:\\
    $I\preceq J$ if $\forall \y\in J, \exists \x\in I, \x\leq \y$;\\     
    $I\succeq J$ if $\forall \y\in J, \exists \z\in J, \z\geq \y$.
\end{definition}

It is easy to check $\preceq$ and $\succeq$ are preorders. We claim they have the following properties.
    
\begin{claim}\label{claim:preorder1}
$L(I)\preceq I$, $U(I)\succeq I$.
\end{claim}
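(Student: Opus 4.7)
The plan is to prove both inequalities by a single diagonal-translation argument: push each $y \in I$ along the direction $-\vec{1}$ (respectively $+\vec{1}$) until hitting the boundary of $\bar I$, and use the interval axioms to ensure the resulting point lies in $L(I)$ (respectively $U(I)$). I am reading the relation $\succeq$ with the natural typo-corrected meaning, namely $I \succeq J$ iff $\forall y\in J, \exists z\in I, z\geq y$; otherwise the statement is vacuous.

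For $L(I)\preceq I$, I fix $y\in I$ and set $t^* = \sup\{t\geq 0 : y - t\vec{1}\in\bar I\}$, which is well defined (possibly $+\infty$) because $t=0$ is admissible. If $t^*=+\infty$, I take $x=(-\infty,\ldots,-\infty)$; closedness of $\bar I$ in the compactification $\bar{\mathbb{R}}^n$ gives $x\in\bar I$, and the condition defining $L(I)$ holds vacuously because no $y'\in I$ can satisfy $y'_i<-\infty$ for all $i$. Otherwise I take $x = y - t^*\vec{1}\in\bar I$. In either case $x\leq y$, which is what $L(I)\preceq I$ requires.

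The only nontrivial check is that $x\in L(I)$ when $t^*<\infty$. I argue by contradiction: suppose some $y'\in I$ satisfies $y'_i < x_i$ for every $i$, and set $\varepsilon = \min_i(x_i - y'_i)>0$ together with $r = x - (\varepsilon/2)\vec{1}$. Then componentwise $y' \leq x - \varepsilon\vec{1} \leq r \leq x \leq y$, and since $y,y'\in I$ the interval axiom (1) yields $r\in I\subseteq \bar I$. But $r = y - (t^*+\varepsilon/2)\vec{1}$, contradicting the definition of $t^*$. The statement $U(I)\succeq I$ will follow by the symmetric argument with the forward ray $t\mapsto y+t\vec{1}$, the candidate $z = y + t^*\vec{1}$ (or $(+\infty,\ldots,+\infty)$ when $t^*=+\infty$), and the sandwich $y \leq z \leq z + (\varepsilon/2)\vec{1} \leq z + \varepsilon\vec{1} \leq y'$ in the contradiction step.

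The only real obstacle is the bookkeeping with $\pm\infty$ on the diagonal. This is handled by the two observations that $\bar I$ is closed in $\bar{\mathbb{R}}^n$ (so the limiting corner is automatically in $\bar I$ when the ray never leaves) and that the conditions defining $L(I)$ and $U(I)$ are vacuous at those infinite corners. Once these are noted, the remainder of the argument is just one application of interval axiom (1) in each of the two symmetric cases.
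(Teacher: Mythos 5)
Your overall strategy---follow the diagonal ray from $y$ until it exits $\bar I$, use closedness of $\bar I$ in $\bar{\mathbb R}^n$ to land on a point $x$ of $\bar I$, and then use interval axiom (1) to rule out any $y'\in I$ strictly below $x$---is a correct, complete argument for a claim the paper dismisses with ``This can be proved by definition.'' The core step (the contradiction via $r = x - (\varepsilon/2)\vec 1$ with $y'\leq r\leq y$) is exactly right, and you are right to read $\succeq$ with the evident typo fixed.

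One small inaccuracy in the $t^*=+\infty$ branch: the limit point of $y - t\vec 1$ as $t\to\infty$ is \emph{not} always $(-\infty,\ldots,-\infty)$. Under the paper's arithmetic $a+\infty=\infty$, if $y_i=+\infty$ then $y_i-t=+\infty$ for all finite $t$, so the limiting corner keeps those coordinates at $+\infty$ and pushes only the finite coordinates to $-\infty$. The claim $(-\infty,\ldots,-\infty)\in\bar I$ can actually fail (e.g.\ $I=\{(+\infty)\}\times\bar{\mathbb R}\subset\bar{\mathbb R}^2$, where $L(I)=\{(+\infty,-\infty)\}$). The fix is harmless: take $x=\lim_{t\to\infty}(y-t\vec 1)$ rather than $(-\infty,\ldots,-\infty)$; as long as $y$ has at least one finite coordinate this limit has a $-\infty$ coordinate, so membership in $L(I)$ is still vacuous. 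The fully degenerate case $y=(+\infty,\ldots,+\infty)$ (where the ray never moves) needs one separate sentence---either $I=\{y\}$ and $L(I)=\{y\}$, or $I$ has a point with a finite coordinate to which the main argument applies, and the resulting $x$ trivially satisfies $x\leq(+\infty,\ldots,+\infty)$. With that patch, the proof is complete, and it is in fact more careful than anything the paper supplies.
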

\begin{proof}
This can be proved by definition.     
\end{proof}

\begin{claim}\label{claim:subset_eqi_preorder}
For two intervals $I, J$, 
$$J\subseteq I\iff I\preceq J \And I\succeq J\iff L(I)\preceq L(J) \And U(I)\succeq U(J).$$
\end{claim}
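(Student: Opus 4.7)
The plan is to prove the two biconditionals separately, with the convexity axiom of intervals supplying the key step in each case.

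For the first equivalence $J \subseteq I \iff I \preceq J$ and $I \succeq J$, the forward direction is immediate: if $J \subseteq I$ and $\y \in J$, then $\y$ itself lies in $I$ and witnesses both preorders by taking $\x = \y$ and $\z = \y$. For the reverse direction I would take any $\y \in J$, use $I \preceq J$ to obtain $\x \in I$ with $\x \leq \y$, and use $I \succeq J$ to obtain $\z \in I$ with $\y \leq \z$; the interval convexity axiom (if $\p, \q \in I$ and $\p \leq \rr \leq \q$ then $\rr \in I$) then forces $\y \in I$, giving $J \subseteq I$.

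For the second equivalence, I would chain the preorders through the boundaries using Claim~\ref{claim:preorder1} together with the standing assumption in the paper that intervals include their boundaries (so $L(I), U(I) \subseteq I$). In the forward direction, given $I \preceq J$ and $\y \in L(J) \subseteq J$, I apply $I \preceq J$ to obtain $\x' \in I$ with $\x' \leq \y$, then apply $L(I) \preceq I$ from Claim~\ref{claim:preorder1} to obtain $\x \in L(I)$ with $\x \leq \x' \leq \y$, witnessing $L(I) \preceq L(J)$; the relation $U(I) \succeq U(J)$ is symmetric using $U(I) \succeq I$. In the reverse direction, given $\y \in J$, I apply $L(J) \preceq J$ from Claim~\ref{claim:preorder1} to get $\y' \in L(J)$ with $\y' \leq \y$, and then the hypothesis $L(I) \preceq L(J)$ yields $\x \in L(I) \subseteq I$ with $\x \leq \y' \leq \y$, witnessing $I \preceq J$; the $\succeq$ side is symmetric via $U(J)$ and $U(I)$.

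There is no substantive obstacle beyond bookkeeping. The only subtle point is the containment $L(I), U(I) \subseteq I$, which the excerpt explicitly guarantees under the assumption that intervals include their boundaries; without that assumption one would need to promote a point of $\bar{I} \setminus I$ to a genuine element of $I$ via the convexity axiom and a short limit argument, which is more delicate but still routine.
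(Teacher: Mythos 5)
Your proof is correct and follows essentially the same route as the paper: the first equivalence comes directly from the interval convexity axiom, and the second from chaining $\preceq$ and $\succeq$ through $L$ and $U$ using Claim~\ref{claim:preorder1} and the containments $L(I)\subseteq I$, $U(I)\subseteq I$. The only cosmetic difference is that you unwind the preorder chain element-wise, whereas the paper invokes transitivity of the preorders abstractly; your remark about the standing boundary assumption is well placed.
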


\begin{proof}
For the first ($\iff$):

\begin{equation*}
\begin{aligned}
J\subseteq I &\iff \forall \y\in J, \y\in I\\
 &\iff \forall \y\in J, \exists \x, \z\in I, \x\leq \y\leq \z\\
 &\iff I \preceq J \And I\succeq J.
\end{aligned}
\end{equation*}

For the second ($\iff$):
From $L(I)\subseteq I$ and $L(J)\subseteq J$, Claim~\ref{claim:preorder1} and the first ($\iff$), we have $I \preceq L(I) \And I\succeq L(I)$ and $J \preceq L(J) \And J\succeq L(J)$. Place these preordered pairs on the suitable positions, by transitivity of preorder, we get:

\begin{equation*}
    \begin{aligned}
I\preceq J &\implies& L(I)\preceq I \preceq J\preceq L(J)&\implies L(I)\preceq L(J),\\
  L(I)\preceq L(J) &\implies& I \preceq L(I) \preceq L(J)\preceq J &\implies I\preceq J.
    \end{aligned}
\end{equation*}

Similarly, we can get $I\succeq J \iff U(I)\succeq U(J)$.

\end{proof}
\begin{claim}\label{claim:preorder_shift}
$L(I_{\rightarrow\delta})\preceq L(I^{(+\delta)})$ and 
$U(I_{\rightarrow-\delta})\succeq U(I^{(+\delta)})$. Here $I_{\rightarrow\delta}=\{\x\mid \x+\vec{\delta}\in I\}$.
\end{claim}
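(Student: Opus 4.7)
The plan is to unpack both assertions directly from the definitions and lean on Claim~\ref{claim:preorder1} together with the Minkowski-sum structure of the $\delta$-thickening. Since all intervals are assumed to include their boundaries, I can freely identify $I$ with $\overline{I}$, which lets me treat $I^{(+\delta)}$ as the set of points within $\infty$-distance $\delta$ of some point of $I$. I will also use the elementary fact that $I_{\rightarrow \delta} = \{\x \mid \x + \vec{\delta} \in I\}$ is just $I$ translated by $-\vec{\delta}$, and similarly $I_{\rightarrow -\delta}$ is $I$ translated by $+\vec{\delta}$.

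For the first inequality $L(I_{\rightarrow\delta})\preceq L(I^{(+\delta)})$, I would take an arbitrary $\y\in L(I^{(+\delta)})\subseteq \overline{I^{(+\delta)}}$ and apply the definition of the thickening to obtain a witness $\mathbf{w}\in I$ with $\|\y-\mathbf{w}\|_\infty\leq \delta$. In particular $\mathbf{w}\leq \y+\vec{\delta}$, i.e.\ $\mathbf{w}-\vec{\delta}\leq \y$. Since $(\mathbf{w}-\vec{\delta})+\vec{\delta}=\mathbf{w}\in I$, by the shift definition $\mathbf{w}-\vec{\delta}\in I_{\rightarrow\delta}$. Invoking $L(I_{\rightarrow\delta})\preceq I_{\rightarrow\delta}$ from Claim~\ref{claim:preorder1} produces some $\x\in L(I_{\rightarrow\delta})$ with $\x\leq \mathbf{w}-\vec{\delta}\leq \y$, which is exactly what the preorder $L(I_{\rightarrow\delta})\preceq L(I^{(+\delta)})$ demands.

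The second inequality $U(I_{\rightarrow-\delta})\succeq U(I^{(+\delta)})$ is completely symmetric. For an arbitrary $\y\in U(I^{(+\delta)})$ I would take the same kind of thickening witness $\mathbf{w}\in I$ with $\|\y-\mathbf{w}\|_\infty\leq\delta$, but now use the other coordinate inequality $\mathbf{w}\geq \y-\vec{\delta}$, so that $\mathbf{w}+\vec{\delta}\geq \y$. Since $(\mathbf{w}+\vec{\delta})-\vec{\delta}=\mathbf{w}\in I$, we have $\mathbf{w}+\vec{\delta}\in I_{\rightarrow-\delta}$, and applying $U(I_{\rightarrow-\delta})\succeq I_{\rightarrow-\delta}$ from Claim~\ref{claim:preorder1} yields some $\z\in U(I_{\rightarrow-\delta})$ with $\z\geq \mathbf{w}+\vec{\delta}\geq \y$, as required.

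The only real obstacle is the mild bookkeeping between a set and its closure: $L(\cdot)$ and $U(\cdot)$ are defined as subsets of $\overline{I}$, while the thickening $I^{(+\delta)}$ is defined using points of $I$ itself. Under the standing assumption that intervals include their boundary the issue vanishes, but if I wanted to be pedantic I would note that $\overline{I^{(+\delta)}}=(\overline{I})^{(+\delta)}$ for the $\infty$-norm and pick the thickening witness $\mathbf{w}\in\overline{I}=I$ accordingly. No delicate geometric argument is needed beyond this.
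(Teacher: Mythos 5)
Your proof is correct, but it takes a genuinely different route from the paper's. The paper first establishes the geometric inclusion $L(I^{(+\delta)})\subseteq (L(I))^{(+\delta)}$, deduces $(L(I))^{(+\delta)}\preceq L(I^{(+\delta)})$ from (the easy direction of) Claim~\ref{claim:subset_eqi_preorder}, separately shows $L(I_{\rightarrow\delta})\preceq (L(I))^{(+\delta)}$ by shifting boundary points of $I$, and then chains the two preorder relations by transitivity; implicitly it also uses that shifting commutes with taking the lower boundary, i.e.\ $(L(I))_{\rightarrow\delta}=L(I_{\rightarrow\delta})$. You sidestep both of these geometric facts: starting from $\y\in L(I^{(+\delta)})$ you extract a thickening witness $\mathbf{w}\in I$ (not necessarily on the boundary), shift it to land in $I_{\rightarrow\delta}$, and then use Claim~\ref{claim:preorder1} alone to project down to a point of $L(I_{\rightarrow\delta})$ below it. This is arguably more streamlined since it needs only Claim~\ref{claim:preorder1} and the raw definitions, at the modest cost of not exposing the intermediate inclusion $L(I^{(+\delta)})\subseteq (L(I))^{(+\delta)}$ that the paper records along the way. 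Your closing remark about closures is exactly the right hygiene: since the thickening by a closed $\infty$-ball of a closed set is closed, $\overline{I^{(+\delta)}}=I^{(+\delta)}$ under the paper's standing convention that intervals include their boundary, so the witness $\mathbf{w}\in I$ is available for every $\y\in L(I^{(+\delta)})$.
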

\begin{proof}
Note that $L(I^{(+\delta)})\subseteq (L(I))^{(+\delta)}$. By Claim~\ref{claim:subset_eqi_preorder} we have $(L(I))^{(+\delta)}\preceq L(I^{(+\delta)})$. Also $L(I_{\rightarrow \delta})\preceq (L(I))^{(+\delta)}$ since for $\forall \x\in L(I)$, $\y\in \square_{\delta}(\x)$,
we have $L(I_{\rightarrow \delta})\ni \x-\vec{\delta}\leq \y$. So $L(I_{\rightarrow \delta})\preceq L(I^{(+\delta)})$. Symmetrically, we can get $U(I_{\rightarrow -\delta})\succeq U(I^{(+\delta)})$.
    
\end{proof}

\begin{claim}\label{claim:preorder_imply_intersection_bound}
$L(J)\preceq L(I) \implies L(I\cap J)\subseteq L(I)$;
$U(I)\succeq U(J) \implies U(I\cap J)\subseteq U(J)$.
\end{claim}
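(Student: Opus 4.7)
The plan is to focus on the first inclusion $L(I\cap J)\subseteq L(I)$; the second, $U(I\cap J)\subseteq U(J)$, will then follow by a completely symmetric argument after exchanging $L\leftrightarrow U$, $\leq\leftrightarrow\geq$, and the roles of $I$ and $J$.

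I would fix $\x\in L(I\cap J)$ and verify the two conditions defining membership in $L(I)$. The condition $\x\in\bar{I}$ is immediate from $L(I\cap J)\subseteq\overline{I\cap J}\subseteq\bar{I}$. For the condition that no point strictly below $\x$ in every coordinate lies in $I$, I would argue by contradiction: suppose $\y\in I$ with $y_i<x_i$ for every $i$. Since $\x\in L(I\cap J)$, this $\y$ cannot lie in $I\cap J$, and combined with $\y\in I$ this forces $\y\notin J$. The goal is then to contradict $\y\notin J$ by sandwiching $\y$ between two points of $J$ and invoking condition~(1) of the interval definition (closure under betweenness) for $J$.

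To produce a point of $J$ below $\y$, I first locate some $\y''\in L(I)$ with $\y''\leq\y$, then apply the hypothesis $L(J)\preceq L(I)$ to obtain $\z\in L(J)$ with $\z\leq\y''\leq\y$; because intervals include their boundaries, $\z\in J$. To produce a point of $J$ above $\y$, I use that $\x\in\overline{I\cap J}$ to pick a sequence $\ww_n\in I\cap J$ converging to $\x$; since $\y<\x$ strictly in every coordinate, eventually $\y\leq\ww_n$, so I fix one such $\ww\in J$. The two bounds $\z\leq\y\leq\ww$ with $\z,\ww\in J$ then force $\y\in J$ by condition~(1), contradicting $\y\notin J$.

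The main obstacle I anticipate is establishing the existence of $\y''\in L(I)$ with $\y''\leq\y$ for an arbitrary $\y\in I$. This is a mild regularity property that should follow from the standing assumption that intervals include their boundaries: the set $\{\y'\in I:\y'\leq\y\}$ is nonempty and closed in $\RB^2$, so a Zorn-style minimality argument (or equivalently a compactness argument in $\RB^2$) produces a minimal element of it, which one checks directly lies in $L(I)$. Getting this housekeeping right is the only delicate piece of the argument; the rest reduces to straightforward bookkeeping with the preorder $\preceq$ and the interval axioms.
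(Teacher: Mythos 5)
Your proof is correct, and it lands the contradiction in a genuinely different place than the paper does. The paper's argument (reading past an apparent typo, $\cap$ for $\cup$) first observes that $L(I\cap J)\subseteq L(I)\cup L(J)$; it then assumes $\x\in L(J)\setminus L(I)$, produces a witness $\y\in I$ strictly below $\x$, pulls back to some $\z\in L(J)$ with $\z\leq\y$, and contradicts $\x\in L(J)$ directly, since $\z\in J$ lies strictly below $\x$. You instead avoid ever needing to know that $\x$ lies on $L(I)\cup L(J)$: you take any offending $\y\in I$ strictly below $\x$, note $\y\notin J$ (else $\y\in I\cap J$ contradicts $\x\in L(I\cap J)$), and then sandwich $\y$ between $\z\in L(J)\subseteq J$ and a point $\ww\in I\cap J$ near $\x$ to conclude $\y\in J$ by interval convexity. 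This buys you a self-contained contradiction at the cost of the extra sandwiching step and the limit argument producing $\ww$. One genuine simplification you should make: the step you flag as ``delicate'' --- producing $\y''\in L(I)$ with $\y''\leq\y$ for $\y\in I$ --- is exactly Claim~\ref{claim:preorder1}, namely $L(I)\preceq I$; combined with transitivity of $\preceq$ it gives $L(J)\preceq L(I)\preceq I$, and you can read off $\z$ directly without any Zorn/compactness housekeeping. The paper leans on this chain (implicitly) for the same purpose, so citing the claim rather than re-deriving it would align your argument more closely with the surrounding lemma structure and remove the only fragile part of your write-up.
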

\begin{proof}
For any $\x\in L(I\cap J)$, $\x\in L(I) \cap L(J)$. Assume $\x\in L(J) \setminus L(I)$. By definition of $L(I)$, we have $\exists \y=(y_1<x_1, y_2<x_2)\in I$. But $L(J)\preceq L(I)\implies \exists \x'\in L(J), \x'\leq \y$. That is $\exists \x'=(x'_1<x_1, x'_2<x_2)\in J$, which is contradictory to the definition of $L(J)$. 

Symmetrically, we can prove $U(I)\succeq U(J) \implies U(I\cap J)\subseteq U(J)$.
\end{proof}

\begin{proposition}\label{prop:di_leq_dh}
For two interval modules $M$ and $N$, we have $\di(M, N)\leq\dH(M, N)$. 
\end{proposition}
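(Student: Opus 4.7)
The plan is to apply Theorem~\ref{di_criteria} with $\delta \triangleq \dH(M, N)$, assuming $\delta < \infty$ (the claim is vacuous otherwise). The theorem reduces $\di(M, N) \leq \delta$ to two conditions: (i) $\delta \geq \di^{\Delta}(M, N)$, and (ii) for every $\delta' > \di^{\Delta}(M, N)$, every intersection component of $M$ and $N_{\rightarrow \delta'}$ (and symmetrically of $M_{\rightarrow \delta'}$ and $N$) is either valid or $\delta$-trivializable. Condition (i) is immediate from Corollary~\ref{cor:dh_slice}, which states precisely that $\dH(I_M, I_N) \leq \delta$ implies $\di^{\Delta}(M, N) \leq \delta$.

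For condition (ii), fix $\delta' > \di^{\Delta}(M, N)$ and let $Q$ be an intersection component of $M$ and $N_{\rightarrow \delta'}$; the symmetric case is identical. If $Q$ is valid there is nothing to prove, so assume $Q$ is not valid; then some witness $\x_0 \in I_Q$ admits either $\y_0 \leq \x_0$ in $I_M \setminus I_{N_{\rightarrow \delta'}}$ or $\z_0 \geq \x_0$ in $I_{N_{\rightarrow \delta'}} \setminus I_M$. The Hausdorff hypothesis yields both $I_M \subseteq I_N^{(+\delta)}$ and $I_N \subseteq I_M^{(+\delta)}$, which via Claims~\ref{claim:subset_eqi_preorder} and~\ref{claim:preorder_shift} translate into preorder inequalities between $L, U$ of $I_M$ and suitable shifts of $L, U$ of $I_N$. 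Using these, I aim to show that for every $\x \in I_Q$ both $\dl(\x, U(I_M)) \leq 2\delta$ and $\dl(\x, L(I_{N_{\rightarrow \delta'}})) \leq 2\delta$: non-validity forces the ``missing'' boundary of the relevant module to lie inside the opposing interval within diagonal reach $2\delta$ of any point of $I_Q$, while the Hausdorff bound simultaneously keeps the opposing boundary within diagonal reach $2\delta$. Hence $d_{triv}^{(M, N_{\rightarrow \delta'})}(\x) \leq \delta$ uniformly on $I_Q$. The strict inequality demanded by the definition of $\delta$-trivializability is obtained by a routine $\epsilon$-argument: applying the same chain with $\delta + \epsilon$ in place of $\delta$ throughout yields $\di(M, N) \leq \delta + \epsilon$ for every $\epsilon > 0$, and hence $\di(M, N) \leq \delta$.

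The main obstacle is the geometric case analysis in condition (ii): the non-validity witness $\x_0$ lives at a single point, but trivializability must be verified uniformly over all $\x \in I_Q$. Propagating the witness along the component's lower and upper boundaries using the preorder inequalities, and showing that both the diagonal distance to $U(I_M)$ and the diagonal distance to $L(I_{N_{\rightarrow \delta'}})$ are controlled by $2\delta$ along every line $\Delta_\x$ with $\x \in I_Q$, is where the bulk of the technical work lies; it requires a careful split by which boundary of $I_Q$ is hit along $\Delta_\x$ and by which of the two failure modes of validity occurred. Once this uniform diagonal bound is established, both conditions of Theorem~\ref{di_criteria} are verified at $\delta = \dH(M, N)$ and the proposition follows.
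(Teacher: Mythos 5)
Your proposal correctly identifies the top-level strategy: apply Theorem~\ref{di_criteria} at $\delta = \dH(M,N)$ and dispatch condition (i) via Corollary~\ref{cor:dh_slice}. Up to that point it matches the paper. The divergence, and the gap, is in how you handle condition (ii). You set out to show that any \emph{non-valid} intersection component must be $\delta$-trivializable, and you explicitly defer the ``careful split by which boundary of $I_Q$ is hit'' and by ``which of the two failure modes of validity occurred'' as ``where the bulk of the technical work lies.'' That deferred work is not routine bookkeeping; it \emph{is} the proof, and in the form you describe (propagating a single non-validity witness to a uniform diagonal bound over all of $I_Q$) it would be both delicate and unnecessary.

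The paper takes a shorter fork: under $\dH(I_M, I_N) \leq \delta$, one shows that \emph{every} intersection component of $M$ with $N_{\rightarrow\delta}$, and of $N$ with $M_{\rightarrow\delta}$, is already valid, so trivializability never enters. Concretely, $I_M \subseteq I_N^{(+\delta)}$ and $I_N \subseteq I_M^{(+\delta)}$ convert via Claim~\ref{claim:subset_eqi_preorder} and Claim~\ref{claim:preorder_shift} into the preorder inequalities $L(I_{N\rightarrow\delta}) \preceq L(I_M)$ and $U(I_{M\rightarrow -\delta}) \succeq U(I_N)$, and Claim~\ref{claim:preorder_imply_intersection_bound} then yields $L(I_M \cap I_{N\rightarrow\delta}) \subseteq L(I_M)$ and $U(I_M \cap I_{N\rightarrow\delta}) \subseteq U(I_{N\rightarrow\delta})$, which is precisely the validity condition for every component of the intersection. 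In other words, the non-validity branch you open is vacuous under the hypothesis, but your argument does not establish that; it only attempts (and does not finish) a trivializability estimate inside it. To close the gap, drop the trivializability route and instead derive validity directly from the Hausdorff bound via Claims~\ref{claim:subset_eqi_preorder}, \ref{claim:preorder_shift}, and~\ref{claim:preorder_imply_intersection_bound}; as written, the proposal does not constitute a complete proof.
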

\begin{proof}


    Suppose $d_H(I_M, I_N)\leq \delta_0$. That is $\forall \delta > \delta_0, I_N\subseteq (I_M)^{(+\delta)}, I_M\subseteq (I_N)^{(+\delta)}$. We want to show that: 

    (i) Each intersection component of $M$ and $N_{\rightarrow \delta}$ is $(M, N_{\rightarrow\delta})$-valid, and each intersection component of $N$ and $M_{\rightarrow\delta}$ is $(N, M_{\rightarrow\delta})$-valid. 
    (ii) $\delta_0 \geq d_I^{\Delta}(M, N)$

    For (i):
    By Claim~\ref{claim:subset_eqi_preorder} we have
    
    \[
    \left\{
                \begin{array}{ll}
                  L((I_M)^{(+\delta)})\preceq L(I_N)\\ U((I_M)^{(+\delta)}) \succeq U(I_N)
                \end{array}
       \right.
    \quad and \quad
    \left\{
                \begin{array}{ll}
                  L((I_N)^{(+\delta)})\preceq L(I_M)\\ U((I_N)^{(+\delta)}) \succeq U(I_M).
                \end{array}
       \right.
\]

    By Claim~\ref{claim:preorder_shift}, we have
    
    $L(I_{N\rightarrow\delta})\preceq L((I_N)^{(+\delta)})\preceq L(I_M)$ and $U(I_{M\rightarrow-\delta}) \succeq U((I_M)^{(+\delta)}) \succeq U(I_N)$. 
    Then by Claim~\ref{claim:preorder_imply_intersection_bound}, we have: 
    
    $L(I_M\cap I_{N\rightarrow\delta})\subseteq L(I_M) \And U(I_{M\rightarrow-\delta}\cap I_N)\subseteq U(I_N)$. 
    Note that the later one is equivalent to say $U(I_{M}\cap I_{N\rightarrow\delta})\subseteq U(I_{N\rightarrow\delta})$. So we have that each intersection component of $M$ and $N_{\rightarrow\delta}$ is ($M, N_{\rightarrow\delta}$)-valid. Symmetrically, we can show that each intersection component of $N$ and $M_{\rightarrow\delta}$ is ($N, M_{\rightarrow\delta}$)-valid.
    
    (ii) directly follows from Corollary~\ref{cor:dh_slice}.


\end{proof}

\begin{definition}
For any interval module $M$, let $\uu, \vv$ be the top-left corner and bottom-right corner of $I_M$ respectively. Then we define the {\emph{bounding band}} of $M$, $\Delta_M$, to be the band between $\Delta_\uu$ and $\Delta_\vv$. 
\end{definition}

\begin{lemma}\label{lemma:dH_iff}
    For any interval module $M$ and rectangle module $R$, let $C=\Delta_M\cap \Delta_R$. Then we have
    \begin{equation}
        \di(M|_{C}, R|_{C}) \neq \dH(M|_{C}, R|_{C}) \iff \di(M|_{C}, 0)\vee \di(R|_{C},0)<\dH(M|_{C}, R|_{C}).
    \end{equation}
\end{lemma}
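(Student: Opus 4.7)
My plan is to prove each implication of the iff separately. Since Proposition~\ref{prop:di_leq_dh} gives $\di \leq \dH$, the condition $\di \neq \dH$ is equivalent to $\di < \dH$.

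For $(\Leftarrow)$, set $\mu := \di(M|_C,0)\vee \di(R|_C,0)$ and assume $\mu < \dH(M|_C,R|_C)$. For any $\delta > \mu$, both $M|_C$ and $R|_C$ are $\delta$-trivializable, so the zero maps $\phi \equiv 0, \psi \equiv 0$ define a valid $\delta$-interleaving: naturality holds vacuously, and each required composition $\psi \circ \phi = M|_C(\x \to \x + 2\vec{\delta})$ (symmetrically for $\phi \circ \psi$) is the zero map, matching the vanishing internal shift map. Taking $\delta \downarrow \mu$ gives $\di(M|_C,R|_C) \leq \mu < \dH$, so $\di \neq \dH$.

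For $(\Rightarrow)$, I argue the contrapositive: assume $\mu \geq \dH$ and show $\di = \dH$. Since $\di \leq \dH$, it suffices to prove $\di \geq \dH$. WLOG $\di(M|_C,0) \geq \delta := \dH(M|_C,R|_C)$, and assume for contradiction that some $\delta''$-interleaving $(\phi,\psi)$ exists with $\delta'' \in (\di(M|_C,R|_C),\delta)$. The interleaving identity $M|_C(\x \to \x + 2\vec{\delta''}) = \psi(\x+\vec{\delta''}) \circ \phi(\x)$ forces that whenever $\x, \x + 2\vec{\delta''} \in I_{M|_C}$, one has $\x + \vec{\delta''} \in I_{R|_C}$ and therefore $d(\x, I_{R|_C}) \leq \delta''$; a symmetric claim follows from the dual identity via $\psi$. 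Since $\dH > \delta''$, a Hausdorff witness exists, WLOG $\x^* \in I_{M|_C}$ with $d(\x^*, I_{R|_C}) > \delta''$, and the foregoing forces $\x^* \pm 2\vec{\delta''} \notin I_{M|_C}$, so the diagonal chord through $\x^*$ in $I_{M|_C}$ has diagonal length strictly less than $4\delta''$. Meanwhile, $\di(M|_C,0) \geq \delta > \delta''$ supplies a chord $[\p,\q] \subseteq I_{M|_C}$ with $\|\q-\p\|_\infty > 2\delta''$, and the interleaving forces the shifted sub-chord $[\p+\vec{\delta''}, \q-\vec{\delta''}] \subseteq I_{R|_C}$.

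The main obstacle will be turning these two local facts into a global contradiction, and for this I invoke condition~(ii) of Theorem~\ref{di_criteria}: the intersection component $Q$ of $I_{M|_C} \cap (I_{R|_C} - \vec{\delta''})$ containing $\p$ must be either $(M|_C, R|_C{\rightarrow \delta''})$-valid or $\delta''$-trivializable. Trivializability fails at $\p$ because $\dl(\p, U(I_{M|_C})) \geq \|\q-\p\|_\infty > 2\delta''$ makes $d_{\mathrm{triv}}(\p) > \delta''$, so $Q$ must be valid. Validity, together with the symmetric clause applied to the dual intersection of $I_{R|_C}$ with $I_{M|_C} - \vec{\delta''}$, the zigzag-connectedness of $I_{M|_C}$ as an interval, and the axis-aligned rectangular shape of $I_{R|_C}$ inside the band $C$, extends the pointwise bound $d(\y, I_{R|_C}) \leq \delta''$ to every $\y \in I_{M|_C}$ by a comparability argument with points of $Q$. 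This contradicts the existence of the far witness $\x^*$ and closes the proof.
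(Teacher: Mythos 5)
Your overall strategy matches the paper's: $(\Leftarrow)$ by zero-map interleaving, $(\Rightarrow)$ by contrapositive through Theorem~\ref{di_criteria}, using the long diagonal chord guaranteed by $\di(M|_C,0)\geq\dH$ to rule out trivializability and force validity of an intersection component. The direction $(\Leftarrow)$ is correct and essentially the paper's own one-line observation.

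The gap is in the final sentence of $(\Rightarrow)$. You write that validity ``extends the pointwise bound $d(\y, I_{R|_C}) \leq \delta''$ to every $\y \in I_{M|_C}$ by a comparability argument,'' but that extension is precisely where the proof's weight sits, and it is simply asserted. Validity of $Q$ controls points of $I_{M|_C}$ that lie \emph{below} $Q$ (condition (i)) and points of $(I_{R|_C})_{\to\delta''}$ that lie \emph{above} $Q$ (condition (ii)); it says nothing directly about points of $I_{M|_C}$ above or incomparable to $Q$, which is exactly where the witness $\x^*$ can live. Likewise, your ``pointwise bound'' from the interleaving identity applies only to those $\x$ with $\x,\x+2\vec{\delta''}\in I_{M|_C}$ --- a strict subset of $I_{M|_C}$ --- so near the upper boundary there is no bound to extend. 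The paper closes this gap not by an appeal to comparability but by a chain of preorder lemmas (Claims~\ref{claim:preorder1}--\ref{claim:preorder_imply_intersection_bound}), converting validity of \emph{both} intersection components into boundary inclusions $L(I_{N\to\delta})\preceq L(I_M)$, $U(I_M)\succeq U(I_{N\to\delta})$, etc., and then using the hypothesis $C=\Delta_M\cap\Delta_R$ --- i.e.\ that both $I_{M|_C}$ and $I_{R|_C}$ span the full band, so that certain $\preceq$ statements upgrade to $\succeq$ statements --- to obtain $I_{M|_C}\subseteq (I_{R|_C})^{(+\delta'')}$ \emph{and} $I_{R|_C}\subseteq (I_{M|_C})^{(+\delta'')}$ simultaneously, yielding $\dH\leq\delta''$. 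Your sketch does not build this machinery, and in particular only aims at one of the two containments (the one falsified by $\x^*$), while a genuine Hausdorff witness could sit on either side; the WLOG on $\di(M|_C,0)\geq\dH$ does not let you also place the witness in $I_{M|_C}$ for free, because non-trivializability of the dual component $Q'$ needs a separate argument (the shifted chord $[\p+\vec{\delta''},\q-\vec{\delta''}]\subseteq I_{R|_C}$ only has length $\|\q-\p\|_\infty - 2\delta''$, and you would have to combine this with the grading of $d_{\mathrm{triv}}$ on $U(I_{R|_C})$ versus $L((I_{M|_C})_{\to\delta''})$ to make it stick). In short: same plan, but the decisive step --- validity $\Rightarrow$ two-sided Hausdorff bound via the band structure --- is missing, and the paper's preorder lemmas are exactly what would be needed to fill it.
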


\begin{proof}
    
($\Longleftarrow$) is trivial.

For ($\implies$), denote $I=M|_{C}, J=R|_{C}$ to simplify the notation.
We want to prove the contrapositive statement. 
Assume $\di(I, 0)\vee \di(J,0)\geq\dH(I, J)$. If we can show that $\di(I, J) \geq \dH(I, J)$, 
then combined with Proposition~\ref{prop:di_leq_dh}, one can get $\di(I, J) = \dH(I, J)$.

Let $\dH(I, J)=\epsilon$. We want to show, $\forall\delta<\epsilon$, $I\mbox{ and } J$ are not $\delta$-interleaved. By assumption, we have $\di(I, J) \geq \epsilon > \delta$. 
Then, at least one of $\di(I, 0)$ and $\di(J, 0)$ is greater than or equal to $\epsilon$. 
Without loss of generality, say $\di(I,0)\geq\epsilon$. Then, 
\begin{equation}\label{eq:nonzero_uu}
\exists\uu\in\Real^2, I(\uu\rightarrow\uu+2\vec{\delta})\neq 0.
\end{equation}

Now we claim that, if $I, J$ are $\delta$-interleaved, then the two intersection components of $I, J_{\rightarrow \delta}$ and $J, I_{\rightarrow \delta}$ are non-empty and valid. 
Note that there is at most one intersection component for either $I, J_{\rightarrow \delta}$ or $J, I_{\rightarrow \delta}$.

The main observation is that, if either of these two intersection components, say the intersection component of $I, J_{\rightarrow \delta}$, is either empty or not valid, then $\phi:I\rightarrow J_{\rightarrow\delta}$ is the zero morphism. That means, by composing with the other interleaving morphism, $\psi(\uu+\vec{\delta})\circ\phi(\uu)=0=I(\uu\rightarrow\uu+2\vec{\delta})$, which is contradiction to Equation~\ref{eq:nonzero_uu}.

Now, by definition of the property of nonempty valid intersection component, we have 
\begin{align*}
    L(J_{\rightarrow \delta})\preceq L(I),\\
    U(J_{\rightarrow \delta})\preceq U(I),\\
    L(I_{\rightarrow \delta})\preceq L(J),\\
    U(I_{\rightarrow \delta})\preceq U(J).
\end{align*}

Observe that by the assumption $C=\Delta_M\cap \Delta_R$,
both $I$ and $J$ touch the upper and lower boundaries of $\Delta_R$. 
Therefore, we have that 
\begin{align}
    U(J_{\rightarrow \delta})\preceq U(I)\implies U(I)\succeq U(J_{\rightarrow \delta}),\\
    U(I_{\rightarrow \delta})\preceq U(J)\implies U(J)\succeq U(I_{\rightarrow \delta}).
\end{align}

Now check that 
\begin{equation*}
L(J_{\rightarrow \delta})\preceq L(I), U(J)\succeq U(I_{\rightarrow \delta})\implies I\subseteq J^{(+\delta)}
\end{equation*}
and
\begin{equation*}
L(I_{\rightarrow \delta})\preceq L(J), U(J)\succeq U(I)\succeq U(J_{\rightarrow \delta})\implies J\subseteq I^{(+\delta)}.
\end{equation*}
That means, $\delta\geq\dH(I,J)=\epsilon$. However, $\delta<\epsilon$ by assumption reaching a
contradiction. So $I, J$ are not $\delta$-interleaved, which establishes the required contradiction. 
\end{proof}


\begin{propositionof}{\ref{thm:form_di}}
    For $I_R$ contained in the bounding rectangle of $I_M$,
    \begin{equation*}
        \di(M, R)=\max\{\di({M|_{\bar{\Delta}_R}}, 0),  (\di(M|_{\Delta_R}, 0)\vee \di(R|_{\Delta_R}, 0)) \wedge \dH(M|_{\Delta_R}, R|_{\Delta_R})\}.    
    \end{equation*}
\end{propositionof}

\begin{proof}
Recall that $\bar{\Delta}_R=\Real^2\setminus \Delta_R$. We have
\begin{equation}\label{eq:di_1}
\di(M,R)=\max\{\di(M|_{\Delta_R}, R|_{\Delta_R}), \di(M|_{\bar{\Delta}_R}, 0)) \}.    
\end{equation}

Based on Lemma~\ref{lemma:dH_iff}, we have
\begin{equation*}
\di(M|_{\Delta_R}, R|_{ \Delta_R})=\min\{ \di(M|_{\Delta_R}, 0)\vee \di(R|_{\Delta_R}, 0),\quad \dH(M|_{\Delta_R}, R|_{\Delta_R})\}.
\end{equation*}
Combined with Equation~\ref{eq:di_1}, we get the result 
\begin{equation*}
    \di(M, R)=\max\{\di({M|_{\bar{\Delta}_R}}, 0),\,  (\di(M|_{\Delta_R}, 0)\vee \di(R|_{\Delta_R}, 0)) \wedge \dH(M|_{\Delta_R}, R|_{\Delta_R})\}.
\end{equation*} 
\end{proof}

\begin{figure}[ht]
    \centerline{\includegraphics[width=\textwidth]{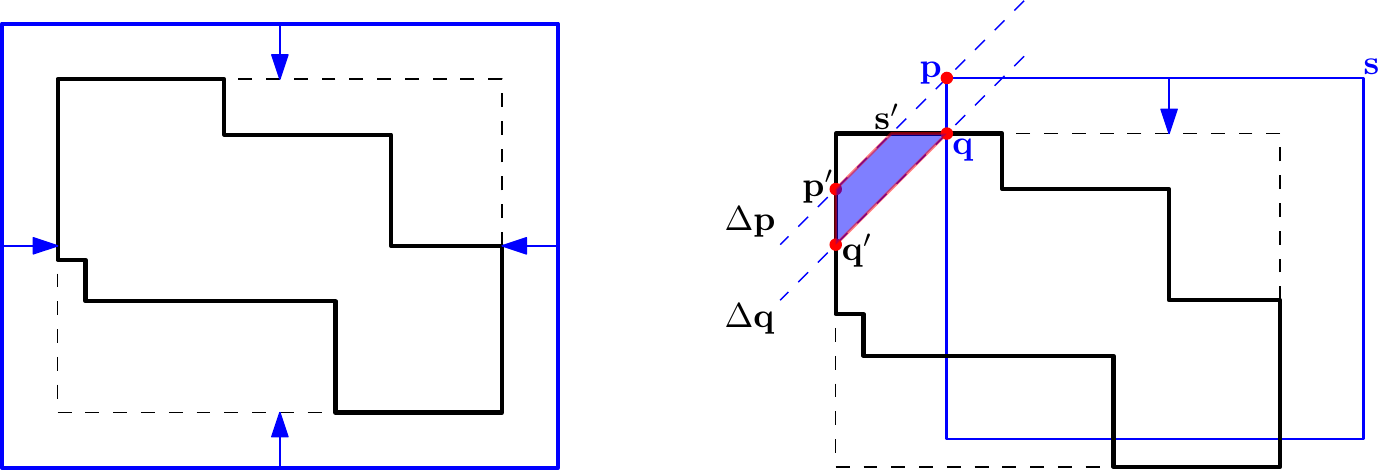}}
    \caption{The left figure illustrates how to push an edge of $I_M$ beyond the bounding rectangle of $I_M$ to the corresponding edge. It is not necessary that each edge of $I_R$ is outside of the bounding rectangle. The right figure is an illustration for the case that $\di(M|_{\bar{C}'})$ becomes larger. Here the blue region indicates the increased region ${\bar{C'}\setminus\bar{C}}$. } 
    \label{fig:opt_rectangle_in_bounding_rectangle}
\end{figure}

\begin{proposition}\label{prop:opt_rectangle_in_bounding_rectangle}
Given a module $M$ with underlying interval $I_M$ and bounding rectangle $B$, and a rectangular module $R$ with $I_R\nsubseteq B$, there exists a rectangle module $R'$ with $I_R'\subseteq B$ such that $\di(M, R')\leq \di(M, R)$. 
\end{proposition}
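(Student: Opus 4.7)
The plan is to construct $R'$ by clipping $I_R$ to the bounding rectangle $B$ of $I_M$, and then to modify any $\delta$-interleaving of $(M, R)$ into a $\delta$-interleaving of $(M, R')$. Writing $I_R = [\mathbf{r}, \mathbf{s}]$ and $B = [\mathbf{r}', \mathbf{s}']$, I define $I_{R'} = [\max(\mathbf{r}, \mathbf{r}'), \min(\mathbf{s}, \mathbf{s}')]$ (componentwise) when this yields a non-degenerate rectangle, and set $R' = 0$ otherwise. In either case $I_{R'} \subseteq B$.

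To show $\di(M, R') \leq \di(M, R)$, fix any $\delta > \di(M, R)$ with $\delta$-interleaving maps $\phi(\mathbf{x})\colon M(\mathbf{x}) \to R(\mathbf{x}+\vec\delta)$ and $\psi(\mathbf{x})\colon R(\mathbf{x}) \to M(\mathbf{x}+\vec\delta)$. Define modified maps by restriction: $\psi'(\mathbf{x}) = \psi(\mathbf{x})$ if $\mathbf{x} \in I_{R'}$ and $\psi'(\mathbf{x}) = 0$ otherwise; $\phi'(\mathbf{x}) = \phi(\mathbf{x})$ if $\mathbf{x} + \vec\delta \in I_{R'}$ and $\phi'(\mathbf{x}) = 0$ otherwise. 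The core of the proof is to verify that $(\phi', \psi')$ is a $\delta$-interleaving between $M$ and $R'$. The central observation is that since $I_M \subseteq B$, the module $M$ vanishes outside $B$; in particular, whenever a point $\mathbf{y}$ lies in $I_R \setminus I_{R'} = I_R \setminus B$, we have $M(\mathbf{y}) = 0$, forcing various compositions on the $M$-side to vanish and matching the zeroing-out of the modified maps. The two composition identities and the two naturality squares of a $\delta$-interleaving are verified by case analysis on where each of $\mathbf{x}, \mathbf{x}+\vec\delta, \mathbf{y}, \mathbf{y}+\vec\delta$ lies (in $I_{R'}$, in $I_R \setminus I_{R'}$, or outside $I_R$). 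Most cases reduce to the original interleaving identities; the remaining ones invoke the vanishing of $M$ outside $B$.

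The main subtlety is the naturality square $\phi'(\mathbf{y}) \circ M(\mathbf{x} \to \mathbf{y}) = R'(\mathbf{x}+\vec\delta \to \mathbf{y}+\vec\delta) \circ \phi'(\mathbf{x})$ in the scenario where $\mathbf{x} \leq \mathbf{y}$, $\mathbf{x}+\vec\delta \in I_R \setminus I_{R'}$, and $\mathbf{y}+\vec\delta \in I_{R'}$; here the right side vanishes because $\phi'(\mathbf{x}) = 0$, whereas $\phi'(\mathbf{y}) = \phi(\mathbf{y})$ could a priori be nonzero. The resolution is that this configuration is impossible: if $\mathbf{x}+\vec\delta \notin B$ because some coordinate of $\mathbf{x}+\vec\delta$ exceeds the corresponding coordinate of $\mathbf{s}'$, then the same coordinate of $\mathbf{y}+\vec\delta$ (which is at least that of $\mathbf{x}+\vec\delta$) also exceeds $\mathbf{s}'$, contradicting $\mathbf{y}+\vec\delta \in I_{R'} \subseteq B$. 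The symmetric square for $\psi'$ is handled analogously. The degenerate case $R' = 0$ reduces to showing $M$ is $\delta$-trivializable, which follows because for every $\mathbf{x}$ with $\mathbf{x}+\vec\delta \in I_R$ we have $\mathbf{x}+\vec\delta \notin B$, hence $M(\mathbf{x}+2\vec\delta) = 0$, hence $M(\mathbf{x} \to \mathbf{x}+2\vec\delta) = 0$.
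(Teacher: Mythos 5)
Your proof is correct and takes a genuinely different route from the paper. The paper establishes this result analytically: it invokes the closed-form expression $\di(M,R)=\min\{\di(M,0)\vee\di(R,0),\,\di(M|_{\bar C},0)\vee\di(R|_{\bar C},0)\vee\dH(M|_C,R|_C)\}$ derived from Lemma~\ref{lemma:dH_iff}, then pushes the offending edges of $I_R$ to the corresponding edges of $B$ one at a time and checks term by term that each quantity in that expression does not increase (the only term that could grow, $\di(M|_{\bar C'},0)$, is bounded by a separate geometric estimate). Your argument is more elementary and self-contained: you clip $I_R$ to $B$ and show directly that truncating the interleaving maps by the clipped interval yields a valid $\delta$-interleaving, using only the definition of interleaving together with the observation that $M$ vanishes off $B$. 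This avoids any reliance on Lemma~\ref{lemma:dH_iff} or Proposition~\ref{thm:form_di}, and so is arguably cleaner — at the cost of a more involved case analysis.

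One point to tighten: in the ``subtle'' naturality case you assert the configuration $\mathbf x+\vec\delta\in I_R\setminus B$ with $\mathbf y+\vec\delta\in I_{R'}$ is impossible, but you only verify this when $\mathbf x+\vec\delta$ \emph{exceeds} some coordinate of $\mathbf s'$. The configuration where $\mathbf x+\vec\delta$ is \emph{below} some coordinate of $\mathbf r'$ is not ruled out by monotonicity of $\mathbf y$; it is instead dispatched because $\mathbf x$ then also lies below $B$ in that coordinate, so $M(\mathbf x)=0$ and the left side vanishes. The same two-sided check is needed in your degenerate case ($R'=0$): $\mathbf x+\vec\delta\notin B$ does not directly give $M(\mathbf x+2\vec\delta)=0$ — if it fails the lower bound you instead get $M(\mathbf x)=0$ — but in either sub-case $M(\mathbf x\to\mathbf x+2\vec\delta)=0$, so the conclusion stands. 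With these two small refinements made explicit, the argument is complete.
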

\begin{proof}
The main idea is that, we can push those edges of $I_R$ that are outside of $B$ to edges of $I_M$. See Figure~\ref{fig:opt_rectangle_in_bounding_rectangle} on the left as an illustration.
Let $C= \Delta_R\cap \Delta_M$ and $\bar{C}$ be its complement. 
We have $\di(M, R)=\max\{\di(M|_{\bar{C}}, 0), \di(R|_{\bar{C}}, 0), \di(M|_C, R|_C)\}$. 
From Lemma~\ref{lemma:dH_iff}, we have $\di(M|_C, R|_C)=\min\{(\di(M|_C, 0) \wedge \di(R|_C, 0)), \, d_H(M|_C, R|_C)\}$. Therefore, we have
\begin{equation}
    \di(M, R)=\min\{\di(M, 0)\vee\di(R, 0), \quad \di(M|_{\bar{C}}, 0)\vee\di(R|_{\bar{C}}, 0)\vee\dH(M|_{C}, R|_{C})\}.
\end{equation}
If $\di(M, R)=\di(M, 0)\vee\di(R, 0)$, we can just choose $R'=0$. 

Assume $\di(M, R)=\di(M|_{\bar{C}}, 0)\vee\di(R|_{\bar{C}}, 0)\vee\dH(M|_{C}, R|_{C})$. After pushing an edge towards the corresponding edge of $B$, 
we get a rectangle module $R'$ with smaller $I_{R'}\subseteq I_R$. For example, if the top edge $[\p, \s]$ of $I_R$ is an edge above the top edge of $B$, then we push it onto the top edge of the bounding rectangle to get edge $[\p', \s']$. 
Let $C'=\Delta_{R'}\cap \Delta_M$. 
Let $\Delta_{\p}, \Delta_{\q}$ be the diagonal lines passing through $\p, \q$ respectively. Let $\p'=L(I_M)\cap \Delta_{\p}$, $\q'=L(I_M)\cap \Delta_{\q}$ and $\s'=U(I_M)\cap \Delta_{\p}$.  See Figure~\ref{fig:opt_rectangle_in_bounding_rectangle} on the right for an illustration.
Then we have $\di(M|_{\bar{C}'}, 0)\geq \di(M|_{\bar{C}}, 0)$, $\di(R|_{\bar{C}'}, 0)\leq \di(R|_{\bar{C}}, 0)$, $\dH(M|_{C'}, R|_{C'})\leq \dH(M|_{C}, R|_{C})$. 
So, the only term possibly becomes larger is $\di(M|_{\bar{C}'}, 0)$. However, that means
$\di(M|_{\bar{C}'}, 0)=\di(M|_{\bar{C'}\setminus\bar{C}})$ is obtained over the region $\bar{C'}\setminus\bar{C}$. Note that the largest possible increase in this region is realized by
a trapezoid on vertices $\p', \q', \q, \s'$. That means, $\p', \q'$ are on the same vertical line and $\s', \q'$ are on the same horizontal line. See Figure~\ref{fig:opt_rectangle_in_bounding_rectangle} on the right for an illustration. 
However, one can check that $\di(M|_{\bar{C'}\setminus\bar{C}})\leq \dl(\p', I_R)$, which implies $\di(M|_{\bar{C'}\setminus\bar{C}})\leq d_H(M|_C, R|_C)$. Therefore, we have $\di(M|_{\bar{C}'}, 0)\leq d_H(M|_C, R|_C)$, which implies $\di(M|_{\bar{C}'}, 0)\leq \di(M|_{\bar{C}}, 0)$. 
\end{proof}

\begin{theoremof}{\ref{prop:properties_box}}
The module $\boxed{M}$ obtained by {\sf Construction 1} satisfies the following properties:
\begin{enumerate}
    \item\label{property1} $\di(M, \boxed{M})\leq \epsilon_M$. 
    \item If $\di(M, 0)\geq \epsilon_M$, then $\di(M, \boxed{M})=\epsilon_M$.
    \item If both $I_M^{\top}$ and $I_M^{\bot}$ are large enough to contain an $2\epsilon_M\times2\epsilon_M$ square, 
    then $\boxed{M}$ is an optimal rectangle module approximating $M$ with respect to $\di$.
\end{enumerate}
\end{theoremof}
\begin{proof}
For $R=\boxed{M}$, we can check that $\dH(M, R)=\epsilon_M$.
\begin{align*}
\dH(M, R)=&\inf_{\delta\geq 0}\{I_M\subseteq {I_R}^{(+\delta)} \And I_R\subseteq I_M^{(+\delta)}  \}\\
=& \inf_{\delta\geq 0} I_M\subseteq {I_R}^{(+\delta)} \vee \inf_{\delta\geq 0} I_R\subseteq I_M^{(+\delta)} \\
=& (\|\rr- \rr'\|_\infty \vee \|\s- \s'\|_\infty )\vee ( \|\rr- \rr''\|_\infty \vee \|\s- \s''\|_\infty)\\
=&\epsilon_M.
\end{align*}
Then immediately Property~\ref{property1} follows from Proposition~\ref{prop:di_leq_dh}, and Property~2 follows from Proposition~\ref{thm:form_di}.



Property 3: By Property 2, we have $\epsilon_M=\di(M, \boxed{M})$.
Let $R^*=[\rr^*, \s^*]$ be an optimal rectangle module with top-left corner and bottom-right corner being $\p^*$ and $\q^*$ respectively.  By Proposition~\ref{prop:opt_rectangle_in_bounding_rectangle}, $R^*$ is contained in the bounding rectangle $[\rr', \s']$ of $I_M$. Let $\p'$ be the top-left corner of $I_M$ and $\q'$ be the bottom-right corner of $I_M$.
By Proposition~\ref{thm:form_di}, we have $\di({M|_{\bar{\Delta}_{R^*}}}, 0)\leq \di(M, R^*)\leq \epsilon_M$ where  $\bar{\Delta}_{R^*}=\Real^2\setminus\Delta_{R^*}$.
By the assumption, 
one can see that $\p^*$ must be in the $\epsilon_M$-box of $\p'$, that is $\p^*\in \square_{(+\epsilon_M)}(\p')$, and similarly, $\q^*\in \square_{(+\epsilon_M)}(\q')$, which ensure that $\di({M|_{\bar{\Delta}_{R^*}}}, 0)\leq \epsilon_M$. From these facts, we have that $\di({M|_{\Delta_{R^*}}}, 0)\geq\epsilon_M$ and also $\rr^*\in \square_{(+\epsilon_M)}(\rr')$ and $\s^*\in \square_{(+\epsilon_M)}(\s')$ (see Figure~\ref{fig:construction1} for an illustration). Now observe that $\dl(\rr^*, L(I_M))\geq \|\rr-\rr''\|_\infty$ and $\dl(\s^*, U(I_M))\geq \|\s-\s''\|_\infty$. 
This means $\dH(M|_{\Delta_{R^*}}, R^*|_{\Delta_{R^*}})\geq \|\rr-\rr''\|_\infty \vee \|\s-\s''\|_\infty = \epsilon_M$. By 
Proposition~\ref{thm:form_di}, we have $\di(M, R^*)
'\geq\epsilon_M$. With $R^*$ being optimal, $\di(M, R^*)\leq \epsilon_M$, which implies $\di(M, R^*)=\di(M, \boxed{M})$. 
\end{proof}

\section{Missing proofs in Section~\ref{sec:general_approx}}\label{sec:missing_proofs2}
Proof details of $\tdi^{\emptyset}=\di$ in Proposition~\ref{prop:tdi=di}.
\begin{proof}
Note that $\di\leq \tdi^{\emptyset}$ by definition. We want to show $\tdi^{\emptyset}\leq \di$.

Assume $M, N$ are $\delta-$interleaved with interleaving morphisms $\phi:M\rightarrow N_{\rightarrow \vec{\delta}}$ and $\psi:N\rightarrow M_{\rightarrow \vec{\delta}}$.  Recall that means $\forall\x\leq \y\in \Real^2$,
    \begin{align}
        &\psi(\x+\vec{\delta})\circ\phi(\x)=M(\x\rightarrow \x+2\vec{\delta}) \label{eq:delta_interleaved_property1}\\
        &\phi(\x+\vec{\delta})\circ\psi(\x)=N(\x\rightarrow \x+2\vec{\delta})\label{eq:delta_interleaved_property2}\\  
        &N(\x+\vec{\delta}\rightarrow \y+\vec{\delta})\circ\phi(\x)= \phi(\y)\circ M(\x\rightarrow\y) \label{eq:delta_interleaved_property3}\\ 
        &M(\x+\vec{\delta}\rightarrow \y+\vec{\delta})\circ\psi(\x)= \psi(\y)\circ N(\x\rightarrow\y) \label{eq:delta_interleaved_property4}
    \end{align}
%
We claim $M^{\aaa}, N^{\aaa}$ are $\delta-$interleaved by $\delta$-interleaving $\phi^{\aaa}, \psi^{\aaa}$ between $M^{\aaa}, N^{\aaa}$ constructed as follows: 
    
    \begin{flalign*}
        \phi^{\aaa}(\x):=&N(\aaa\odot\x+\vec{\delta}\rightarrow \aaa\odot\x+\aaa\odot\vec{\delta})\circ\phi(\aaa\odot\x) \\
        \psi^{\aaa}(\x):=&M(\aaa\odot\x+\vec{\delta}\rightarrow \aaa\odot\x+\aaa\odot\vec{\delta})\circ\psi(\aaa\odot\x)
    \end{flalign*}
    Note that $\aaa$ satisfies $a_1\wedge a_2=1$. So, $\aaa\odot\x+\vec{\delta}\leq \aaa\odot\x+\aaa\odot\vec{\delta}$. Therefore, all the arrows $\rightarrow$ above are well-defined.
    The construction for $\phi^{\aaa}, \psi^{\aaa}$ are symmetric. We give the proof of the interleaving property as following two equations for one side. The other side is symmetric. 
    \begin{equation}\label{eq:interleaving_property1}
        \psi^\aaa(\x+\vec{\delta})\circ\phi^\aaa(\x)=M^\aaa( \x\rightarrow \x+2\vec{\delta})
    \end{equation}
    \begin{equation}\label{eq:interleaving_property2}
        N^{\aaa}(\x+\vec{\delta}\rightarrow\y+\vec{\delta})\circ\phi^{\aaa}(\x)=\phi^{\aaa}(\y)\circ M^{\aaa}(\x\rightarrow\y)
    \end{equation}
    


    
    First check that $\phi^\aaa$ is a collection of maps from $M^\aaa$ to $N^\aaa_{\rightarrow\vec{\delta}}$ from the following diagram (note that $N^\aaa_{\rightarrow \vec{\delta}}(\x)=N^\aaa(\x+\vec{\delta})=N(\aaa\odot\x + \aaa\odot \vec{\delta})$).

\begin{tikzcd}
    M(\mathbf{a}\odot \mathbf{x})=M^\aaa(\x) \arrow[rd, "\phi(\aaa\odot \x)"'] \arrow[rrrrd, "\phi^\aaa(\x)"] &                                                                                      &  &  &                                                                \\
                                                                                                  & N(\aaa\odot \x+{\vec{\delta}}) \arrow[rrr, "N(\aaa\odot \x+{\vec{\delta}} \rightarrow \aaa\odot \x+ \aaa\odot{\vec{\delta}})"'] &  &  & N(\aaa\odot \x+\aaa\odot{\vec{\delta}} )=N^\aaa_{\rightarrow {\vec{\delta}}}(\x)
\end{tikzcd}

Similarly,  $\psi^\aaa$ is a collection of maps from $N^\aaa$ to $M^\aaa_{\rightarrow\vec{\delta}}$.


    To show Equation~\ref{eq:interleaving_property1}, check that
    \begin{align*}
        \psi^\aaa(\x+\vec{\delta})\circ\phi^\aaa(\x)=&M(\aaa\odot\x+\aaa\odot\vec{\delta} +\vec{\delta}\rightarrow \aaa\odot\x+\aaa\odot\vec{\delta}+\aaa\odot\vec{\delta})\\
        &\circ\psi(\aaa\odot\x+\aaa\odot\vec{\delta})\circ N(\aaa\odot\x+\vec{\delta}\rightarrow \aaa\odot\x+\aaa\odot\vec{\delta})\circ\phi(\aaa\odot\x)\\
        =&M(\aaa\odot\x+\aaa\odot\vec{\delta} +\vec{\delta}\rightarrow \aaa\odot\x+\aaa\odot\vec{\delta}+\aaa\odot\vec{\delta})\\
        \circ&N(\aaa\odot\x+\vec{\delta}\rightarrow \aaa\odot\x+\aaa\odot\vec{\delta})\circ\psi(\aaa\odot\x+\aaa\odot\vec{\delta})
        \circ\phi(\aaa\odot\x).
    \end{align*}
    Note that by commutative property of $\psi$, the middle two terms can be changed as follows:
    \begin{equation*}
        \psi(\aaa\odot\x+\aaa\odot\vec{\delta})\circ N(\aaa\odot\x+\vec{\delta}\rightarrow \aaa\odot\x+\aaa\odot\vec{\delta})=
        M(\aaa\odot\x+\vec{\delta}+\vec{\delta}\rightarrow \aaa\odot\x+\aaa\odot\vec{\delta}+\vec{\delta})\circ\psi(\aaa\odot\x+\vec{\delta}).
    \end{equation*}
    Based on this equation, we have
    \begin{align*}
        \psi^\aaa(\x+\vec{\delta})\circ\phi^\aaa(\x)=&M(\aaa\odot\x+\aaa\odot\vec{\delta} +\vec{\delta}\rightarrow \aaa\odot\x+\aaa\odot\vec{\delta}+\aaa\odot\vec{\delta})\\
        \circ&M(\aaa\odot\x+\vec{\delta}+\vec{\delta}\rightarrow \aaa\odot\x+\aaa\odot\vec{\delta}+\vec{\delta})\circ\psi(\aaa\odot\x+\vec{\delta})
        \circ\phi(\aaa\odot\x)\\
        =&M(\aaa\odot\x+2\vec{\delta}\rightarrow \aaa\odot(\x+2\vec{\delta}))\circ
        M(\aaa\odot\x\rightarrow\aaa\odot\x+2\vec{\delta})\\
        =&M(\aaa\odot\x\rightarrow\aaa\odot(\x+2\vec{\delta}))\\
        =&M^\aaa(\x\rightarrow\x+2\vec{\delta}).
    \end{align*}

To verify Equation~\ref{eq:interleaving_property1}, check that:
\begin{align*}
&N^{\aaa}(\x+\vec{\delta}\rightarrow\y+\vec{\delta})\circ\phi^{\aaa}(\x)\\
=&N(\aaa\odot\x+\aaa\odot\vec{\delta}\rightarrow \aaa\odot\y+\aaa\odot\vec{\delta})\circ N(\aaa\odot\x+\vec{\delta}\rightarrow \aaa\odot\x+\aaa\odot\vec{\delta})\circ\phi(\aaa\odot\x)\\    
=&N(\aaa\odot\x+\vec{\delta}\rightarrow \aaa\odot\y+\aaa\odot\vec{\delta})\circ\phi(\aaa\odot\x)\\
=&N(\aaa\odot\y+\vec{\delta}\rightarrow \aaa\odot\y+\aaa\odot\vec{\delta})\circ N(\aaa\odot\x+\vec{\delta}\rightarrow \aaa\odot\y+\vec{\delta})\circ\phi(\aaa\odot\x)\\
=&N(\aaa\odot\y+\vec{\delta}\rightarrow \aaa\odot\y+\aaa\odot\vec{\delta})\circ 
\phi(\aaa\odot\y)\circ M(\aaa\odot\x\rightarrow\aaa\odot\y)\\
=&\phi^{\aaa}(\y)\circ M^{\aaa}(\x\rightarrow\y).
\end{align*}
\end{proof}

\begin{propositionof}{\ref{prop:linearly_orderd_interval_decomp}}
If $P$ is a presentation of $M$ such that neither $\gr(\col P)$ nor $\gr(\row P)$ has incomparable pairs, then $M$ is interval decomposable. 
\end{propositionof}

\begin{proof}
This is a special case which can be handled by the algorithm in~\cite{DeyCheng19} to compute a diagonalization of the presentation matrix $P$. 

The basic idea is as follows. Order the columns and rows in ascending order. One can first run the traditional persistence algorithm from left to right on columns. 
The result is an equivalent matrix $P'\sim P$
such that the lowest indices of nonzero entries (if exist) of all columns are distinct. After that, apply the traditional persistence algorithm to $P'$ from bottom to top on rows. The resulting matrix $Q\sim P$ is a diagonal matrix and the persistence module built on $Q$ is isomorphic to $P$. However, like the fact that a linear map represented by a diagonal matrix can be decomposed as a direct sum of linear maps on one dimensional spaces, 
a persistence module built on a finite presentation with diagonal matrix can be decomposed as a direct sum of indecomposable modules each with a single generator. Such indecomposable
modules have to be interval modules. 
\end{proof}

\begin{propositionof}{\ref{lm:presentation_on_band}}
Given a presentation of $M$ with finite presentation $P$, for any convex set $C\subseteq\Real^2$,
the finite presentation of $M|_C$ is the same matrix as $P$'s with grades $\mathbf{u}$ of rows or columns replaced by $\push_C(\mathbf{u})\triangleq\min\{\ww\in C\mid\ww\geq \uu\}$.
\end{propositionof}
\begin{proof}
Let $Q$ be the finite presentation of $M|_C$ as described in the statement.
Observe that for each $\uu\in C$, $\push_C(\gr(\row_i))\leq \mathbf{u}\}\iff \gr(\row_i)\leq\uu$. Therefore, by the construction of $M^P \mbox{ and }M^{Q}$, $\forall \uu\in C, \gen^{Q}(\mathbf{u})=\gen^{P}(\uu)$. Similarly, 
$\forall \uu\in C, \rel^{Q}(\mathbf{u})=\rel^{P}(\uu)$.
Therefore, $M^Q(\mathbf{u})=\spanning(\gen^{Q}(\mathbf{u}))/\spanning(\rel^{Q}(\mathbf{u}))=\spanning(\gen^{P}(\mathbf{u}))/\spanning(\rel^{P}(\mathbf{u}))=M^P(\mathbf{u})$ for each $\mathbf{u}\in C$ and the induced linear maps are the same. That means, $Q$ defined in the statement is indeed a finite presentation of $M|_C$.
\end{proof}



\begin{figure}[htbp]
\centerline{\includegraphics[width=0.7\textwidth]{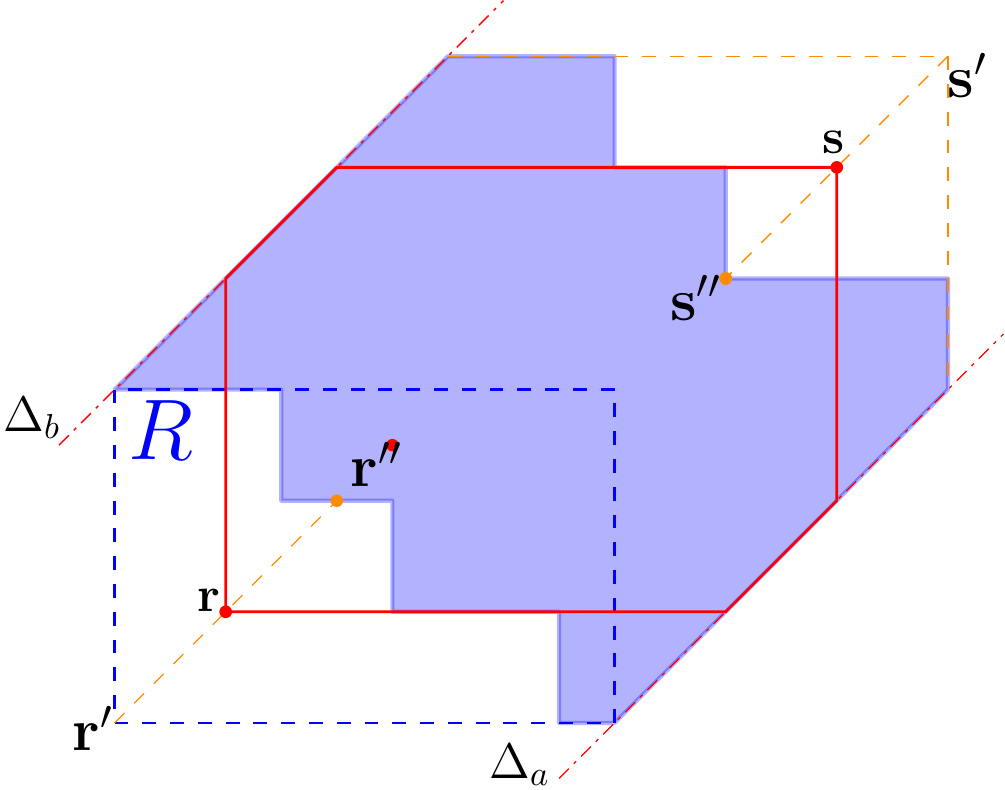}}
\caption{Blue region is $I_M$. The blue dashed rectangle $R$ is the minimal bounding rectangle over $L(I_M)$.  $\dl(\rr', U(R))\leq \frac{1}{4}\epsilon_C$ since this rectangle $R$ is anchored by the diagonal lines $\Delta_a$ and $\Delta_b$.}
\label{fig:epsilon_bound_over_boxed_M}
\end{figure}

\begin{proposition}\label{prop:subdivide_bound}
Let $C=\Delta_{[a,b]}\subset \Real^2$ be a band. Let $\epsilon_C=b-a$. For any interval $C$-modules $M$, $\epsilon_M^*\leq \frac{1}{4}\epsilon_C$.
\end{proposition}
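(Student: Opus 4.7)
The plan is to reduce the proposition to \textsf{Construction 1} (Definition~\ref{def:construction1}) and then carry out one geometric estimate. Since $\boxed{M}^*$ is the $\di$-optimal rectangle approximation of the interval module $M$, and since for any interval $M$ and any rectangle $R$ the one-to-one matching yields $\db(M,R)\leq \di(M,R)$, we get
\begin{equation*}
\epsilon_M^* \;=\; \db(M,\boxed{M}^*) \;\leq\; \di(M,\boxed{M}^*) \;\leq\; \di(M,\boxed{M}) \;\leq\; \epsilon_M,
\end{equation*}
the last inequality being Theorem~\ref{prop:properties_box}(1). So it suffices to prove $\epsilon_M \leq \epsilon_C/4$.

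Unpacking \textsf{Construction 1}, $\epsilon_M = \tfrac12\max\{\|\rr'-\rr''\|_\infty,\,\|\s'-\s''\|_\infty\}$, where $[\rr',\s']$ is the bounding rectangle of $I_M$, $\rr''=\pi_{L(I_M)}(\rr')$, and $\s''=\pi_{U(I_M)}(\s')$. By the symmetry between the lower and upper boundaries, it is enough to bound $\|\rr'-\rr''\|_\infty \leq \epsilon_C/2$.

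The key geometric step: let $(x_L, y_L)\in \bar{I_M}$ be a top leftmost point and $(x_B, y_B)\in \bar{I_M}$ a right bottommost point, so that $\rr' = (x_L, y_B)$ with $x_L\leq x_B$ and $y_B\leq y_L$. The containment $I_M\subseteq C$ forces both diagonal intercepts to lie in $[a,b]$, giving $y_L-x_L\leq b$ and $y_B-x_B\geq a$; subtracting yields the band-width inequality
\begin{equation*}
(x_B-x_L) + (y_L-y_B) \;\leq\; b-a \;=\; \epsilon_C.
\end{equation*}
Now $\rr'' = \rr'+(t,t)\in L(I_M)$ lies on the monotone lower staircase connecting these two extremal points, so its coordinates satisfy $x_L+t\leq x_B$ and $y_B+t\leq y_L$. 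Therefore
\begin{equation*}
\|\rr'-\rr''\|_\infty \;=\; t \;\leq\; \min(x_B-x_L,\,y_L-y_B) \;\leq\; \tfrac12\bigl[(x_B-x_L)+(y_L-y_B)\bigr] \;\leq\; \tfrac{\epsilon_C}{2}.
\end{equation*}
The analogous argument on $U(I_M)$ and $\s'$ gives $\|\s'-\s''\|_\infty\leq \epsilon_C/2$, hence $\epsilon_M\leq \epsilon_C/4$.

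The main obstacle I foresee is bookkeeping rather than conceptual difficulty: one must check that $\Delta_{\rr'}\cap I_M \neq \emptyset$ so that $\rr''$ is finite, which follows because the same band inequality forces the intercept $y_B-x_L$ of $\rr'$ itself into $[a,b]$ and then the poset-convexity and zig-zag connectivity of $I_M$ between the leftmost and bottommost extremal points forces the diagonal $\Delta_{\rr'}$ to cross $\bar{I_M}$; verifying that $L(I_M)$ really is the monotone staircase connecting the top-leftmost to the right-bottommost corner for any $2$-parameter interval; and handling unbounded cases when $I_M$ reaches $\pm\infty$ in $\RB^2$, which can be absorbed using the paper's convention $\infty-\infty=0$ (or a standard limiting argument on nested truncations).
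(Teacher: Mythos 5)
Your proof is correct and follows essentially the same route as the paper: reduce $\epsilon_M^*$ to $\epsilon_M$ via \textsf{Construction 1}, then bound $\dl(\rr',L(I_M))$ (and symmetrically $\dl(\s',U(I_M))$) by half the band-width using the diagonal intercepts of the top-left and bottom-right extremal points of $I_M$. Your write-up is in fact more careful than the paper's, which contains a $\min$-for-$\max$ slip in restating $\epsilon_M$ and states the intermediate bound as $\dl(\rr', U(R))\leq \tfrac{1}{4}\epsilon_C$ rather than the $\tfrac{1}{2}\epsilon_C$ that the band-width argument actually yields (the extra factor $\tfrac12$ then comes, as you note, from the averaging $\rr=\tfrac12(\rr'+\rr'')$ in \textsf{Construction 1}).
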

\begin{proof}
Recall that $\epsilon_M^*=\di(M, \boxed{M}^*)\leq \di(M, \boxed{M})=\epsilon_M$.
By definition~\ref{def:construction1}, $\epsilon_M=\frac{1}{2}\min\{\dl(\rr', L(I_M)), \dl(\s', U(I_M))\}$ where $[\rr', \s']$ is the smallest bounding rectangle of $I_M$. 
Consider the minimal bounding rectangle $R$ of $L(I_M)$. 
Observe that the lower left corner of this rectangle coincides with $\rr'$ and $\dl(\rr', L(I_M))\leq \dl(\rr', U(R))$. See Figure~\ref{fig:epsilon_bound_over_boxed_M} . Also it is not hard to see that $\dl(\rr', U(R))\leq \frac{1}{4}\epsilon_C$. 
In a similar way, we can also get $\dl(\s', U(I_M))\leq \frac{1}{4}\epsilon_C$. 
\end{proof}

\end{document}